%% file: pr26.tex
\documentclass[10pt,a4paper]{article}
\usepackage{graphicx}

\title{On the conservation of specific energy and entropy \\ in infinite anharmonic systems}

\date{\today}

\usepackage[margin=3cm]{geometry}

\usepackage[T1]{fontenc}
\usepackage{microtype}
\usepackage{amsmath,amsfonts,amssymb,amsthm,mathtools,bbm}
\usepackage{braket}

\usepackage{pgfplots}
\pgfplotsset{compat=1.18}

\usepackage[shortlabels]{enumitem}
\usepackage{nameref}
\makeatletter
\def\namedlabel#1#2{\begingroup
   \def\@currentlabel{#2}%
   \label{#1}\endgroup
}
\makeatother

\usepackage{booktabs}
\usepackage[font=small,margin=0.1\linewidth]{caption}

\usepackage{graphicx,xcolor}
\usepackage[colorlinks=true,hyperindex=true]{hyperref}
\usepackage{nameref}

\definecolor{newdarkblue}{RGB}{24,64,184}
\definecolor{newgreen}{RGB}{80,170,55}
\definecolor{newdarkbluec}{RGB}{0,0,55}
\definecolor{newgreenc}{RGB}{0,55,0}
\hypersetup{
    colorlinks,%
    citecolor=newdarkblue,%
    filecolor=red,%
    linkcolor=newdarkblue,%
    urlcolor=newgreen,%
    pdfnewwindow=true,%
    pdfstartview={FitH}
}

\numberwithin{equation}{section}

\newtheorem{theorem}{Theorem}[section]
    \newtheorem{proposition}[theorem]{Proposition}
    \newtheorem{lemma}[theorem]{Lemma}
    \newtheorem{corollary}[theorem]{Corollary}

\theoremstyle{definition}
    \newtheorem{definition}[theorem]{Definition}
    \newtheorem{remark}[theorem]{Remark}
    \newtheorem{example}[theorem]{Example}

\def\rr{{\mathbf R}}
\def\zz{{\mathbf Z}}
\def\nn{{\mathbf N}}

\newcommand{\bq}{\mathbf{q}}
\newcommand{\bp}{\mathbf{p}}

	\makeatletter
	\providecommand*{\diff}%
	{\@ifnextchar^{\DIfF}{\DIfF^{}}}
	\def\DIfF^#1{%
	\mathop{\mathrm{\mathstrut d}}%
	\nolimits^{#1}\gobblespace}
	\def\gobblespace{%
	\futurelet\diffarg\opspace}
	\def\opspace{%
	\let\DiffSpace\!%
	\ifx\diffarg(%
	\let\DiffSpace\relax
	\else
	\ifx\diffarg%
	\let\DiffSpace\relax
	\else
	\ifx\diffarg\{%
	\let\DiffSpace\relax
	\fi\fi\fi\DiffSpace}
\newcommand{\dd}{\diff}

\newcommand{\Exp}[1]{\mathrm{e}^{#1}}
\renewcommand{\complement}{\mathsf{c}}
\renewcommand{\Set}{\Delta}
\newcommand{\irange}{D}
\newcommand{\amu}{\bar \mu}

\DeclareMathOperator{\vol}{vol}

\newcommand{\ellip}{\textnormal{[...]}}

\newcommand{\nien}{E_0^{\textnormal{n.i.}}}
\newcommand{\nieni}{E_i^{\textnormal{n.i.}}}
\newcommand{\nienj}{E_j^{\textnormal{n.i.}}}

\newcommand{\nifn}{F_0^{\textnormal{n.i.}}}

\newcommand{\res}[1]{\upharpoonright_{#1}}

\makeatletter
\def\namedlabel#1#2{\begingroup
   \def\@currentlabel{#2}%
   \label{#1}\endgroup
}
\makeatother

\begin{document}

%\maketitle
\author{Gaia Pozzoli\textsuperscript{a} and Renaud Raqu\'epas\textsuperscript{b}}
\date{}

\maketitle

\begin{center}
  \begin{minipage}[b]{0.45\textwidth}
    \small
    \centering
    a. Universit\`a degli Studi di Milano-Bicocca\\
    Dipartimento di Matematica e Applicazioni\\
    via R. Cozzi 55, 20125 Milano, Italy
  \end{minipage} 
  \hfill
  \begin{minipage}[b]{0.45\textwidth}
    \small
    \centering
    b. Duke University \\
    Department of Mathematics \\
    Durham NC, United States
  \end{minipage}
\end{center}

\begin{abstract}
    We work with infinite, closed, translation-invariant, finite-range lattice systems with ``unbounded classical spins'', also known as anharmonic crystals, under assumptions close to those used by Lanford, Lebowitz and Lieb (\emph{J.\ Stat.\ Phys.}, 1977)% to lay the foundations of existence and uniqueness for time evolution
    ; among other conditions, the pinning dominates the interaction. In this context, we prove conservation of the specific energy and specific entropy under the time evolution, and we discuss their relation to approach to thermal equilibrium, paralleling known results in the theory of quantum spin systems, where noncommutativity, as opposed to lack of compactness, is the main source of difficulties. 

    \medskip

    \noindent\textit{Keywords} \quad unbounded spins, anharmonic crystal, dynamics, entropy, energy, weak Gibbs

    \noindent\textit{MSC2020} \quad 
    82B03, % Foundations of equilibrium statistical mechanics
    82C20, % Dynamic lattice systems
    70H33, % Symmetries and conservation laws [...] for problems in Hamiltonian and Lagrangian mechanics
    37K60 %Lattice dynamics;
    %37K99 % None of the above, but in "Dynamical system aspects of infinite-dimensional Hamiltonian and Lagrangian systems"
\end{abstract}

\section{Introduction}

The results of this note are motivated by a desire to better understand the role of conservation laws in what is sometimes called the Gibbs postulate of approach to equilibrium~\cite{DS78}. Although this work fits into a program initiated by our colleagues in~\cite{JPT24a,JPSTp} 
related to some questions posed by Ruelle~\cite{R67}, we summarize the context for the reader's convenience.

In the context of closed, translation-invariant lattice systems\,---\,which is the setup with which we will be working\,---, this postulate can be formulated as follows: 
Barring the presence of additional integrals of motion, one expects that, as far as the statistics of local observables are concerned, the time evolution of a reasonable translation-invariant initial state rapidly approaches a translation-invariant \emph{thermal equilibrium} state compatible with the conservation of energy.
Rigorously proving a statement of this nature is, even in toy models, a notoriously hard problem in mathematical physics. 
Here, ``thermal equilibrium'' can be understood in terms of the energy-constrained entropy maximization problem\footnote{Both entropy and energy should be understood per unit volume.}, or equivalently in terms of the Gibbs variational principle at a certain inverse temperature, leaving aside for the moment the relation to Bowen--Gibbs states, weak Gibbs states in the sense of Yuri and Gibbs states in the sense of the Dobrushin--Lanford--Ruelle equations. In a slightly different context, this characterization in terms of entropy\,---\,which is always conserved in finite-dimensional Hamiltonian systems by Liouville's theorem\,---\,led Ruelle to ask whether the evolution of a closed infinite system should increase its entropy per unit volume, or keep constant entropy per unit volume at every finite time, but converge to a limit with strictly larger entropy per unit volume~\cite[\S{13.iv}]{R67}. We recall that such a jump of entropy is possible because entropies per unit volume tend to only be upper semicontinuous. This echoes early comments from Gibbs himself on the ``index of probability'', whose ``average'' is minus the entropy: 
\begin{quote}
    {\em

    Let us next consider whether an ensemble of isolated  systems has any tendency in the course of time toward a  state of statistical equilibrium. 

    There are certain functions of phase which are constant in time. The distribution of the ensemble with respect to the values of these functions is necessarily invariable \ellip. The distribution in phase which without violating this condition gives the least value of the average index of probability of phase is \ellip{}
    that in which the index of probability is a function of the functions mentioned. \ellip
    
    It would seem, therefore, that we might find a sort of measure of the deviation of an ensemble from statistical equilibrium in the excess of the average index above the minimum which is consistent with the condition of the invariability of the distribution with respect to the constant functions of phase. But we have seen that the index of probability is constant in time for each system of the ensemble. The average index is therefore constant, and we find by this method no approach toward statistical equilibrium in the course of time. 
    
    Yet we must here exercise great caution. One function may approach indefinitely near to another function, while some quantity determined by the first does not approach the corresponding quantity determined by the second.
    }
    \hfill\cite[Ch.\,XII]{Gib}
\end{quote}

In the context of quantum spin systems, it was shown one year after Ruelle's article that, under reasonable assumptions, entropy per unit volume is conserved by the finite-time evolution~\cite{LR68}; also see~\cite{LR72}. One of our central results is that such a result also holds in the context of translation-invariant classical lattice systems, even though different difficulties arise due to the unbounded one-body phase space. Our other central result concerns the conservation of energy per unit volume that is implicitly part of the above formulation of the Gibbs postulate.

Such results will serve as a building block for a research program on the approach to equilibrium in translation-invariant, classical systems, paralleling that initiated in~\cite{JPT24a,JPSTp} 
for quantum spin systems. These works provide structural results relating the property of approach to equilibrium to instantaneous and short-time dynamical properties (thermodynamic formalism, regularity, entropy balance, conservation laws, etc.).
More generally, a motivation for our work is to renew interest in mathematical questions that, within the context of classical, closed, translation-invariant lattice Hamiltonian systems, combine dynamical and thermodynamic considerations; see e.g.~\cite{MPPS79}. A likely explanation for the relative scarcity of results in this direction is that pillars of the literature on classical statistical mechanics such as~\cite{Rue,Isr,Sim,vEFS93,P02} are predominantly concerned with discrete-spin or bounded-spin models, where questions of Hamiltonian dynamics are often less natural; notable exceptions include~\cite{R76,LP76,P80,K81,Geo}.

\paragraph*{Organization of the article} We settle various preliminary questions in Section~\ref{sec:prelim}: space of configurations, assumptions on the interactions, elements of thermodynamic formalism (with references to Appendix~\ref{app:ent} for technicalities related to the failure of the reference volume measure on the one-particle phase space to be normalizable), and existence and uniqueness of the time evolution when starting with configurations whose local energy does not grow too fast across the lattice (a variant of results from~\cite{LLL77,MPPS79}). Sections~\ref{sec:energy} and~\ref{sec:entropy} contain our results on conservation of specific quantities under the time evolution: the former establishes conservation of specific energy; the latter, conservation of the specific entropy. {A discussion of equilibrium in the sense of the variational principle, as well as of the problem of the approach to thermal equilibrium, is given in Section~\ref{sec:DES}.} 

\paragraph*{Acknowledgements} GP was partially funded by the CY Initiative of Excellence through the grant Investissements d'Avenir ANR-16-IDEX-0008 while GP was a post-doctoral researcher at CY Cergy Paris Universit\'e. This research is part of GP's activity within the Group ``DinAmicI'' (www.dinamici.org). RR was partially funded by the Fonds de recherche du Qu\'ebec (section Nature et technologies). Both authors would like to thank V.~Jak\v{s}i\'{c} and C.-A.~Pillet for instrumental discussions on the content of their work~\cite{JPT24a,JPSTp}. RR would like to thank A.C.D.~van Enter for insightful exchanges regarding the literature on Gibbsianity  and systems with unbounded spins.

\section{Preliminaries}
\label{sec:prelim}

\subsection{State space and interactions}
\label{ssec:spaces}

We consider the lattice $\zz^\nu$ with $\nu \in \nn$ and the state space
\begin{align*}
    \Omega := \bigtimes_{i \in \zz^\nu} T^*X_i
\end{align*}
where each $X_i$ is a copy of a common Euclidean space or torus (the configuration space of an individual oscillator or rotator, respectively) and $T^*X_i$ is its cotangent bundle. The set $\Omega$ is equipped with the product topology~$\mathcal{T}$ and product $\sigma$-algebra~$\mathcal{F}$.\footnote{Because $T^*X_i$ is a separable metric space and $\zz^\nu$ is countable, it is a basic fact that the Borel $\sigma$-algebra~$\mathcal{F}$ for the product topology coincides with the product $\sigma$-algebra built from the Borel $\sigma$-algebras on each space.} We use $(\bq,\bp) = (q_i,p_i)_{i\in\zz^\nu}$ for the canonical coordinates on~$\Omega$. 
The lattice $\zz^\nu$ acts on itself by translation: $(i,j) \mapsto \mathsf{T}_i(j) := j+i$. We then use $\mathsf{T}^*_i : \Omega \to \Omega$  for the map defined by $(\mathsf{T}^*_i(\bq,\bp))_j = (q_{j+i},p_{j+i})$. When the coordinate $p_j$ or $q_j$ is viewed as a function on $\Omega$, the relevant translation relation is $p_j \circ \mathsf{T}^*_i = p_{j-i}$ or $q_j \circ \mathsf{T}^*_i = q_{j-i}$.

A \emph{family of interactions} is an assignment of a measurable function 
\(
    W_{\Set} : \bigtimes_{i\in {\Set}} X_i \to \rr
\)
to each finite subset~${\Set} \Subset \zz^\nu$. Each such function can be extended in a natural way to a measurable function on~$\Omega$. 
Intuitively, we want to think of the family of interactions as providing a formal Hamiltonian 
\[ 
    H = \sum_{i \in \zz^\nu} K_{\{i\}} + \sum_{\Set \Subset \zz^\nu} W_\Set,
\]
where $K_{\{i\}} : (\bq,\bp) \mapsto \tfrac 12 |p_i|^2$ is the kinetic energy of the particle at site $i$.

Based on a family of interactions, we introduce a convenient way to keep track of distances between elements of the underlying lattice~$\zz^\nu$.

\begin{definition}[Interaction distances]
    First, $\gamma(i,i) = 0$ and $\gamma(i,j) = 1$ if $i \neq j$ and there exists $\{i,j\} \subseteq {\Set} \Subset \zz^\nu$ such that $W_{\Set} \not\equiv 0$. Then, inductively, $\gamma(i,j) = k+1$ if $\gamma(i,j) > k$ and there exists $j'$ such that $\gamma(i,j') = k$ and $\gamma(j',j) = 1$. This extends as usual to a notion of distance between a point and a set:
    $
        \gamma(i,{\Set}) = \inf_{j\in {\Set}} \gamma(i,j).
    $
    We then set
    $
        \gamma({\Set}',{\Set}) = \inf_{i \in {\Set}'} \gamma(i,{\Set}).
    $%
    \footnote{This is \emph{not} the Hausdorff distance between sets.}
\end{definition}

\begin{remark}
    In the case of nearest-neighbour pair interactions, $\gamma(i,j) = |i-j|_{\ell^1(\zz^\nu)}$.
\end{remark}

We now start introducing assumptions that will play an important role throughout. The first ones are very common and easy to interpret.

\begin{description}
    \item[{F}\namedlabel{it:F}{F}] The family of interactions has \emph{finite range} in the sense that there exists $\irange \in \nn$ such $\gamma(i,j) > 1$ whenever $|i-j| > \irange$.
    
    \item[{R}\namedlabel{it:R}{R}] The interactions are \emph{regular} in the sense that the function~$W_{\Set}$ is twice continuously differentiable for all~${\Set} \Subset \zz^\nu$.
    
    \item[{TI}\namedlabel{it:TI}{TI}] The interactions are \emph{translation invariant} in the sense that
    $W_{\mathsf{T}_i(\Set)} = W_{\Set} \circ \mathsf{T}^*_{-i}$ 
    for every $i$ and every $\Set \Subset \zz^\nu$. 
\end{description}

The other assumptions to which we are going to appeal are more technical, and we suggest, on a first reading, focusing on the compact case discussed in Example~\ref{rem:simplify-torus} below or the polynomial case in Example~\ref{ex:polynomial}.

\begin{description}
    \item[{P1}\namedlabel{it:P1}{P1}] The on-site potential $W_{\{i\}}$ is nonnegative for every $i$.
    
    \item[{P2}\namedlabel{it:P3}{P2}] 
    The on-site potentials are pinning in the sense that there exists constants $a$ and $b$ with the property that the sublevel sets $L_i(z) = \{q_i \in X_i : {W_{\{i\}}}(q_i) \leq z\}$ are compact and satisfy
    $$\operatorname{conv}\left(L_i(z)\right) \subseteq L_i(az+b) \Subset X_i$$ 
    for all $i$ and all $z \geq 0$.\footnote{If $X_i$ is a Euclidean space, then ``$\operatorname{conv}$'' takes the meaning of the usual convex hull of a set. If $X_i$ is a torus, then we can declare that ``$\operatorname{conv}$'' of any nonempty set is the full torus, which makes~(\ref{it:P3})  trivial contingent on~(\ref{it:R}); see Remark~\ref{rem:simplify-torus}.}

    \item[{P3}\namedlabel{it:P4}{P3}]  The on-site potentials are pinning in the sense that there exists a constant $C_0$ such that
    \[
    |q_i|\leq C_0 W_{\{i\}}(\bq)
    \]
    for every $i$ and every $\bq$.
    
    \item[{D1}\namedlabel{it:D1}{D1}] The interaction forces are strongly dominated by the pinning potentials in the sense that there exists a constant $C_1$ such that
    \begin{align*}
        \left|\sum_{\substack{\Set \subseteq \Lambda \\ {\Set} \supsetneq \{i\} }}\partial_{q_i} W_{\Set}(\bq)\right|^2
        \leq  C_1 \sum_{\substack{j \in \Lambda \\ \gamma(i,j) \leq 1}} W_{\{j\}}(\bq).
    \end{align*}
    for every $i$ and every $\Lambda \Subset \zz^\nu$.

    \item[{D2}\namedlabel{it:D2}{D2}] The interaction energies are dominated by the pinning potentials in the sense that there exists a constant $C_2$ such that
    \begin{align*}
        \left|\sum_{\substack{\Set \subseteq \Lambda \\ {\Set} \supsetneq \{i\} }}W_{\Set}(\bq)\right|
        \leq  C_2 \sum_{\substack{j \in \Lambda \\ \gamma(i,j) \leq 1}} W_{\{j\}}(\bq).
    \end{align*}
    for every $i$ and every $\Lambda \Subset \zz^\nu$.

    \item[{D3}\namedlabel{it:D3}{D3}] The interaction energies are dominated by the pinning potentials in the sense that there exists a constant $C_3$ such that
    \begin{align*}
        \max_{j \in \Lambda} \left|\sum_{\substack{\Set \subseteq \Lambda \\ {\Set} \supseteq \{i\}}}\partial_{q_j}\partial_{q_i} W_{\Set}(\bq)\right|
        &\leq C_3 \sum_{\substack{j \in \Lambda \\ \gamma(i,j) \leq 1}} W_{\{j\}}(\bq)
    \end{align*}
    for every $i$ and every $\Lambda \Subset \zz^\nu$.
\end{description}

\begin{remark}
\label{rem:add-cst}
    If of any help for Assumptions~(\ref{it:P1})--(\ref{it:P4}) or (\ref{it:D1})--(\ref{it:D3}), one can always add a constant to each one-body potential~$W_{\{i\}}$, without any physical or dynamical consequence.
\end{remark}

\begin{example}
\label{rem:simplify-torus}
    If the configuration spaces $X_i$ are copies of a torus of any dimension\,---\,sometimes referred to as the rotator case\,---, then~(\ref{it:F}),~(\ref{it:R}) and~(\ref{it:TI}) imply~(\ref{it:P1})--(\ref{it:P4}) and~(\ref{it:D1})--(\ref{it:D3}) up to an inconsequential additive constant to $W_{\{0\}}$ as in Remark~\ref{rem:add-cst}.
\end{example}

\begin{example}
\label{ex:polynomial}
    As in~\cite{LLL77}, a familiar example to keep in mind is the case of $X_i = \rr$, a one-body potential that is a polynomial of degree $2n$ with a positive leading coefficient, and finite-range, two-body interaction potentials that are multinomials of degrees at most $n+1$ in the difference. This of course includes the harmonic case ($n=1$), but it is important that we are able to consider anharmonic cases as well. Adding smooth, compactly supported perturbations to these polynomials respecting~(\ref{it:TI}) does not spoil any of the other assumptions, up to an inconsequential additive constant to $W_{\{0\}}$ as in Remark~\ref{rem:add-cst}.
\end{example}

We introduce the local energy-per-particle functions at the site~$i$: 
\begin{align*}
    \nieni &:= K_{\{i\}} + W_{\{i\}}, 
    & 
    E_i &:= K_{\{i\}} + \sum_{\Set \ni i} \frac{1}{\#\Set} W_{\Set}.
\end{align*}
The superscript ``$^{\textnormal{n.i.}}$'' stands for ``noninteracting''. 
We will use the same notation for these objects whether they are seen as functions on $\Omega$, or on some $\bigtimes_{j\in\Lambda} T^*X_j$ with $\Lambda$ large enough that $\Lambda \supseteq \Set$ for all $\Set \ni i$ with nontrivial $W_{\Set}$. 
Under  (\ref{it:TI}), we have $\nieni = \nien \circ \mathsf{T}_{-i}^*$ and $E_i = E_0 \circ \mathsf{T}_{-i}^*$.
Also, formally,  
\begin{align*}
    H = \sum_{i \in \zz^\nu} E_i.
\end{align*}

We let $(\mathfrak{A}_{r},\|\,\cdot\,\|_r)$ be a normed space of arrays indexed by~$\zz^\nu$ with $\mathfrak{A}_{r}$ the space of all~$\boldsymbol{\xi} : \zz^\nu \to \rr$ for which 
\[ 
    \|\boldsymbol{\xi}\|_r := \sup_{i \in \zz^\nu} \frac{|\xi_i|}{(1+|i|)^r}
\]
is finite.
Define 
\begin{align}\label{eq:B-en}
    \mathfrak{B}_{r,C} &:= \left\{(\bq,\bp) : \|\nieni(\bq,\bp)\|_r \leq C \right\}, 
    &
    \mathfrak{B}_{r} &:= \bigcup_{C > 0} \mathfrak{B}_{r,C} =\left\{(\bq,\bp) : \|\nieni(\bq,\bp)\|_r <\infty \right\}.
\end{align}
and
\begin{align*}
    \Omega_2 &:= \bigcup_{r \in (0,2)} \mathfrak{B}_{r}.
\end{align*}
Note that, for every $r>0$, the set $\mathfrak{B}_{r}$ is invariant under $\mathsf{T}_i^*$; so is $\Omega_2$.

The set 
$\Omega_2$ will serve as a set of configurations on which a dynamical group $(\tau_t)_{t\in\rr}$  can be unambiguously defined; this is the content of Lemma~\ref{lem:dyn-bounds}. We have not been aiming for the largest possible such set of configurations (cf.~\cite{LLL77}), but we are satisfied with Definition~\ref{def:I2} and {Corollary}~\ref{cor:important-I2} below, which in particular say the following: under our assumptions, in dimensions $\nu=1,2,3$, almost all sample configurations from a translation invariant state giving finite mean and variance to $\nien$ lie in~$\Omega_2$.
{Definition~\ref{def:I2} shares similarities with the complement of ``bad configurations'' of~\cite{LP76}, which are pointwise sets of large local energy assigned small weight by Gibbs measures, and with the notion of tempered configurations in the Gibbs literature~\cite{R70,LP76,Geo}, which require controlled growth at infinity. 

We define so-called $a$-\emph{elementary boxes}, indexed by a scale length $a\in\nn$:
\begin{align*}
    \Lambda(a) := (\hspace{-.6ex}( -a, a]\!]^\nu.
\end{align*}
This choice is made so that $\#\Lambda(a) = 2^\nu a^{\nu}$.
We also define the (inner) \emph{interaction boundary}: 
\begin{align*}
    \partial^{\textnormal{int}}\Lambda(a) 
    := \{i \in \Lambda(a) : \gamma(i,\Lambda(a)^\complement) = 1\}.
\end{align*}
This is the collection of sites in the box $\Lambda(a)$ that directly interact with the outside of that box. Under the finite-range assumption~(\ref{it:F}), we have $\#\partial^{\textnormal{int}}\Lambda(a) = O(a^{\nu-1}) = o(\# \Lambda(a))$ as $a\to\infty$.

\subsection{Elements of thermodynamic formalism}

An \emph{extended state} on~$\Omega$ is a probability measure on~$(\Omega,\mathcal{F})$. An extended state~$\mu$ is called \emph{translation invariant}, written $\mu \in \mathcal{I}$, if $\mu \circ (\mathsf{T}^*_i)^{-1} = \mu$ for all $i \in \zz^\nu$. 
An extended state~$\mu$ induces a consistent family~$(\mu\res{\Lambda})_{\Lambda\Subset\zz^\nu}$ of marginals: the marginal $\mu\res{\Lambda}$ is a probability measure on $\bigtimes_{i\in\Lambda} T^*X_i$.
For a fixed finite subset~$\Lambda$ of the lattice~$\zz^\nu$, the \emph{finite-volume entropy} is
\begin{align*}
    S_\Lambda(\mu\res{\Lambda}) :=
        \begin{cases}
            \int -\ln \frac{\dd \mu\res{\Lambda}}{\dd \vol_\Lambda} \dd\mu\res{\Lambda}
                & \text{ if } \mu\res{\Lambda} \ll \vol_\Lambda \text{ and } (-\ln \frac{\dd \mu\res{\Lambda}}{\dd \vol_\Lambda})_+ \in L^1(\dd\vol_\Lambda), \\
            \infty 
                & \text{ otherwise.}
        \end{cases}
\end{align*}
Because the volume measure on $\bigtimes_{i\in\Lambda} T^*X_i$\,---\,denoted $\vol_\Lambda$ and acting as a reference measure\,---\,is not normalizable, the quantity $S_\Lambda(\mu\res{\Lambda})$ can be negative. We deal in Appendix~\ref{app:ent} with the consequences of this for the entropy per unit volume
$$
    s(\mu) = \lim_{a\to\infty} \frac{S_{\Lambda(a)}(\mu\res{\Lambda(a)}) }{\#\Lambda(a)},
$$
including existence of the limit. 

\begin{lemma}
\label{lem:ien-from-nien}
    Suppose (\ref{it:F}), (\ref{it:P1}), (\ref{it:D2}) and (\ref{it:TI}) hold and let $\zeta > 0$. If $\mu$ is translation invariant and $\nien \in L^\zeta(\dd\mu)$, then $E_0 \in L^\zeta(\dd\mu)$.
\end{lemma}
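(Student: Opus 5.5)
We bound $|E_0|$ pointwise by a finite linear combination of translates of $\nien$ and then use translation invariance to control each translate in $L^\zeta(\dd\mu)$.

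By definition,
\[
    E_0 = K_{\{0\}} + W_{\{0\}} + \sum_{\Set \ni 0,\ \Set \neq \{0\}} \frac{1}{\#\Set} W_{\Set}.
\]
Under (\ref{it:F}), only finitely many $\Set$ contain $0$ and carry a nontrivial $W_\Set$. Indeed, $W_\Set\not\equiv 0$ with $0, j\in\Set$ forces $\gamma(0,j)\le 1$, hence $|j|\le \irange$, so $\Set\subseteq \Lambda(\irange+1)$. The sum is therefore finite and $E_0$ is a genuine measurable function.

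The weights $1/\#\Set$ prevent a direct application of (\ref{it:D2}), which controls the unweighted sum $\sum_{\Set\supsetneq\{0\}}W_\Set$. I would handle this by grouping according to cardinality. For fixed $n\ge 2$, the sum over $\Set\ni 0$ with $\#\Set = n$ is not obviously of the (\ref{it:D2}) form, so a cruder approach is simpler. Each individual $W_\Set$ with $\Set\ni 0$, $\#\Set\ge2$, can be isolated by a difference of two applications of (\ref{it:D2}) with suitable choices of $\Lambda$. Taking $\Lambda=\Set$ in (\ref{it:D2}) at $i=0$ gives
\[
    \Big|\sum_{\{0\}\subsetneq\Set'\subseteq\Set}W_{\Set'}\Big| \le C_2\sum_{j\in\Set,\ \gamma(0,j)\le1}W_{\{j\}}.
\]
By Möbius inversion over subsets of $\Set$ containing $0$, $W_\Set$ is a signed combination, with at most $2^{\#\Set}$ terms, of such partial sums $F(\Set'')=\sum_{\{0\}\subsetneq\Set'\subseteq\Set''}W_{\Set'}$, each of which is bounded by $C_2\sum_{|j|\le\irange}W_{\{j\}}$. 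Since there are finitely many $\Set$, this yields a constant $C$ depending only on $\irange$, $\nu$ and $C_2$ such that
\[
    |E_0| \le K_{\{0\}} + W_{\{0\}} + C\sum_{|j|\le \irange} W_{\{j\}} \le (1+C)\sum_{|j|\le\irange} \nienj,
\]
using (\ref{it:P1}) and $K_{\{j\}}\ge0$ so that $W_{\{j\}}\le\nienj$.

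Finally, since $\nienj=\nien\circ\mathsf{T}^*_{-j}$ and $\mu$ is translation invariant, each $\nienj$ has the same law as $\nien$ under $\mu$, so $\|\nienj\|_{L^\zeta(\dd\mu)}=\|\nien\|_{L^\zeta(\dd\mu)}<\infty$. For $\zeta\ge1$, Minkowski's inequality gives $E_0\in L^\zeta$. For $0<\zeta<1$, I would use $(\sum_k x_k)^\zeta\le\sum_k x_k^\zeta$ for nonnegative $x_k$, which gives the same conclusion.

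The only delicate step is extracting individual $W_\Set$ from (\ref{it:D2}) despite the $1/\#\Set$ weights. The Möbius inversion over the finite lattice of subsets between $\{0\}$ and $\Set$ resolves it, since every term appearing is itself a (\ref{it:D2})-type sum with $\Lambda=\Set''$.
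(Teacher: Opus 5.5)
Your proposal is correct and follows the route the paper has in mind when it calls the lemma immediate. You dominate $|E_0|$ pointwise by a finite combination of the $\nienj$ with $|j|\le\irange$, using (\ref{it:F}), (\ref{it:P1}) and (\ref{it:D2}), and then use (\ref{it:TI}) and translation invariance of $\mu$ to control each translate in $L^\zeta(\dd\mu)$. Your M\"obius-inversion step, which applies (\ref{it:D2}) with $\Lambda$ ranging over the subsets of $\Set$ containing $0$, is a clean justification of the termwise control of individual $W_\Set$ that the paper uses without comment (also in Lemma~\ref{lem:ham-from-nien}).
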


\begin{proof}
    This is immediate from the assumptions and definitions.
\end{proof}

\begin{definition}[States in $\mathcal{I}_2$]
\label{def:I2}
    Let $\mathfrak{M}_\zeta(\mu) := \int |\nien|^\zeta \dd\mu$. A translation-invariant state~$\mu$ is said to belong to $\mathcal{I}_2$ if $\mathfrak{M}_\zeta(\mu) < \infty$ for some $\zeta > \max\{1, \tfrac12 \nu \}$.
\end{definition}

The following lemma is a simple adaptation of a computation done e.g.\ in~\cite[\S{4}]{LLL77}, and its corollary elucidates the use of the subscript ``$_2$'' in ``$\mathcal{I}_2$''.

\begin{lemma}
\label{lem:supp-on-Omega0}
    If $\mu \in \mathcal{I}$ is such that $\mathfrak{M}_\zeta(\mu) < \infty$ 
    for some $\zeta>0$, then, for every $r>\nu/\zeta$, $\mu$ is concentrated on~$\mathfrak{B}_{r}$ in the sense that $\mu(\mathfrak{B}_{r})=1$. 
\end{lemma}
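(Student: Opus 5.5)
We prove this with a Borel--Cantelli argument built on Markov's inequality and translation invariance. Since $\mathfrak{B}_{r} = \{\sup_i |\nieni|/(1+|i|)^r < \infty\}$, it suffices to show that, for $\mu$-almost every configuration, $|\nieni(\bq,\bp)| \leq (1+|i|)^r$ for all but finitely many sites $i \in \zz^\nu$. Indeed, the finitely many exceptional sites each contribute a finite quantity $|\nieni|/(1+|i|)^r$, because $\nieni$ is a finite real-valued function. So the supremum is then finite and the configuration lies in $\mathfrak{B}_r$.

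The key steps, in order, are as follows.
\begin{enumerate}
    \item For each $i$, let $A_i := \{|\nieni| > (1+|i|)^r\}$.
    \item By translation invariance, $\nieni = \nien\circ\mathsf{T}^*_{-i}$ has the same law under $\mu$ as $\nien$.
    \item Markov's inequality applied to $|\nien|^\zeta$ then gives
    \[
        \mu(A_i) \leq \frac{\mathfrak{M}_\zeta(\mu)}{(1+|i|)^{r\zeta}}.
    \]
    \item Sum over the lattice:
    \[
        \sum_{i\in\zz^\nu} \mu(A_i) \leq \mathfrak{M}_\zeta(\mu) \sum_{i\in\zz^\nu} (1+|i|)^{-r\zeta}.
    \]
    The lattice sum is finite exactly when $r\zeta > \nu$, since the number of sites with $|i| \approx n$ is $O(n^{\nu-1})$, so the sum compares to $\sum_n n^{\nu-1-r\zeta}$. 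This is precisely the hypothesis $r > \nu/\zeta$.
    \item The first Borel--Cantelli lemma then gives $\mu(\limsup_i A_i) = 0$. Hence $\mu$-almost every configuration lies in only finitely many of the $A_i$, which is what the first paragraph required. We conclude $\mu(\mathfrak{B}_r) = 1$.
    \item Measurability of $\mathfrak{B}_r$ is clear: it is the countable union over $C \in \nn$ of the sets $\mathfrak{B}_{r,C}$, and each $\mathfrak{B}_{r,C}$ is a countable intersection of measurable sets, since each $\nieni$ is continuous by~(\ref{it:R}), or at least measurable.
\end{enumerate}

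There is no serious obstacle here; the only point needing a little care is the lattice-sum exponent, that is, checking that summability of $(1+|i|)^{-r\zeta}$ over $\zz^\nu$ is equivalent to $r\zeta>\nu$. Note also that no bound on the interaction is needed, since only $\nien$ enters.
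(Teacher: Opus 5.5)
Your proposal is correct and is essentially the paper's proof: Markov's inequality plus translation invariance gives $\mu(A_i)\le \mathfrak{M}_\zeta(\mu)(1+|i|)^{-r\zeta}$, and Borel--Cantelli with $r\zeta>\nu$ then gives that $\mu$-almost surely only finitely many sites violate the bound, so $\|\nieni(\bq,\bp)\|_r<\infty$. Here translation invariance means that of $\mu$ together with the identity $\nieni=\nien\circ\mathsf{T}^*_{-i}$ from (TI), which the paper also uses tacitly. Your use of $|\nieni|$ instead of $\nieni$ is a small improvement: it matches the absolute values in the definition of $\mathfrak{B}_r$ directly, without relying on the nonnegativity assumption (P1).
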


\begin{proof}
    Let $r > 0$ be arbitrary. By Markov's inequality and translation invariance,
    \begin{align*}
        \mu\left\{(\bq,\bp) : \nieni(\bq,\bp) > (1+|i|)^r\right\}
        &\leq \frac{\mathfrak{M}_\zeta(\mu)}{(1+|i|)^{r\zeta}}.
    \end{align*}
    Provided that $\zeta r > \nu$, we can apply the Borel--Cantelli lemma to deduce that, for $\mu$-almost all~$(\bq,\bp)$, there exists $\Lambda \Subset \zz^\nu$ such that
    \[
        \sup\left\{ \frac{ \nieni(\bq,\bp)}{(1+|i|)^r}: i \notin\Lambda \right\} \leq 1.
    \]
    Therefore, 
    for $\mu$-almost all~$(\bq,\bp)$, there exists $C$ such that
    $
        \nieni(\bq,\bp) \leq C (1+|i|)^r
    $
    for all~$i\in\zz^\nu$. 
\end{proof}

\begin{corollary}
\label{cor:important-I2}
    Suppose (\ref{it:F}), (\ref{it:P1}), (\ref{it:D2}) and (\ref{it:TI}) hold. If $\mu \in \mathcal{I}_2$, then $\mu$ is concentrated on $\Omega_2$ and $E_0 \in L^\zeta(\dd\mu)$ for some $\zeta > 1$.
\end{corollary}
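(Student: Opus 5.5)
The corollary follows by combining Lemma~\ref{lem:supp-on-Omega0} with Lemma~\ref{lem:ien-from-nien}, once the exponent is chosen correctly. Since $\mu \in \mathcal{I}_2$, Definition~\ref{def:I2} gives some $\zeta > \max\{1,\tfrac12\nu\}$ with $\mathfrak{M}_\zeta(\mu) < \infty$. We fix this $\zeta$ throughout.

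For concentration on $\Omega_2$, note that $\zeta > \tfrac12\nu$ is equivalent to $\nu/\zeta < 2$, so the interval $(\nu/\zeta, 2)$ is nonempty. Pick any $r$ in it. Since $\mu$ is translation invariant with $\mathfrak{M}_\zeta(\mu)<\infty$, Lemma~\ref{lem:supp-on-Omega0} gives $\mu(\mathfrak{B}_r) = 1$. Because $r \in (0,2)$, we have $\mathfrak{B}_r \subseteq \Omega_2$ by definition. Hence $\mu(\Omega_2) = 1$; measurability of $\Omega_2$ is not needed, since it contains a set of full measure.

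For the integrability claim, $\mathfrak{M}_\zeta(\mu)<\infty$ means exactly $\nien \in L^\zeta(\dd\mu)$. Under (\ref{it:F}), (\ref{it:P1}), (\ref{it:D2}) and (\ref{it:TI}), Lemma~\ref{lem:ien-from-nien} then yields $E_0 \in L^\zeta(\dd\mu)$, and this $\zeta$ exceeds $1$.

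There is no real obstacle. The only point needing care is the strictness $r<2$, which is guaranteed by the strict inequality $\zeta > \nu/2$ in Definition~\ref{def:I2}.
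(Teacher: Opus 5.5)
Your proof is correct and is essentially the paper's argument: choose $r\in(\nu/\zeta,2)$, apply Lemma~\ref{lem:supp-on-Omega0} to get $\mu(\mathfrak{B}_r)=1$ with $\mathfrak{B}_r\subseteq\Omega_2$, and then invoke Lemma~\ref{lem:ien-from-nien} with the same $\zeta>1$. Your caveat about measurability is harmless but unnecessary. Since $\mathfrak{B}_r$ increases with $r$, the set $\Omega_2=\bigcup_{n}\mathfrak{B}_{2-1/n}$ is itself a countable union of measurable sets.
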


\begin{proof}
    The concentration on $\Omega_2$ comes directly from Lemma~\ref{lem:supp-on-Omega0}, since we can choose $r\in(0,2)$ such that $r>\nu/\zeta$. The fact that $E_0 \in L^\zeta(\dd\mu)$ for some $\zeta > 1$ follows from Lemma~\ref{lem:ien-from-nien}.
\end{proof}

\begin{remark}
    For the sake of simplicity and cohesion, we will often state results for configurations in~$\Omega_2$ or states in~$\mathcal{I}_2$ even though that is not necessarily the optimal assumption.
\end{remark}

We define the \emph{total energy in the finite subset} $\Lambda\Subset\zz^\nu$ as the function
\[
H_\Lambda\coloneqq \sum_{i\in\Lambda}K_{\{i\}} + \sum_{\Set \subseteq \Lambda} W_{\Set}
\]
---\,note that we are not using $\sum_{i\in\Lambda} E_i$. 
The next two lemmas provide useful relations between the local energy and the total energy in a finite subset.
\begin{lemma}
\label{lem:ham-from-nien}
     Suppose (\ref{it:F}), (\ref{it:P1}), (\ref{it:D2}) and (\ref{it:TI}) hold. 
     If $\nien \in L^1(\dd\mu)$, then $H_\Lambda \in L^1(\dd\mu)$ for every $\Lambda \Subset \zz^\nu$.
\end{lemma}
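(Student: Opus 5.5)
The plan is to bound $|H_\Lambda|$ pointwise by a finite sum of translates of $\nien$, and then use translation invariance of $\mu$. (Translation invariance of $\mu$ is implicit here, as it is in Lemma~\ref{lem:ien-from-nien}.)

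First, rewrite the interaction part of $H_\Lambda$ by splitting the sum over $\Set\subseteq\Lambda$ into one-body and many-body terms:
\[
    H_\Lambda = \sum_{i\in\Lambda}\nieni + \sum_{\substack{\Set\subseteq\Lambda\\ \#\Set\geq 2}} W_\Set .
\]
For the many-body part, each $\Set$ with $\#\Set\geq 2$ contains some point $i\in\Lambda$. I will fix an ordering of $\Lambda$ and attribute each $\Set$ to its smallest element $i$. The difficulty is that (\ref{it:D2}) controls $\bigl|\sum_{\Set\subseteq\Lambda,\Set\supsetneq\{i\}} W_\Set\bigr|$, i.e.\ the full sum over sets containing $i$, and not partial sums.

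The cleanest route is to avoid partial sums via inclusion--exclusion over nested boxes. Write $\Lambda=\{i_1,\dots,i_n\}$ and $\Lambda_k=\{i_k,\dots,i_n\}$. Then
\[
    \sum_{\substack{\Set\subseteq\Lambda\\ \#\Set\geq2}} W_\Set=\sum_{k=1}^n \sum_{\substack{\Set\subseteq\Lambda_k\\ \Set\supsetneq\{i_k\}}}W_\Set .
\]
This holds because each $\Set$ is counted exactly once, namely at $k$ equal to the index of its smallest element, and $\Set\subseteq\Lambda_k$ then holds. Applying (\ref{it:D2}) with the finite set $\Lambda_k$ in place of $\Lambda$ bounds each inner sum in absolute value by $C_2\sum_{j\in\Lambda_k,\gamma(i_k,j)\le1}W_{\{j\}}$.

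Next, I need a bound that is uniform over sites. By (\ref{it:F}), the number of $j$ with $\gamma(i,j)\le 1$ is at most $(2\irange+1)^\nu$. Using (\ref{it:P1}) together with $K_{\{j\}}\geq 0$, we have $0\le W_{\{j\}}\le \nienj$. Hence
\[
    |H_\Lambda|\leq \bigl(1+C_2(2\irange+1)^\nu\bigr)\sum_{j\in\Lambda}\nienj .
\]

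Finally, by (\ref{it:TI}) we have $\nienj=\nien\circ\mathsf{T}^*_{-j}$, so translation invariance of $\mu$ gives $\int \nienj\,\dd\mu=\int\nien\,\dd\mu<\infty$. Therefore $H_\Lambda\in L^1(\dd\mu)$, with $\|H_\Lambda\|_{L^1}\le C\#\Lambda\,\mathfrak{M}_1(\mu)$.

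The only delicate step is the one above: matching the form of (\ref{it:D2}), which is a bound on a full sum rather than term by term, through the nested-subset decomposition. This works because (\ref{it:D2}) is assumed for every finite $\Lambda$, so it may be applied to each $\Lambda_k$. Measurability of each $W_\Set$ is given by assumption.
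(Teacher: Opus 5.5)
Your argument is correct, but it uses (\ref{it:D2}) differently from the paper.

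\textbf{The paper's route.} The paper works term by term. It asserts that each individual $W_\Set$ with $\Set\ni 0$ lies in $L^1(\dd\mu)$ ``by (\ref{it:D2})''. It then transfers this to every $\Set$ using (\ref{it:TI}) and translation invariance of $\mu$. It concludes because $H_\Lambda$ is a finite sum of such terms and kinetic terms.

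Since (\ref{it:D2}) only bounds the full sums $\sum_{\Set\subseteq\Lambda,\,\Set\supsetneq\{i\}}W_\Set$, that first step is immediate for pair interactions. For genuinely many-body interactions it needs a short argument the paper does not state, e.g.\ an induction on $\#\Set$ applying (\ref{it:D2}) with $\Lambda=\Set$ and $i=0$.

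\textbf{Your route.} Your nested sets $\Lambda_k=\{i_k,\dots,i_n\}$ assign each many-body $\Set$ to its minimal element. This partitions the many-body part of $H_\Lambda$ into exactly the sums that (\ref{it:D2}) controls. The result is the explicit pointwise bound $|H_\Lambda|\le(1+C_2(2\irange+1)^\nu)\sum_{j\in\Lambda}\nienj$. This is more transparent and more quantitative, and (\ref{it:F}) enters only through the constant.

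\textbf{What each buys.} The paper's route yields integrability of each individual $W_\Set$, which Lemma~\ref{lem:limitHLambda} later uses. That also follows from your conclusion, since $W_\Set=\sum_{\Lambda\subseteq\Set}(-1)^{\#(\Set\setminus\Lambda)}H_\Lambda$ whenever $\#\Set\geq 2$ (the kinetic terms cancel).

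You are also right that translation invariance of $\mu$ is needed even though it is missing from the statement; the paper's proof uses it too.
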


\begin{proof} 
    Since $L^1(\dd\mu)$ is a vector space, since~(\ref{it:TI}) holds, and since $\mu$ is translation invariant, it suffices to note that the assumption $\nien \in L^1(\dd\mu)$ implies $K_{\{0\}} \in L^1(\dd\mu)$ and $W_{\{0\}} \in L^1(\dd\mu)$ by~(\ref{it:P1}), as well as $W_\Set \in L^1(\dd\mu)$ for every $\Set \ni 0$ by~(\ref{it:D2})\,---\,by~(\ref{it:F}) there are finitely many such $\Set$, less than $2^{(2D+1)^\nu-1}-1$. Then, again by~(\ref{it:F}), we also have $H_\Lambda \in L^1(\dd\mu)$ for every $\Lambda \Subset \zz^\nu$. 
\end{proof}

\begin{lemma}
\label{lem:limitHLambda}
      Suppose (\ref{it:F}), (\ref{it:P1}), (\ref{it:D2}) and (\ref{it:TI}) hold. 
      If $\mu$ is a translation-invariant state with $\nien \in L^1(\dd\mu)$, then
    \begin{align*}
        \lim_{a\to\infty} \int  \frac{H_{\Lambda(a)}}{\#\Lambda(a)}\dd\mu = \int E_0 \dd\mu,
    \end{align*}
    with both sides being finite.
\end{lemma}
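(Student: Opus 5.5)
The plan is to compare $H_{\Lambda(a)}$ with $\sum_{i\in\Lambda(a)} E_i$ and show that the difference, once integrated, is of boundary order $O(\#\partial^{\textnormal{int}}\Lambda(a)) = o(\#\Lambda(a))$. By translation invariance of $\mu$ and~(\ref{it:TI}), $\int E_i \dd\mu = \int E_0\dd\mu$ for every $i$. This quantity is finite by Lemma~\ref{lem:ien-from-nien} with $\zeta=1$. Hence
\[
    \int \sum_{i\in\Lambda(a)} E_i \dd\mu = \#\Lambda(a) \int E_0\dd\mu,
\]
and it suffices to show that $\#\Lambda(a)^{-1}\int |H_{\Lambda(a)} - \sum_{i\in\Lambda(a)}E_i| \dd\mu \to 0$. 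Integrability of each term is guaranteed by Lemma~\ref{lem:ham-from-nien}.

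To compute the difference, I would expand $\sum_{i\in\Lambda(a)} E_i$. The kinetic parts coincide with those of $H_{\Lambda(a)}$. The term $W_\Set$ appears in $\sum_{i\in\Lambda(a)}E_i$ with weight $\#(\Set\cap\Lambda(a))/\#\Set$, and it appears in $H_{\Lambda(a)}$ with weight $1$ if $\Set\subseteq\Lambda(a)$ and $0$ otherwise. The two weights agree whenever $\Set\subseteq\Lambda(a)$ or $\Set\cap\Lambda(a)=\emptyset$. The difference is therefore
\[
    H_{\Lambda(a)}-\sum_{i\in\Lambda(a)}E_i = -\sum_{\substack{\Set\cap\Lambda(a)\neq\emptyset\\ \Set\not\subseteq\Lambda(a)}} \frac{\#(\Set\cap\Lambda(a))}{\#\Set} W_\Set ,
\]
where only $\Set$ with $W_\Set\not\equiv 0$ contribute. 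Each such $\Set$ has $\#\Set\ge 2$ and contains a site $i\in\Lambda(a)$ and a site $j\notin\Lambda(a)$ with $\gamma(i,j)=1$. Hence $i\in\partial^{\textnormal{int}}\Lambda(a)$.

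Next I would bound the absolute value of this sum pointwise by
\[
    \sum_{i\in\partial^{\textnormal{int}}\Lambda(a)}\ \sum_{\Set\ni i,\ \Set\neq\{i\}} |W_\Set| .
\]
By~(\ref{it:F}), the number of such $\Set$ containing a given $i$ is bounded by a constant $N$ independent of $i$ (less than $2^{(2D+1)^\nu-1}$). Using~(\ref{it:TI}) and translation invariance of $\mu$, the integral of the inner sum equals $c:=\sum_{\Set\ni 0,\Set\neq\{0\}}\int|W_\Set|\dd\mu$. Each $W_\Set$ here is in $L^1(\dd\mu)$, as shown in the proof of Lemma~\ref{lem:ham-from-nien} via~(\ref{it:D2}). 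This is the main point needing care: (\ref{it:D2}) controls only sums of interaction energies, not individual $|W_\Set|$. I would handle it exactly as Lemma~\ref{lem:ham-from-nien} does. Alternatively, I would regroup by writing the difference as a finite combination, over $i\in\partial^{\textnormal{int}}\Lambda(a)$, of sums of the form $\sum_{\Set\subseteq\Lambda',\Set\supsetneq\{i\}}W_\Set$ for suitable finite $\Lambda'$, to which~(\ref{it:D2}) applies directly. In either case the result is the bound $\int|H_{\Lambda(a)}-\sum_{i\in\Lambda(a)} E_i|\dd\mu \le c\,\#\partial^{\textnormal{int}}\Lambda(a)$, with $c<\infty$.

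Since $\#\partial^{\textnormal{int}}\Lambda(a)=O(a^{\nu-1})$ while $\#\Lambda(a)=2^\nu a^\nu$, dividing by $\#\Lambda(a)$ and letting $a\to\infty$ gives the claim. Finiteness of both sides follows from the finiteness of $\int E_0\dd\mu$.
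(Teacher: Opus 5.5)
Your proof is correct and is essentially the paper's argument: the paper also writes $\int H_{\Lambda(a)} \dd\mu$ as $\#\Lambda(a)\int E_0 \dd\mu$ minus a boundary term over $i\in\partial^{\textnormal{int}}\Lambda(a)$ and $\Set\ni i$ with $\Set\cap\Lambda(a)^\complement\neq\emptyset$. It bounds that term by $\#\partial^{\textnormal{int}}\Lambda(a)\sum_{\Set\supsetneq\{0\}}\int|W_\Set| \dd\mu$ using (\ref{it:TI}) and translation invariance, and the paper simply asserts the integrability of the individual $W_\Set$ that you flag (Lemma~\ref{lem:ham-from-nien}); your regrouping idea makes this rigorous via M\"obius inversion, $W_{\Set_0}=\sum_{\{i\}\subseteq\Lambda'\subseteq\Set_0}(-1)^{\#(\Set_0\setminus\Lambda')}\sum_{\Set\subseteq\Lambda',\,\Set\supsetneq\{i\}}W_\Set$ for $\#\Set_0\geq 2$, with (\ref{it:D2}) applied term by term.
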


\begin{proof}
    By Lemma~\ref{lem:ham-from-nien}, $H_\Lambda \in L^1(\dd\mu)$ for every $\Lambda \Subset \zz^\nu$. 
    Let $a > \irange$ be arbitrary.
    For the purpose of the computation below, we temporarily merge the terms $K_{\{i\}}$ and $W_{\{i\}}$ into a term that we still call $W_{\{i\}}$. With this slight abuse of notation:
    \begin{align*}
        \int \frac{H_{\Lambda(a)}}{\#\Lambda(a)}\dd\mu
        &= \frac{1}{\#\Lambda(a)} \int \sum_{\Set \subseteq  \Lambda(a)} W_{\Set} \dd\mu \\
        &= \frac{1}{\#\Lambda(a)} \sum_{i \in \Lambda(a)}\int \sum_{\substack{\Set \ni i \\ \Set \subseteq  \Lambda(a)}} \frac{W_{\Set}}{\#\Set} \dd\mu \\
        &= \frac{1}{\#\Lambda(a)} \sum_{i \in \Lambda(a)\setminus\partial^{\textnormal{int}}\Lambda(a)}\int \sum_{\Set \ni i } \frac{W_{\Set}}{\#\Set} \dd\mu + \frac{1}{\#\Lambda(a)} \sum_{i \in \partial^{\textnormal{int}}\Lambda(a)}\int \sum_{\substack{\Set \ni i \\ \Set \cap \Lambda^\complement(a) = \emptyset}} \frac{W_{\Set}}{\#\Set} \dd\mu \\
        &= \frac{1}{\#\Lambda(a)} \sum_{i \in \Lambda(a)} \int \sum_{\Set \ni i } \frac{W_{\Set}}{\#\Set} \dd\mu - \frac{1}{\#\Lambda(a)} \sum_{i \in \partial^{\textnormal{int}}\Lambda(a)}\int \sum_{\substack{\Set \ni i \\ \Set \cap \Lambda^\complement(a) \neq \emptyset}} \frac{W_{\Set}}{\#\Set} \dd\mu.
    \end{align*}
    By translation invariance of $\mu$ and~(\ref{it:TI}), the first term equals $\int E_0 \dd\mu$. As for the second term, we can also use translation invariance of~$\mu$ and~(\ref{it:TI}) to obtain
    \begin{align*}
        \left|\frac{1}{\#\Lambda(a)} \sum_{i \in \partial^{\textnormal{int}}\Lambda(a)}\int \sum_{\substack{\Set \ni i \\ \Set \cap \Lambda^\complement(a) \neq \emptyset}} \frac{W_{\Set}}{\#\Set} \dd\mu\right|
        \leq \frac{\#\partial^{\textnormal{int}}\Lambda(a)}{\#\Lambda(a)} \sum_{\Set \supsetneq  \{0\}} \int |W_{\Set}| \dd\mu.
    \end{align*}
    Since the integral on the right-hand side is finite by~(\ref{it:D2}) and $\nien \in L^1(\dd\mu)$, and since $\#\partial^{\textnormal{int}}\Lambda(a) = o(\#\Lambda(a))$ as $a \to \infty$ by~(\ref{it:F}), we have the desired identity.
\end{proof}

\begin{remark}
    In particular, Lemmas~\ref{lem:ham-from-nien} and~\ref{lem:limitHLambda} hold when $\mu\in \mathcal I_2$ by 
    Jensen's inequality.
\end{remark}

We define the \emph{pressure at inverse temperature $\beta$ in the finite subset} $\Lambda\Subset\zz^\nu$ as
\begin{align*}
    P_\beta(H_\Lambda) := \ln \int \Exp{-\beta H_{\Lambda}} \dd\vol_{\Lambda}.
\end{align*}
By H\"older's inequality, the function $\beta \mapsto P_\beta(H_\Lambda)$ is convex.
The \emph{pressure per unit volume at inverse temperature $\beta$}, defined as
\begin{align*}
    p_\beta(H) := \limsup_{a\to\infty} \frac{P_\beta(H_{\Lambda(a)})}{\#\Lambda(a)},
\end{align*}
is then also convex in~$\beta$.

\begin{remark}
\label{rem:split-press}
    The pressure per unit volume $p_\beta(H)$ is a sum of two terms: one that is defined by keeping only the potential terms in $H_\Lambda$ (both pinning and interaction), and one that is defined by keeping only the kinetic terms in $H_\Lambda$. Both are convex functions of $\beta$, and the latter is proportional to $-\ln \beta$ and therefore strictly convex and smooth in~$\beta > 0$. In particular, the pressure $p_\beta(H)$ is strictly convex where it is finite, and its regularity is determined by the potential terms only.
\end{remark}

\begin{lemma}
\label{lem:VP-upper-bound}
    Suppose~(\ref{it:F}), (\ref{it:P1}), (\ref{it:P4}),\footnote{The only property we actually use is the strong pinning condition $\Exp{-c W_{\{i\}}} \in L^1(\dd\vol)$ for all $i$ and all $c>0$, which is implied by~(\ref{it:P4}).} (\ref{it:D2}) and (\ref{it:TI}) hold. If $\mu \in \mathcal{I}_2$, then 
    \begin{equation}
        s(\mu) - \beta \int E_0 \dd\mu \leq p_\beta(H)
    \end{equation}
    for every $\beta > 0$.
\end{lemma}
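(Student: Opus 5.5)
The plan is to prove the inequality at each finite volume, using the finite-volume Gibbs variational principle (Gibbs' inequality), and then pass to the limit along the boxes $\Lambda(a)$.

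Fix $\beta>0$ and $a$. Set $Z_a := \int \Exp{-\beta H_{\Lambda(a)}}\dd\vol_{\Lambda(a)}$ and $\rho_a := Z_a^{-1}\Exp{-\beta H_{\Lambda(a)}}$.

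\textbf{Finiteness of $Z_a$.} We need $0<Z_a<\infty$. By~(\ref{it:D2}) and~(\ref{it:F}), summing the domination bound over $i\in\Lambda(a)$ gives
\[
\Bigl|\sum_{\Set\subseteq\Lambda(a),\,\#\Set\ge2} W_\Set\Bigr| \le C\sum_{j\in\Lambda(a)} W_{\{j\}}
\]
for some constant $C$ independent of $a$. Hence $H_{\Lambda(a)} \ge \sum_i K_{\{i\}} - C\sum_j W_{\{j\}}$. This lower bound is not directly useful, because the coefficient $-C$ need not be positive. So I will use the footnoted strong pinning differently: we need $\Exp{-\beta H}$ integrable, and the natural route is the upper bound on $p_\beta$ via the potential being bounded below.

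Since the statement only needs $s(\mu)-\beta\int E_0\dd\mu\le p_\beta(H)$, the case $Z_a=\infty$ is trivial. If $Z_a$ is finite, Gibbs' inequality applies. Thus we may assume $Z_a<\infty$ for the relevant subsequence, or else $p_\beta=\infty$. Positivity $Z_a>0$ is clear, since $H_{\Lambda(a)}$ is finite everywhere.

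\textbf{Finite-volume inequality.} Assume $Z_a<\infty$. If $S_{\Lambda(a)}(\mu\res{\Lambda(a)})=-\infty$ there is nothing to prove. Otherwise $\mu\res{\Lambda(a)}$ has a density $f$ with respect to $\vol_{\Lambda(a)}$, and nonnegativity of relative entropy gives
\[
0\le \int f\ln\frac{f}{\rho_a}\dd\vol = -S_{\Lambda(a)}(\mu\res{\Lambda(a)}) + \beta\int H_{\Lambda(a)}\dd\mu + \ln Z_a .
\]
Here $H_{\Lambda(a)}\in L^1(\dd\mu)$ by Lemma~\ref{lem:ham-from-nien}, which justifies splitting the integral. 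This yields
\[
S_{\Lambda(a)}(\mu\res{\Lambda(a)})-\beta\int H_{\Lambda(a)}\dd\mu\le P_\beta(H_{\Lambda(a)}).
\]
Some care is needed with the $\pm\infty$ conventions in the definition of $S_\Lambda$. If the positive part of $-\ln f$ is not integrable, the paper's convention sets the entropy to $+\infty$. That case must be excluded using Appendix~\ref{app:ent}: for $\mu\in\mathcal I_2$ the marginal entropy is bounded above by a Gibbs-type comparison with the product reference measure $\prod_i \Exp{-c\nieni}$. This reference measure is normalizable thanks to~(\ref{it:P4}) and the Gaussian kinetic part, and it bounds $S_\Lambda \le \#\Lambda\,(\ln z_c + c\int\nien\dd\mu)<\infty$.

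\textbf{Passing to the limit.} Divide by $\#\Lambda(a)$ and take $\limsup_{a\to\infty}$. The entropy term converges to $s(\mu)$, whose existence is granted by Appendix~\ref{app:ent}. The energy term converges to $\beta\int E_0\dd\mu$ by Lemma~\ref{lem:limitHLambda}, which applies because $\mu\in\mathcal I_2$ gives $\nien\in L^1$ by Jensen's inequality. The right-hand side gives $p_\beta(H)$ by definition.

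The main obstacle I expect is the bookkeeping of infinities and the non-normalizable reference measure: ensuring that $S_\Lambda$ is finite, or at least that the inequality is meaningful, and that the limit defining $s(\mu)$ exists. I would settle this by invoking the appendix together with the strong pinning condition $\Exp{-cW_{\{i\}}}\in L^1$, as indicated in the footnote.
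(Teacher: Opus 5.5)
Your proposal is correct and follows the paper's proof essentially step for step. The ingredients match: the finite-volume Gibbs inequality (Lemma~\ref{lem:rel-KL} with Remark~\ref{rem:var-ineq}, justified by $H_{\Lambda(a)}\in L^1(\dd\mu)$ from Lemma~\ref{lem:ham-from-nien}); the trivial handling of $Z_a=\infty$ or $s(\mu)=-\infty$; finiteness of $S_{\Lambda(a)}$ via the product reference measure $\Exp{-c\sum_i \nieni}$ as in Proposition~\ref{prop:s-prop}; and passage to the limit through Lemma~\ref{lem:limitHLambda}. Your opening attempt to bound $Z_a$ via~(\ref{it:D2}) is unnecessary and can simply be dropped.
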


\begin{proof}
    By rescaling $H_\Lambda$, it suffices to consider the case $\beta = 1$.
    Observe that $E_0 \in L^1(\dd\mu)$ by Corollary~\ref{cor:important-I2}, so the statement trivially holds if $s(\mu) = -\infty$ or $p(H) = \infty$. On the other hand, Proposition~\ref{prop:s-prop} guarantees that $s(\mu) < \infty$; see Remark~\ref{rem:concrete-frak-F}.
    Hence, in view of Proposition~\ref{prop:s-prop}, we may safely work with finite $S_{\Lambda(a)}(\mu\res{\Lambda(a)})$ for all $a\in\nn$. 
    
    Now,
    since $H_{\Lambda(a)} \in L^1(\dd\mu)$ by Lemma~\ref{lem:ham-from-nien}, we have
    \begin{align*}
        S_{\Lambda(a)} (\mu\res{\Lambda(a)}) - \int H_{\Lambda(a)} \dd\mu
        &\leq  P(H_{\Lambda(a)})
    \end{align*}
    for all $a\in\nn$ by Lemma~\ref{lem:rel-KL}; see Remark~\ref{rem:var-ineq}.
    Dividing by $\#\Lambda(a)$ and taking the limit superior as $a \to \infty$, the desired conclusion follows from Lemma~\ref{lem:limitHLambda} and Proposition~\ref{prop:s-prop}.
\end{proof}

\begin{remark}
    See Lemma~\ref{lem:existence-eq} for a result on the existence of a state in $\mathcal{I}_2$ that saturates this upper bound.
\end{remark}

\subsection{The dynamics}
\label{ssec:dyn}

Formally, in the canonical coordinates $(\bq,\bp)$, the dynamics should be
\begin{subequations}
\label{eq:dyn}
\begin{align}
    \dot q_i &= p_i, \\ 
    \dot p_i &= -\sum_{{\Set} \Subset \zz^\nu} \partial_{q_i} W_{\Set}(\bq).
\end{align}
\end{subequations}
As is customary (see e.g.~\cite{LLL77}), we tackle existence and uniqueness of solutions by severing the dynamics at the boundary of~$a$-elementary boxes:\footnote{We consider a slight variant of~\cite[Eq.~(9)]{LLL77}, motivated by Theorem~\ref{thm:conservation-entropy}. The original condition $\Set\supseteq \{i\}$ for $i\in\Lambda(a)$, without the constraint $\Set\subseteq \Lambda(a)$, would require the additional assumption~(\ref{it:F}) in Lemma~\ref{lem:existUnique}.}
\begin{subequations}
\label{eq:a-severed-dyn}
\begin{align}
    \dot q^a_i &= p^a_i, \\ 
    \dot p^a_i &= -\sum_{\substack{\Set \subseteq  \Lambda(a)}} \partial_{q_i} W_{\Set}(\bq^a),
\end{align}
\end{subequations}
for $i \in \Lambda(a)$, the existence and uniqueness theory for those being standard. 

\begin{lemma}[Existence and uniqueness of time evolution for the severed dynamics]\label{lem:existUnique}
    Let $\Lambda(a) \Subset \zz^\nu$ be an $a$-elementary box, and consider the severed dynamics~\eqref{eq:a-severed-dyn} with initial conditions $(\bq^a(0),\bp^a (0))\in \prod_{i\in\Lambda(a)}T^*X_i$.
    Under assumptions (\ref{it:R}), (\ref{it:P1})--(\ref{it:P3})\footnote{As far as~(\ref{it:P3}) is concerned, we actually only use the compactness of the sublevel sets.}, (\ref{it:D1}) and (\ref{it:TI}), the system has a unique global-in-time solution $(\bq^a(t),\bp^a (t))\in C^1(\rr,\bigtimes_{i \in \Lambda(a)} T^*X_i)$. Moreover, there exists a bounded, translation-invariant band-matrix $A$ on $\ell^2(\zz^\nu)$, such that these solutions satisfy
    \begin{equation}
    \label{eq:a-priori-in-box}
        \nieni(\bq^a(t),\bp^a(t)) \leq \sum_{j \in \Lambda(a)}  [\Exp{tA}]_{i,j} \nienj(\bq^a(0),\bp^a(0))
    \end{equation}
    for every $a\in\nn$, $i \in \Lambda(a)$ and $t\geq 0$.
\end{lemma}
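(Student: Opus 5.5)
Local existence and uniqueness of a $C^1$ solution of~\eqref{eq:a-severed-dyn} is standard: the phase space $\bigtimes_{i\in\Lambda(a)}T^*X_i$ is finite-dimensional and, by~(\ref{it:R}), the vector field $(\bp,-\sum_{\Set\subseteq\Lambda(a)}\partial_{q_i}W_\Set)$ is $C^1$, hence locally Lipschitz, so Picard--Lindel\"of applies. Global existence then follows once I know the solution cannot leave every compact set in finite time. I will get this from the a priori bound~\eqref{eq:a-priori-in-box} itself, so the main work is a differential inequality for the vector $(\nieni(t))_{i\in\Lambda(a)}$ on the maximal interval of existence.

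I differentiate $\nieni$ along the flow. Since $\dot q_i=p_i$, the on-site force cancels the time derivative of $W_{\{i\}}$:
\[
\frac{\dd}{\dd t}\nieni = p_i\cdot\dot p_i + \partial_{q_i}W_{\{i\}}\cdot p_i = -p_i\cdot \sum_{\substack{\Set\subseteq\Lambda(a)\\ \Set\supsetneq\{i\}}}\partial_{q_i}W_\Set(\bq^a).
\]
By Cauchy--Schwarz, $|x\cdot y|\le \tfrac12|x|^2+\tfrac12|y|^2$ and~(\ref{it:D1}) with $\Lambda=\Lambda(a)$, this is at most
\[
K_{\{i\}} + \tfrac{C_1}{2}\sum_{\substack{j\in\Lambda(a)\\ \gamma(i,j)\le1}}W_{\{j\}}
\le \sum_{j\in\Lambda(a)} A_{i,j}\,\nienj,
\]
where $A_{i,j}:=1+\tfrac{C_1}{2}$ if $i=j$, $A_{i,j}:=\tfrac{C_1}{2}$ if $\gamma(i,j)=1$, and $A_{i,j}:=0$ otherwise. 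Here I use~(\ref{it:P1}), so that $K_{\{i\}},W_{\{j\}}\le\nienj$. By~(\ref{it:F}) and~(\ref{it:TI}), $\gamma(i,j)=1$ depends only on $i-j$ and forces $|i-j|\le D$. So $A$ is a translation-invariant band matrix with nonnegative entries and boundedly many nonzero entries per row, hence bounded on $\ell^2(\zz^\nu)$. Note that $A$ does not depend on $a$.

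Next I integrate the componentwise inequality $\dot{\mathbf E}\le A_a\mathbf E$, where $\mathbf E=(\nieni)_{i\in\Lambda(a)}$ and $A_a$ is the restriction of $A$ to $\Lambda(a)$. Since $A_a$ has nonnegative entries, $\Exp{tA_a}$ is entrywise nonnegative, which gives the comparison principle. Concretely, set $\mathbf v(t)=\Exp{-tA_a}\mathbf E(t)$, so that
\[
\frac{\dd}{\dd t}\left(\Exp{(t-s)A_a}\mathbf E(s)\right)\ \text{(in } s\text{)} = \Exp{(t-s)A_a}\bigl(\dot{\mathbf E}(s)-A_a\mathbf E(s)\bigr)\le 0
\]
componentwise. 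Integrating in $s$ from $0$ to $t$ gives $\mathbf E(t)\le \Exp{tA_a}\mathbf E(0)$. Finally $[\Exp{tA_a}]_{i,j}\le[\Exp{tA}]_{i,j}$ for $i,j\in\Lambda(a)$, because the power series of $\Exp{tA}$ has nonnegative terms dominating those of $\Exp{tA_a}$. This yields~\eqref{eq:a-priori-in-box} for $t\ge0$.

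Global existence follows from this bound. On any bounded time interval, each $\nieni$ is bounded, so $|p_i|$ is bounded and $q_i$ stays in a sublevel set $L_i(z)$, which is compact by~(\ref{it:P3}). The trajectory therefore stays in a compact set, which rules out finite-time blow-up. For negative times I use time-reversal symmetry $(\bq,\bp)\mapsto(\bq,-\bp)$, which leaves $\nieni$ invariant. Uniqueness is part of Picard--Lindel\"of.

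The main obstacle is the sign/comparison step: it requires the nonnegativity of $A$ and the use of~(\ref{it:P1}) to dominate $K_{\{i\}}$ and $W_{\{j\}}$ by $\nienj$. I expect the rest to be routine.
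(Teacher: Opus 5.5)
Your proof is correct and follows the paper's argument step for step: you differentiate $\nieni$ along the severed flow, apply Cauchy's inequality with (\ref{it:D1}), integrate the linear differential inequality, and use compact sublevel sets to continue the solution. You are also more careful than the paper on three points it glosses over: the comparison principle via entrywise positivity of $\Exp{(t-s)A_a}$, the passage from the box matrix $A_a$ to $A$ on $\ell^2(\zz^\nu)$ by entrywise domination of the exponential series, and negative times via time reversal.

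The only caveat is that you invoke (\ref{it:F}), which is not among the lemma's listed hypotheses. It plays no role in existence and uniqueness in the box, but some finite-range condition is genuinely needed for $A$ to be a bounded band matrix, and the paper's own choice of $A$ is banded only under (\ref{it:F}) as well.
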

\begin{proof}
    We follow the standard strategy, emphasizing the a priori bounds that will be useful for the infinite-volume dynamics.
    Local (short time) existence and uniqueness follows from the Cauchy--Lipschitz theorem in finite dimension, thanks to the regularity assumption~(\ref{it:R}).
        
    To show that the solution cannot blow up in finite time, we establish a minor variant of an a priori estimate from~\cite[\S{2}]{LLL77}.
    Let $\mathcal{L}^a_i(t)\coloneqq \nieni(\bq^a(t),\bp^a(t)) $ for all $i\in\Lambda(a)$ and let $\mathcal L ^a$ the corresponding finite array.
    A straightforward computation using~\eqref{eq:a-severed-dyn} gives
    \begin{align*}
        \frac{\dd}{\dd t} \mathcal{L}^a_i(t) 
        & =-p_i^a(t)\sum_{\substack{\Set \subseteq  \Lambda(a)\\
    \Set\supsetneq \{i\}}} \partial_{q_i} W_{\Set}(\bq^a(t)).
    \end{align*}
    Using Cauchy's inequality and~(\ref{it:D1}), we then obtain
    \begin{align*}
    \frac{\dd}{\dd t} \mathcal{L}^a_i(t) 
        & \leq K_{\{i\}}^a +\frac 1 2 \left| \sum_{\substack{\Set \subseteq  \Lambda(a)\\
       \Set\supsetneq \{i\}}} \partial_{q_i} W_{\Set}(\bq^a(t))\right|^2\\
        & 
        \leq\sum_{j\in\Lambda(a)}A_{i,j}  \mathcal{L}^a_j(t) ,
    \end{align*}
    where $A_{i,j}\coloneqq\max\{1,C_1\}$ whenever $i,j\in\Lambda(a)$ with $\gamma(i,j)\leq 1$, and $A_{i,j}\coloneqq 0$ otherwise.
    Therefore, with $A$ the linear induced by $(A_{i,j})_{i,j \in \Lambda(a)}$, we have
    \begin{equation}
    \label{eq:a-priori-est}
        \mathcal{L}^a_i(t) \leq \left(\Exp{t A}\mathcal{L}^a(0)\right)_i.
    \end{equation}
    Hence, $\mathcal{L}^a_i(t) < \infty$ for all finite $t$.
    Since $W_{\{i\}}(\bq^a(t))\leq \mathcal{L}^a_i(t) $ by~(\ref{it:P1}), $\bq^a(t)$ stays in a compact set for all $t\in(0,\infty)$  by~(\ref{it:P3}). The same directly follows for $\bp^a(t)$.
    Since the solution $(\bq^a(t),\bp^a(t))$ remains bounded, it can be continued beyond $t$. According to the local theory, uniqueness still applies.
\end{proof}

\begin{lemma}[Existence and uniqueness of time evolution]
\label{lem:dyn-bounds}
    Suppose that (\ref{it:F}), (\ref{it:R}), (\ref{it:P1})--(\ref{it:P4}), (\ref{it:D1})--(\ref{it:D3}) and  (\ref{it:TI}) hold and let $r \in (0,2)$. For every initial condition $(\bq,\bp) \in \mathfrak{B}_{r}$ and every time interval~$[0,T]$, there exists a unique solution to~\eqref{eq:dyn} that remains in that~$\mathfrak{B}_{r}$. 
    
    Moreover, there exists a constant $\alpha$ with the following property: for every initial condition $(\bq,\bp) \in \mathfrak{B}_{r,C}$,
    the above solution $t\mapsto (\bq(t),\bp(t))$ to~\eqref{eq:dyn} satisfies
    \begin{align*}
        \nieni(\bq(t),\bp(t))
        \leq C \Exp{\alpha t} (1+|i|)^r
    \end{align*}
    and there exists a continuous, monotone function $t\mapsto c_t$ such that, whenever $i \in \Lambda(a)$ and $t\geq 0$, we have
   \begin{align*}
    |q_i^a(t) - q_i(t)| 
    &\leq  \frac{c_t^{\gamma+1}}{(2\gamma)!} (1+|i|+\gamma\irange)^{r(\gamma+1)}\\
    |p_i^a(t) - p_i(t)|
    &\leq  \frac{c_t^{{\gamma+1}}}{(2{(\gamma-1)})!} (1+|i|+\gamma\irange)^{r(\gamma+1)}
    \end{align*}
    where $\gamma \leq \gamma(i,\Lambda(a)^\complement)$ and $t\mapsto  (\bq^a(t),\bp^a(t))$ is the corresponding solution  to~\eqref{eq:a-severed-dyn}.
\end{lemma}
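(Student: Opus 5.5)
The plan is the standard Lanford--Lebowitz--Lieb scheme: construct the infinite-volume solution as the limit of the severed solutions $(\bq^a(t),\bp^a(t))$ of~\eqref{eq:a-severed-dyn} as $a\to\infty$, prove the limit solves~\eqref{eq:dyn}, and obtain uniqueness from the same comparison estimates.

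\textbf{Step 1: uniform energy bound.} Starting from~\eqref{eq:a-priori-in-box} of Lemma~\ref{lem:existUnique}, for $(\bq,\bp)\in\mathfrak{B}_{r,C}$ we have
\[
\nieni(\bq^a(t),\bp^a(t))\leq C\sum_j [\Exp{tA}]_{i,j}(1+|j|)^r .
\]
The matrix $A$ is a translation-invariant band matrix with range $\irange$. Hence $[A^n]_{i,j}$ vanishes unless $|i-j|\leq n\irange$, and $\sum_j [A^n]_{i,j}\leq \|A\|_{\infty}^n$. Using $(1+|j|)\leq(1+|i|)(1+|i-j|)$ and summing the exponential series, the sum is bounded by $C\Exp{\alpha t}(1+|i|)^r$. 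Here $\alpha$ depends only on $A$, $\irange$, $r$ and $\nu$, because $\sum_n t^n\|A\|^n (1+n\irange)^r/n!$ is still exponential in $t$. This bound is uniform in $a$. Through~(\ref{it:P4}) it also controls $|q_i^a(t)|$, and it controls $|p_i^a(t)|\leq \sqrt{2\nieni}$.

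\textbf{Step 2: Cauchy estimate.} I would compare the severed solutions for $a$ and $a'>a$ (or a severed solution with any solution in $\mathfrak{B}_r$). Set $\delta_i(t)=|q^a_i-q^{a'}_i|+|p^a_i-p^{a'}_i|$. For $i$ deep inside $\Lambda(a)$, the two force fields differ only through the Lipschitz dependence on neighbouring coordinates. By~(\ref{it:D3}), the relevant Lipschitz constant (the Hessian entries along the segment, which stays in the region where $W_{\{j\}}$ is bounded thanks to the convexity condition~(\ref{it:P3})) is bounded by $C_3$ times the local pinning energy, i.e.\ by $c\,\Exp{\alpha t}(1+|i|)^r$ from Step 1. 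This gives the integral inequality
\[
\delta_i(t)\leq \int_0^t\Big(|p\text{-difference}| + c_s(1+|i|+\irange)^r\sum_{\gamma(i,j)\leq1}\delta_j(s)\Big)\dd s .
\]
Sites with $\gamma(i,\Lambda(a)^\complement)=0$ get the crude bound $\delta_j\lesssim c_t(1+|j|)^{r}$ from Step 1. I would then iterate the inequality $\gamma$ times, moving one interaction step toward the boundary at each iteration. Each $q$-step costs two time integrations ($\dot q=p$, $\dot p=F$), which produces the $t^{2\gamma}/(2\gamma)!$ factors, and the growth factors accumulate as $(1+|i|+\gamma\irange)^{r(\gamma+1)}$. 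This yields exactly the stated bounds on $|q_i^a-q_i|$ and $|p_i^a-p_i|$, with a shifted factorial for $p$ because it has one fewer integration.

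\textbf{Step 3: conclusion.} Since $r<2$, we have $(1+|i|+\gamma\irange)^{r(\gamma+1)}/(2\gamma)!\to0$ as $\gamma\to\infty$ (Stirling: $\gamma^{r\gamma}$ against $\gamma^{2\gamma}$). Hence $(\bq^a,\bp^a)$ is Cauchy locally uniformly in $t\in[0,T]$ at each fixed site. The limit satisfies the integral form of~\eqref{eq:dyn} by dominated convergence, since the forces depend on finitely many coordinates by~(\ref{it:F}) and converge. The limit inherits the bound of Step 1, so it stays in $\mathfrak{B}_r$. Uniqueness follows by running the same iteration between two solutions in $\mathfrak{B}_r$, which drives the difference to zero.

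The main obstacle is Step 2. It requires bookkeeping of the iterated Gronwall-type inequality to get the precise factorial and polynomial factors, and it requires justifying the Lipschitz bound on forces via~(\ref{it:D3}) along segments. That justification is exactly where~(\ref{it:P3}) enters, to keep the convex interpolation inside a controlled sublevel set.
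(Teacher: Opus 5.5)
Your overall architecture matches the paper's: a uniform-in-$a$ a priori bound from the band matrix, Lipschitz control of the forces via (D3) and the convexity condition along interpolating paths, iteration toward the boundary of the box, Stirling, passage to the limit, and uniqueness by the same estimate. However, Step~2 as you set it up does not give the factorial you claim.

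With $\delta_i=|q^a_i-q^{a'}_i|+|p^a_i-p^{a'}_i|$, your inequality is first order. A single time integration already takes you from $\delta_i$ to $\delta_j$ for a neighbour $j$, because $\delta_j$ contains the position difference that enters the force on site $i$. The momentum term at site $i$ is just the $j=i$ summand. Iterating $\gamma$ times therefore gives $t^{\gamma}/\gamma!$ against $(1+|i|+\gamma\irange)^{r(\gamma+1)}$, i.e.\ $\exp\bigl(r\gamma\log\gamma-\gamma\log\gamma+O(\gamma)\bigr)$. This tends to zero only when $r<1$.

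Your remark that each step ``costs two time integrations'' names the right mechanism, but it is incompatible with merging $q$- and $p$-differences. A neighbour's position difference enters the derivative of $p_i^a-p_i^{a'}$ and never that of $q_i^a-q_i^{a'}$, and this is exactly what the sum $\delta_i$ forgets. The loss matters: the paper needs $r$ arbitrarily close to $2$. For $\nu=2,3$, states in $\mathcal{I}_2$ are only guaranteed to live on $\mathfrak{B}_r$ for $r>\nu/\zeta$, with $\zeta$ possibly just above $\max\{1,\nu/2\}$.

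The repair, which is what the paper does, is to iterate on position differences alone. Write $G_i^a=\sum_{\Delta\subseteq\Lambda(a)}\partial_{q_i}W_\Delta$. The forces depend only on positions and both solutions start from the same data, so
$q_i^a(t)-q_i^{a'}(t)=-\int_0^t(t-s)\,[G_i^a(\bq^a(s))-G_i^{a'}(\bq^{a'}(s))]\,\mathrm{d}s$.
For $Q_i=|q_i^a-q_i^{a'}|$ this yields a closed second-order inequality:
$Q_i(t)\le\int_0^t(t-s)\,L(s)\,(1+|i|+\irange)^r\sum_{\gamma(i,j)\le1}Q_j(s)\,\mathrm{d}s$,
where $L(s)$ is the (D3) Hessian bound supplied by your Step~1. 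Iterating this $\gamma$ times produces
$\int_0^t(t-t_\gamma)\int_0^{t_\gamma}(t_\gamma-t_{\gamma-1})\cdots\int_0^{t_2}(t_2-t_1)\,\mathrm{d}t_1\cdots\mathrm{d}t_\gamma=t^{2\gamma}/(2\gamma)!$.
The remaining $Q_j$ are then bounded crudely by $|q_j|\le C_0W_{\{j\}}$ and your Step~1.

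The momentum estimate comes from one integration of the force difference,
$p_i^a(t)-p_i^{a'}(t)=-\int_0^t[G_i^a(\bq^a(s))-G_i^{a'}(\bq^{a'}(s))]\,\mathrm{d}s$,
fed with the position estimate at depth $\gamma-1$. This is where the shifted factorial comes from. With that change, your Step~3 goes through as written.
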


\begin{proof}
    First, let us exhibit an a priori estimate along the lines of~\eqref{eq:a-priori-in-box}.
    To this end, we consider a putative solution $t \mapsto (\bq^{\#}(t),\bp^{\#}(t))$ to~\eqref{eq:dyn} 
    or~\eqref{eq:a-severed-dyn},
    let $\mathcal{L}^{\#}_i(t) := \nieni(\bq^{\#}(t),\bp^{\#}(t))$, 
    and let $\mathcal{L}^{\#}$ be the corresponding infinite array. 
    Here, the symbol ``$\#$'' is used as a placeholder for either some $a \in \nn$ or nothing (i.e.\ $a=\infty$).
    It follows from the same argument as for~\eqref{eq:a-priori-in-box} that
    \begin{equation}
    \label{eq:a-priori-from-G}
        \|\mathcal{L}^{\#}(t)\|_r \leq \Exp{t\|A\|_{\textnormal{op},r}} \|\mathcal{L}^{\#}(0)\|_r,
    \end{equation}
    where, by~(\ref{it:F}), $A$ is a bounded operator on the space~$(\mathfrak{A}_{r},\|\,\cdot\,\|_r)$ described in Section~\ref{ssec:spaces}.
    
    The basic idea is the following. As in~\cite[\S{2}]{LLL77}, 
    Lemma~\ref{lem:existUnique} and the a priori estimate~\eqref{eq:a-priori-from-G} imply existence of a solution in~$\mathfrak{B}_r$ to~\eqref{eq:dyn} for any initial condition in~$\mathfrak{B}_r$ (this will actually be reproved below).
    Thanks to (\ref{it:D2})--(\ref{it:D3}), if $r>0$ is small enough, then for every $C>0$, the assignment $\zz^\nu \ni i \mapsto C(1+(|i|))^r$ is a ``sequence of uniqueness'' in the sense of~\cite[\S{3}]{LLL77}. Hence, provided that $r$ is small enough, we have existence and uniqueness of a solution in~$\mathfrak{B}_r$ to~\eqref{eq:dyn} for any initial condition in~$\mathfrak{B}_r$ (this will actually be reproved below).

    {
    Fix $i\in\zz^\nu$. We first want to show that the sequence $(q_i^a (t))_{a \in \nn}$ is Cauchy in $a$. When we compare the solutions to the dynamics severed at scales $a\leq b$, with the same initial conditions in~$\mathfrak{B}_{r,C}$, we find
\begin{align*}
    | q_i^a(t) -  q_i^b(t)|
    &\leq \int_0^t (t-t_1) \left|\sum_{\Delta \ni i} \partial_{q_i} W(\bq^{a}(t_1)) - \partial_{q_i} W(\bq^{b}(t_1))\right| \dd t_1.
\end{align*}
To control the difference in forces, we interpolate between $\bq^{a}(t_1)$ and $\bq^{b}(t_1)$ using a chain of one-dimensional paths along (finitely many) individual sites. Without loss of generality, we temporarily and arbitrarily order the sites $i\in\zz^\nu$. For each site $j$, define a path $ u \mapsto \mathbf{x}^{(a,b,t_1,j)}(u)$ that linearly interpolates between $\bq^{a}(t_1)$ and $\bq^{b}(t_1)$ at site $j$ as $u$ ranges over $[0,1]$, while holding other sites fixed:
\begin{align*}
    [\mathbf{x}^{(a,b,t_1,j)}(u)]_{j'} = 
    \begin{cases}
        q_{j'}^{a}(t_1) & j' < j, \\
        (1-u)q_{j}^{b}(t_1) + u q_{j'}^{a}(t_1) & j' = j, \\
        q_{j'}^{b}(t_1) & j' > j. \\
    \end{cases}
\end{align*}
Using the fundamental theorem of calculus and~(\ref{it:F}), we obtain
\[
\partial_{q_i} W(\bq^{a}(t_1)) - \partial_{q_i} W(\bq^{b}(t_1))= \sum_{j:\gamma(i,j)\leq 1}(q^{a}_j(t_1)-q^{b}_j(t_1))\int_0^1 \partial_{q_j}\partial_{q_i}W(\mathbf{x}^{(a,b,t_1,j)}(u))\dd u,
\]
and by Assumption~(\ref{it:D3}) we bound the second derivatives of the interaction energy in terms of local on-site potentials 
\begin{align*}
    \left|\sum_{\Delta \ni i} \partial_{q_i} W(\bq^{a}(t_1)) - \partial_{q_i} W(\bq^{b}(t_1))\right|
    &\leq \sum_{j : \gamma(i,j)\leq 1} |q_j^a(t_1) - q_j^b(t_1)|\int_0^1 \left|\sum_{\Delta \ni i} \partial_{q_j}\partial_{q_i}W_\Delta(\mathbf{x}^{(a,b,t_1,j)}(u))\right| \dd u \\
    &\leq \sum_{j : \gamma(i,j)\leq 1} |q_j^a(t_1) - q_j^b(t_1)| \sup_{u \in [0,1]} C_3 \sum_{j' : \gamma(i,j') \leq 1} W_{\{j'\}}(\mathbf{x}^{(a,b,t_1,j)}(u)).
\end{align*}
Now using the a priori estimate on $\bq^a(t_1)$ and $\bq^b(t_1)$, the definition of $\mathbf{x}^{(a,b,t_1,j)}$, the convexity-like property~(\ref{it:P3}) to assert that the path remains in a region where the on-site potential is controlled, and the finite-range assumption~(\ref{it:F}), we find
\begin{align*}
    \left|\sum_{\Delta \ni i} \partial_{q_i} W(\bq^{a}(t_1)) - \partial_{q_i} W(\bq^{b}(t_1))\right|
    &\leq \sum_{j : \gamma(i,j)\leq 1} |q_j^a(t_1) - q_j^b(t_1)| C(t) \max_{j' : \gamma(i,j') \leq 1} (1+|j'|)^r\\
    &\leq C(t) (1+|i|+D)^r\sum_{j : \gamma(i,j)\leq 1} |q_j^a(t_1) - q_j^b(t_1)|,
\end{align*}
where $C(t) = C_3 C \Exp{t\|A\|_{\textnormal{op},r}} (2D+1)^\nu$ with $C$ from the a priori energy bound~\eqref{eq:B-en}.

\noindent
Iterating $\gamma$ times, where $\gamma \leq \gamma(i,\Lambda(a)^\complement)$, and using Assumption~(\ref{it:P4}) with~\eqref{eq:a-priori-from-G}, we find
\begin{align*}
    |q_i^a(t) - q_i^b(t)|
    &\leq C(t)^\gamma (1+|i|+D)^r (1+|i|+2D)^r \dotsb (1+|i|+(\gamma-1)D)^r (1+|i|+\gamma D)^r \\
    &\qquad \int_0^t (t-t_\gamma)\int_0^{t_\gamma} (t_\gamma-t_{\gamma-1}) \dotsb \int_0^{t_2} (t_2-t_1) {\sup_{j_\gamma : \gamma(i,{j_\gamma}) \leq \gamma} |q_{j_\gamma}^a(t_1) - q_{j_\gamma}^b(t_1)|} \dd t_1 \dd t_2 \dotsc \dd t_\gamma \\
    &\leq C(t)^\gamma (1+|i|+\gamma D)^{r\gamma}  {2 C_0 C \Exp{\|A\|_{\textnormal{op},r} t} (1+|i|+\gamma D)^r} \frac{t^{2\gamma}}{(2\gamma)!}.
\end{align*}
With $C_0,C, \|A\|_{\textnormal{op},r},D,|i|,r$ being fixed, and with $t$ in a bounded interval~$[0,T]$, this last expression is $\exp\left( r\gamma \log(\gamma)-2\gamma \log(2\gamma)+O(\gamma)\right)$
as $\gamma \to \infty$, that is $a \to \infty$.} Since $r \in (0,2)$, we can deduce that $(q_i^a(\,\cdot\,))_{a \in \nn}$ is a Cauchy sequence in $C([0,T],X_i)$, and thus admits a limit $q_i^\infty(\,\cdot\,)$. Since $i$ was arbitrary, we conclude that this is the case for all~$i\in\zz^\nu$.

    Note that the above estimates only depend on the size of~$a$,~$t$ and~$|i|$. 
    Then, each limit $q_i^\infty(\,\cdot\,)$ satisfies
    \begin{align*}
        q_i^\infty(t) 
            &= \lim_{a\to\infty} \left(q_i^a(0) + t p_i^a(0) - \sum_{\Set \ni i}\int_0^t (t-t_1) \partial_{q_{i}}W_{\Set}(\bq^a(t_1)) \dd t_1 \right) \\
            &= q_i^\infty(0) + t p_i^\infty(0) - \sum_{\Set \ni i}\int_0^t (t-t_1) \partial_{q_{i}}W_{\Set}\left(\lim_{a\to\infty} \bq^a(t_1)\right) \dd t_1 \\
            &= q_i^\infty(0) + t p_i^\infty(0) - \sum_{\Set \ni i}\int_0^t (t-t_1) \partial_{q_{i}}W_{\Set}(\bq^\infty(t_1)) \dd t_1,
    \end{align*}
    i.e.\ defines a solution in~$\mathfrak{B}_r$. 
    
    Since the estimates used for two solutions to the system~\eqref{eq:a-severed-dyn} severed at different scales at least~$a$ also apply to two putative solutions in~$\mathfrak{B}_{r,C}$ to the original dynamics~\eqref{eq:dyn}, we can also deduce uniqueness. 

    Since these same estimates also apply to a solution to the system~\eqref{eq:a-severed-dyn} severed at scale~$a$ and the solution to the original dynamics~\eqref{eq:dyn}, we can also deduce the proposed bounds for the position variables.
    To bound the momentum variables, we use
    \begin{align*}
        p_i^a(t) - p_i^b(t)
        &= \int_0^t \sum_{\Set \ni i}  \partial_{q_{i}}W_{\Set}(\bq^b(t_1)) - \partial_{q_{i}}W_{\Set}(\bq^a(t_1))  \dd t_1
    \end{align*}
    and the above arguments; we find
    \begin{align*}
        |p_i^a(t) - p_i^b(t)|
        &\leq \frac{t^{2{(\gamma-1)+1}} C(t)^{\gamma}}{(2{(\gamma-1)})!} 2C_0 C \Exp{\|A\|_{\textnormal{op},r} t}(1+|i|+\gamma\irange)^{r(\gamma+1)}.
    \end{align*}
    The proposed bounds are equivalent up to relabelling of the function $t \mapsto c_t$.
\end{proof}

\begin{proposition}[Time-reversal-invariant dynamical group]
\label{prop:TRI}
    The equations~\eqref{eq:dyn} define a dynamical group $(\tau_t)_{t\in\rr}$ of maps $\tau_t : \Omega_2 \to \Omega_2$. Moreover, with $\vartheta(\bq,\bp) := (\bq,-\bp)$, we have $\tau_t^{-1} = \vartheta \circ \tau_t \circ \vartheta$ for all~$t \in \rr$. 
\end{proposition}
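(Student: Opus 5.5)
The plan is to build $\tau_t$ from Lemma~\ref{lem:dyn-bounds} and then obtain the group law and time-reversal symmetry from uniqueness.

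\emph{Definition for $t\ge 0$.} For $(\bq,\bp)\in\Omega_2$, pick $r\in(0,2)$ with $(\bq,\bp)\in\mathfrak{B}_r$. Lemma~\ref{lem:dyn-bounds} gives a unique solution of~\eqref{eq:dyn} that stays in $\mathfrak{B}_r$ on every $[0,T]$. Set $\tau_t(\bq,\bp)$ equal to its value at time $t$; the energy bound $\nieni \le C\Exp{\alpha t}(1+|i|)^r$ places it in $\mathfrak{B}_r\subseteq\Omega_2$. To make this well defined, note that a point in $\mathfrak{B}_r$ also lies in $\mathfrak{B}_{r'}$ for every $r'\ge r$. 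The solution found in $\mathfrak{B}_r$ therefore also stays in $\mathfrak{B}_{r'}$. Uniqueness in $\mathfrak{B}_{r'}$ then shows that the $r$- and $r'$-solutions coincide, so the definition does not depend on the choice of $r$. The solutions on $[0,T]$ and $[0,T']$ agree on the overlap, again by uniqueness, so they patch together into one solution on $[0,\infty)$.

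\emph{Negative times and reversibility.} If $t\mapsto(\bq(t),\bp(t))$ solves~\eqref{eq:dyn}, then so does $t\mapsto(\bq(-t),-\bp(-t))$, because the force depends only on $\bq$. The map $\vartheta$ preserves $\nieni$, hence each $\mathfrak{B}_{r,C}$. So for $t\ge0$ we define $\tau_{-t}:=\vartheta\circ\tau_t\circ\vartheta$. The function obtained by gluing this backward solution to the forward one at time $0$ is a $C^1$ solution of~\eqref{eq:dyn} on $\rr$. On any interval $[-T,T]$ it stays in some $\mathfrak{B}_{r,C'}$.

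\emph{Group law.} We need $\tau_{t+s}=\tau_t\circ\tau_s$. Since~\eqref{eq:dyn} is autonomous, $u\mapsto\tau_{s+u}(x)$ is a solution starting at $\tau_s(x)$. It stays in $\mathfrak{B}_r$ on bounded intervals, because the exponential bound holds uniformly on compact time sets. This is where the main obstacle lies: Lemma~\ref{lem:dyn-bounds} is stated for forward time from an initial datum, so uniqueness has to be applied on intervals $[-T,T]$ and for shifted initial times. The uniqueness estimate in its proof only uses the a priori bound $\|\mathcal{L}(t)\|_r\le C'$ on the interval together with the Duhamel formula. By time-reversal it applies equally backward, so uniqueness holds for solutions in $\mathfrak{B}_{r,C'}$ on any bounded interval containing the initial time. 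Consequently $\tau_{s+u}(x)=\tau_u(\tau_s(x))$ for all real $s,u$, including negative ones.

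\emph{Conclusion.} Taking $u=-s$ gives $\tau_{-s}\circ\tau_s=\mathrm{id}=\tau_s\circ\tau_{-s}$. Hence each $\tau_t$ is a bijection of $\Omega_2$ with $\tau_t^{-1}=\tau_{-t}=\vartheta\circ\tau_t\circ\vartheta$, which is the claimed identity for $t\ge0$. For $t<0$ the same identity follows by conjugating with $\vartheta$ and using $\vartheta^2=\mathrm{id}$.
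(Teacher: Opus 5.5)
Your proof is correct, but it takes a different route from the paper.

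The paper does not argue from uniqueness in infinite volume. It starts from the severed flows $\tau^a_t$. These are genuine flows of a finite-dimensional Hamiltonian system, so they satisfy $\tau^a_{t+s}=\tau^a_t\circ\tau^a_s$ and $(\tau^a_t)^{-1}=\vartheta\circ\tau^a_t\circ\vartheta$. The paper transfers these identities to $\tau_t$ via the convergence $\tau^a_t\to\tau_t$ of Lemma~\ref{lem:dyn-bounds}, together with the $\vartheta$-invariance of $\Omega_2$.

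You instead combine autonomy and reversibility of~\eqref{eq:dyn} with uniqueness in $\mathfrak{B}_r$, which is the standard ``uniqueness implies flow'' argument. This makes explicit two points that the paper's three-line sketch leaves implicit:
\begin{enumerate}
\item the meaning of $\tau_t$ for $t<0$, since Lemma~\ref{lem:dyn-bounds} is only formulated on $[0,T]$;
\item the passage to the limit in compositions. Pointwise convergence does not give this by itself, because the argument $\tau^a_s(x)$ of $\tau^a_t$ moves with $a$. Closing this step needs either stability of the severed dynamics in the initial datum uniformly in $a$, or exactly your uniqueness argument.
\end{enumerate}

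What the paper's route buys is a representation of $\tau_t$ as a pointwise limit of finite-volume Hamiltonian flows. This representation is what is used later, for the measurability of $\tau_t$ and for Liouville's theorem for $\tau^a_t$ in the proof of Theorem~\ref{thm:conservation-entropy}.

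Finally, the obstacle you flag is lighter than you fear, and you need not reopen the Duhamel estimate. Fix $u\le 0$ and consider, for $\sigma\ge 0$, the two curves $\sigma\mapsto\vartheta\,\tau_{s-\sigma}(x)$ and $\sigma\mapsto\vartheta\,\tau_{-\sigma}(\tau_s(x))=\tau_\sigma(\vartheta\,\tau_s(x))$. Both are forward solutions from $\vartheta\,\tau_s(x)$ that stay in $\mathfrak{B}_r$. So the forward uniqueness of Lemma~\ref{lem:dyn-bounds} applies as stated.
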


\begin{proof}
    The definition of the group is unambiguous from Lemma~\ref{lem:dyn-bounds} and the group property for $(\tau_t^a)_{t\in\rr}$ at fixed $a\in\nn$. 
    It is easy to see from the definitions that $\Omega_2$ is invariant under~$\vartheta$.
    Since the symmetry $(\tau_t^a)^{-1} = \vartheta \circ \tau_t^a \circ \vartheta$ holds for all~$t \in \rr$ given any fixed~$a \in \nn$, the result follows from Lemma~\ref{lem:dyn-bounds}.
\end{proof}

We may at times consider $\tau_t$ applied to a configuration in possible sets of measure 0 outside of~$\Omega_2$. To deal with this technicality, we set $\tau_t(\bq,\bp) = (\bq,\bp)$ for $(\bq,\bp) \notin \Omega_2$.

\section{Conservation of specific energy}
\label{sec:energy}

Given an extended state~$\mu$ and $t \in \rr$, we define $\mu_t\coloneqq \mu \circ \tau_t^{-1}$ where $t \mapsto \tau_t(\bq,\bp)$ is the solution with initial condition $(\bq,\bp)$ that was identified in Proposition~\ref{prop:TRI}. In particular, $\mu_0 = \mu$. 
The main result of this section is that the mean local energy-per-particle under $\mu_t$ is constant in~$t$. But first, we state and prove a lemma about preservation of finiteness. 

\begin{lemma}
\label{lem:pre-energy-cons}
    Suppose that  (\ref{it:F}), (\ref{it:R}), (\ref{it:P1})--(\ref{it:P4}), (\ref{it:D1})--(\ref{it:D3}) and  (\ref{it:TI}) hold. If $\mu$ is a translation-invariant state such that $\mathfrak{M}_\zeta(\mu)<\infty$ for some $\zeta > \tfrac 12 \nu$, then $\mathfrak{M}_{\xi}(\mu_t)<\infty$ for all $\xi \in (0,\zeta)$ and $t \in \rr$. Moreover, the growth is at most exponential in~$t$.
\end{lemma}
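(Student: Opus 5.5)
We need to bound $\int |E_0^{\rm n.i.}\circ\tau_t|^\xi\,\dd\mu$. The plan is to use the a priori estimate of Lemma~\ref{lem:existUnique} in its infinite-volume form, namely the linear differential inequality behind~\eqref{eq:a-priori-from-G}, applied pointwise in $i$. This yields
\[
    \nien(\tau_t(\bq,\bp)) \le \sum_{j\in\zz^\nu} [\Exp{tA}]_{0,j}\, \nienj(\bq,\bp)
\]
for $\mu$-almost every configuration (those in $\Omega_2$, by Lemma~\ref{lem:supp-on-Omega0} since $\zeta>\nu/2$ lets us pick $r\in(0,2)$ with $r>\nu/\zeta$), with nonnegative coefficients. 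To justify it, I would write the inequality for the severed dynamics, where~\eqref{eq:a-priori-in-box} holds with sums over $\Lambda(a)$, and pass to the limit $a\to\infty$ using the pointwise convergence $q^a_i(t)\to q_i(t)$ and $p^a_i(t)\to p_i(t)$ from Lemma~\ref{lem:dyn-bounds}. Continuity of $\nien$ takes care of the left-hand side. The right-hand side increases to the full series, which converges because $[\Exp{tA}]_{0,j}\le (\|A\|^{k}t^k/k!)$-type bounds decay superexponentially in $|j|$ while $\nienj\le C(1+|j|)^r$. For $t<0$, I would use time reversal (Proposition~\ref{prop:TRI}) and the invariance of $\nien$ under $\vartheta$.

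The next step is to take the $\xi$-th moment. Since the weights $w_j:=[\Exp{tA}]_{0,j}$ are nonnegative and $\sum_j w_j = (\Exp{tA}\mathbf 1)_0 \le \Exp{t\|A\|_{\infty}}=:M_t$ (with $\|A\|_\infty$ the row-sum norm, finite by (\ref{it:F}) and (\ref{it:TI})), there are two cases. For $\xi\ge1$, write the sum as $M_t$ times a convex combination and apply Jensen:
\[
    \Bigl(\sum_j w_j \nienj\Bigr)^\xi \le M_t^{\xi-1}\sum_j w_j (\nienj)^\xi .
\]
For $\xi<1$, subadditivity of $x\mapsto x^\xi$ gives $\bigl(\sum_j w_j\nienj\bigr)^\xi\le\sum_j w_j^\xi(\nienj)^\xi$. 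In either case we integrate against $\mu$, use monotone convergence to exchange sum and integral, and use translation invariance $\int(\nienj)^\xi\dd\mu=\mathfrak M_\xi(\mu)\le 1+\mathfrak M_\zeta(\mu)<\infty$. For $\xi\ge1$ this gives $\mathfrak M_\xi(\mu_t)\le M_t^\xi\,\mathfrak M_\xi(\mu)$, which is exponential in $t$. For $\xi<1$ we need $\sum_j w_j^\xi<\infty$ with exponential growth. This follows because $A$ is a band matrix, so $[\Exp{tA}]_{0,j}\le \sum_{k\ge |j|/D}(\|A\|t)^k/k!$, which is summable to any positive power. A crude bound of the form $\sum_j w_j^\xi\le C_\xi\,\Exp{c_\xi t}$ is then obtained by comparing with $\Exp{t\|A\|}$ on the finitely many sites $|j|\lesssim t$ and with the superexponential tail elsewhere. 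Monotonicity of moments under (\ref{it:P1}) could alternatively handle $\xi<1$ by reducing it to $\xi=1$ when $\zeta\ge1$, but this is not available when $\zeta<1$, so I keep the subadditivity route.

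The main obstacle is the rigorous infinite-volume a priori inequality, i.e.\ passing from the severed bound~\eqref{eq:a-priori-in-box} to the full dynamics pointwise. I would handle it exactly as sketched: fix $i=0$ and use the convergence of severed solutions from Lemma~\ref{lem:dyn-bounds} on $\mathfrak B_{r,C}$, together with monotonicity in $a$ of the partial sums on the right. Measurability of $\nien\circ\tau_t$ follows from it being a pointwise limit of continuous functions of finitely many coordinates. The restriction to $\xi<\zeta$, rather than $\xi=\zeta$, only gives slack that this argument does not actually need, so the proof is compatible with the statement.
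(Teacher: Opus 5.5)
Your proof is correct, and it takes a genuinely different route from the paper's.

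\textbf{The paper's route.} The paper uses Lemma~\ref{lem:dyn-bounds} only through its weighted sup-norm consequence, namely that $\tau_t$ maps $\mathfrak{B}_{r,C}$ into $\mathfrak{B}_{r,C\Exp{\alpha_r t}}$. It then observes that the event $\{(\nien\circ\tau_t)^\xi>\varepsilon\}$ lies in the complement of $\mathfrak{B}_{r,\varepsilon^{1/\xi}\Exp{-\alpha_r t}}$. It bounds the $\mu$-measure of that complement by a union bound over all sites together with Markov's inequality at order $\zeta$, and finally integrates in $\varepsilon$ via the layer-cake formula. Because the sup-norm forgets which site carries the energy, this costs two things:
\begin{enumerate}
    \item the summability condition $\zeta r>\nu$;
    \item the strict inequality $\xi<\zeta$, needed for integrability of $\varepsilon^{-\zeta/\xi}$ on $[1,\infty)$. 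As a result, the final constant degenerates as $\xi\uparrow\zeta$.
\end{enumerate}

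\textbf{Your route.} You keep the site-by-site comparison with the positivity-preserving semigroup $\Exp{tA}$. You transfer it to infinite volume through the limit $a\to\infty$ of the severed solutions. Translation invariance of $\mu$ then acts directly inside the moment computation, via Jensen for $\xi\ge1$ and subadditivity for $\xi<1$.

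The price is the extra limiting argument, which the paper never states in this pointwise form. When you write it out, note one detail: the matrix in the proof of Lemma~\ref{lem:existUnique} is the restriction $A^{(a)}$ of the infinite band matrix $A$ to $\Lambda(a)$. Entrywise nonnegativity gives $[\Exp{tA^{(a)}}]_{0,j}\le[\Exp{tA}]_{0,j}$ for $t\ge 0$, so your weights are uniform in $a$, as the monotone-limit step requires.

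What you gain is an estimate with no loss in the exponent. For $\xi\ge 1$ you get $\mathfrak{M}_\xi(\mu_t)\le\Exp{\xi\|A\|_\infty|t|}\,\mathfrak{M}_\xi(\mu)$, valid even at $\xi=\zeta$. The hypothesis $\zeta>\tfrac12\nu$ is then used only to guarantee $\mu(\Omega_2)=1$, that is, that the dynamics is defined $\mu$-almost surely. Your treatment of $t<0$, via $\tau_{-t}=\vartheta\circ\tau_t\circ\vartheta$ and the $\vartheta$-invariance of $\nien$, also makes explicit a case the paper leaves implicit.
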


\begin{proof}
    By Lemma~\ref{lem:supp-on-Omega0}, our assumption that $\mathfrak M_\zeta(\mu)<\infty$
    for some $\zeta > \tfrac 12 \nu$ implies that $\mu$ is supported on~$\mathfrak{B}_{r}$ for all $r$ close enough to 2.
    For every $t >0$, we know from Lemma~\ref{lem:dyn-bounds} that
    \begin{align*}
    (\bq,\bp) \in \mathfrak{B}_{r,C}
        &\implies (\bq(t),\bp(t)) \in \mathfrak{B}_{r,C\Exp{\alpha_r t}}.
    \end{align*}
    Conversely,
    \begin{align*}
        (\bq,\bp) \in \mathfrak{B}_r \text{ and } |\nien(\bq(t),\bp(t))|^\xi > \varepsilon
        &\implies 
        (\bq,\bp) \notin \mathfrak{B}_{r, \varepsilon^{\frac 1\xi} \Exp{-\alpha_r t}}.
    \end{align*}
    Therefore, by standard probabilistic arguments and the fact that $\tau_t^{-1}(\mathfrak{B}_r)=\mathfrak{B}_r$ by Theorem~\ref{lem:dyn-bounds}, 
    \begin{align*}
        \int_{\mathfrak{B}_r} |\nien|^\xi \dd\mu_t
        & =  \int_{\tau_t^{-1}(\mathfrak{B}_r)} |\nien\circ \tau_t|^\xi \dd\mu\quad\\
        & =\int_0^\infty \mu \left\{ (\bq,\bp) \in \mathfrak{B}_r : |\nien(\bq(t),\bp(t))|^\xi > \varepsilon \right\}\dd \varepsilon\\
        &\leq  1 + \int_1^\infty \mu((\mathfrak{B}_{r,\ \varepsilon^{\frac 1\xi} \Exp{-\alpha_r t}})^\complement) \dd \varepsilon  \\
        &\leq 1 + \int_1^\infty \sum_{{i} \in \zz^\nu} \frac{\int |\nieni|^\zeta \dd\mu}{(\varepsilon^{\frac 1\xi} \Exp{-\alpha_r t})^\zeta (1+|i|)^{\zeta r}} \dd \varepsilon \\
    &\leq 1 +   \Exp{\alpha_r\zeta t} \mathfrak{M}_\zeta(\mu) \left(\sum_{{i} \in \zz^\nu} \frac{1}{(1+|i|)^{\zeta r}}\right) \int_1^\infty  \varepsilon^{-{\frac{\zeta}{\xi}}} \dd \varepsilon.
    \end{align*}
    The right-hand side will be finite independently of~$t$, provided that $\zeta r > \nu$ and $\tfrac{\zeta}{\xi} > 1$. We conclude by taking $r \to 2$.
\end{proof}

\begin{remark}\label{rmk:MomSevDyn}
    Note that the conclusion of Lemma~\ref{lem:pre-energy-cons} also holds for the family of measures
    \begin{equation}\label{eq:mu_t_severed}
        \mu_t^a \coloneqq \mu \circ (\tau_t^a)^{-1},
    \end{equation}
    where $t \mapsto \tau_t^a(\bq, \bp)$ denotes the solution to the severed dynamics~\eqref{eq:a-severed-dyn} with initial condition $(\bq, \bp)$. That is, for any $\xi \in (0, \zeta)$ and $t \in \mathbb{R}$, we have $\mathfrak{M}_\xi(\mu_t^a) < \infty$, with exponential control in $t$, under the same assumptions as in Lemma~\ref{lem:pre-energy-cons}.
\end{remark}

\begin{corollary}
\label{cor:pre-energy-cons}
    Under the assumptions of Lemma~\ref{lem:pre-energy-cons},
    the set $\mathcal{I}_2$ is invariant under $\tau_t$ and
    \[ 
        \int E_0 \dd\mu_t < \infty
    \]
    for all $\mu \in \mathcal{I}_2$ and $t \in \rr$.
\end{corollary}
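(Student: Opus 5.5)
The plan is to show three things in turn: that $\mu_t$ is a translation-invariant probability measure, that its moment $\mathfrak{M}_\xi(\mu_t)$ is finite for some admissible exponent $\xi$, and that integrability of $E_0$ then follows from the results already established.

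\emph{Step 1 (translation invariance).} Proposition~\ref{prop:TRI} makes $\tau_t$ a well-defined map on $\Omega_2$. By the convention $\tau_t = \mathrm{id}$ off $\Omega_2$ it is defined everywhere, and it is measurable as a pointwise limit of the continuous severed flows (Lemma~\ref{lem:dyn-bounds}). Hence $\mu_t$ is a probability measure on $(\Omega,\mathcal{F})$.

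For translation invariance I use the identity $\tau_t \circ \mathsf{T}^*_i = \mathsf{T}^*_i \circ \tau_t$ on $\Omega_2$. Under (\ref{it:TI}), the right-hand side of~\eqref{eq:dyn} commutes with the shift, so $\mathsf{T}^*_i\circ\tau_t$ is also a solution of~\eqref{eq:dyn}, with initial condition $\mathsf{T}^*_i(\bq,\bp)$. Since $\mathfrak{B}_r$ is shift invariant, this solution lies in $\mathfrak{B}_r$, and the uniqueness part of Lemma~\ref{lem:dyn-bounds} identifies it with $\tau_t\circ \mathsf{T}^*_i$. The same identity holds trivially off $\Omega_2$, because $\Omega_2$ is itself shift invariant. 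Then
\[
\mu_t\circ(\mathsf{T}^*_i)^{-1} = \mu\circ(\mathsf{T}^*_i)^{-1}\circ\tau_t^{-1} = \mu_t .
\]

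\emph{Step 2 (moment bound).} Since $\mu \in \mathcal{I}_2$, there is some $\zeta > \max\{1,\tfrac12\nu\}$ with $\mathfrak{M}_\zeta(\mu)<\infty$. Pick any $\xi$ with $\max\{1,\tfrac12\nu\} < \xi < \zeta$; such a $\xi$ exists because the interval is open.

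Lemma~\ref{lem:pre-energy-cons} applies, since $\zeta > \tfrac12 \nu$, and gives $\mathfrak{M}_\xi(\mu_t) < \infty$ for all $t\in\rr$. Negative times are covered by the same lemma as stated; alternatively, one can use $\tau_{-t} = \vartheta\circ\tau_t\circ\vartheta$ together with the $\vartheta$-invariance of $\nien$.

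Combining Steps 1 and 2, $\mu_t \in \mathcal{I}_2$. This proves the invariance of $\mathcal{I}_2$.

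\emph{Step 3 (energy).} Corollary~\ref{cor:important-I2} applied to $\mu_t$ gives $E_0\in L^{\xi'}(\dd\mu_t)$ for some $\xi'>1$. Since $\mu_t$ is a probability measure, this implies $E_0 \in L^1(\dd\mu_t)$, and in particular $\int E_0 \dd\mu_t < \infty$.

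The main obstacle is the commutation of $\tau_t$ with translations. Everything there rests on uniqueness within $\mathfrak{B}_r$, and the choice of $r$ must be kept consistent between the shifted and unshifted initial data. This works because the weight $(1+|i|)^r$ changes under a shift only by a multiplicative constant, so $\mathfrak{B}_r$ is preserved even though $\mathfrak{B}_{r,C}$ is not.
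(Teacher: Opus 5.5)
Your proposal is correct and follows the paper's route: choose $\xi \in (\max\{1,\tfrac12\nu\},\zeta)$, apply Lemma~\ref{lem:pre-energy-cons} to get $\mu_t\in\mathcal{I}_2$, then deduce integrability of $E_0$ from Lemma~\ref{lem:ien-from-nien}, which you reach through Corollary~\ref{cor:important-I2}. Your Step~1 proves that $\mu_t$ is translation invariant, using $\tau_t\circ\mathsf{T}^*_i=\mathsf{T}^*_i\circ\tau_t$ and uniqueness in $\mathfrak{B}_r$; this is a worthwhile addition, because membership in $\mathcal{I}_2$ requires it and the paper's proof takes it for granted.
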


\begin{proof}
    Let $t \in \rr$ be arbitrary. If $\mu \in \mathcal{I}_2$, then there exists $\zeta > \max\{1,\tfrac 12 \nu\}$ such that $\mathfrak{M}_\zeta(\mu) < \infty$. By Lemma~\ref{lem:pre-energy-cons}, this implies $\mathfrak{M}_{\xi}(\mu_t)<\infty$ for some $\xi > \max\{1,\tfrac 12 \nu\}$. Therefore, $\mu_t \in \mathcal{I}_2$. Integrability of $E_0$ with respect to $\mu_t$ then follows from Lemma~\ref{lem:ien-from-nien}.
\end{proof}

\begin{proposition}[Conservation of energy under time evolution]
\label{prop:energy-conservation}
    Under the assumptions of Lemma~\ref{lem:pre-energy-cons}, if $\mu\in\mathcal{I}_2$, then 
    \begin{align*}
        \int E_0 \dd\mu = \int E_0 \circ \tau_t \dd\mu
    \end{align*}
    for all~$t \in \rr$.
\end{proposition}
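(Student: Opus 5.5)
The plan is to compute $\frac{\dd}{\dd t}\int E_0\circ\tau_t\dd\mu$ and show it vanishes. Time derivatives of $E_0\circ\tau_t$ are local energy currents, and translation invariance makes the net current through a single site have zero mean. To keep everything rigorous, I would first pass through the severed dynamics $\tau^a_t$ and the box energy $H_{\Lambda(a)}$, which is exactly conserved by~\eqref{eq:a-severed-dyn}. That dynamics is Hamiltonian with Hamiltonian $H_{\Lambda(a)}$, because its forces only involve $\Set\subseteq\Lambda(a)$.

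Concretely, I would split the argument into three steps. By Corollary~\ref{cor:pre-energy-cons}, $\mu_t\in\mathcal I_2$, so Lemma~\ref{lem:limitHLambda} applies to both $\mu$ and $\mu_t$:
\[
\int E_0\dd\mu_t=\lim_{a\to\infty}\frac{1}{\#\Lambda(a)}\int H_{\Lambda(a)}\circ\tau_t\dd\mu .
\]
Since $H_{\Lambda(a)}\circ\tau^a_t=H_{\Lambda(a)}$ pointwise, it suffices to show
\[
\frac{1}{\#\Lambda(a)}\int\bigl|H_{\Lambda(a)}\circ\tau_t-H_{\Lambda(a)}\circ\tau^a_t\bigr|\dd\mu\to0 .
\]
To do this I would decompose $H_{\Lambda(a)}=\sum_{i\in\Lambda(a)}\tilde E_i$, where $\tilde E_i$ collects $K_{\{i\}}$ and the terms $W_\Set/\#\Set$ with $i\in\Set\subseteq\Lambda(a)$. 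I would then separate the sites into a boundary layer $B_a=\{i\in\Lambda(a):\gamma(i,\Lambda(a)^\complement)\le\ell_a\}$, with $\ell_a\to\infty$ but $\ell_a=o(a)$, and the interior sites.

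For the boundary layer, $\#B_a=O(\ell_a a^{\nu-1})=o(\#\Lambda(a))$. By (\ref{it:D2}) each $|\tilde E_i|$ is bounded by a sum of $\nienj$ over neighbours $j$. Translation invariance does not directly give integrability under $\mu\circ(\tau_t^a)^{-1}$, which is not translation invariant. Instead I would use the pointwise a priori bound~\eqref{eq:a-priori-in-box}, $\nieni\circ\tau^a_t\le\sum_j[\Exp{tA}]_{ij}\nienj$, together with the analogous bound for $\tau_t$ from the proof of Lemma~\ref{lem:dyn-bounds}. Since $\sum_j[\Exp{tA}]_{ij}\le\Exp{t\|A\|}$ uniformly in $i$, each integral is at most $\Exp{t\|A\|}\int\nien\dd\mu$. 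The boundary contribution is therefore $O(\#B_a/\#\Lambda(a))\to0$.

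For interior sites $i$ with $\gamma(i,\Lambda(a)^\complement)>\ell_a$, Lemma~\ref{lem:dyn-bounds} gives pointwise super-exponentially small differences $|q^a_j(t)-q_j(t)|$ and $|p^a_j(t)-p_j(t)|$ for $j$ near $i$. These are of order $c_t^{\ell_a}(1+a)^{r\ell_a}/(2\ell_a-2)!$, up to a factor involving the configuration's constant $C$. Choosing $\ell_a$ growing like a small power of $a$ (or like $a/\log a$) makes this tend to zero, since $r<2$.

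To turn the difference in positions and momenta into a difference in $\tilde E_i$, I would use the mean value theorem. For the $W$ terms I would apply (\ref{it:D1}) along the interpolating path, using (\ref{it:P3}) to control the path. For the $p$ part, $\tfrac12|p|^2-\tfrac12|p'|^2$ is bounded by $(|p|+|p'|)|p-p'|$. This gives $|\tilde E_i\circ\tau_t-\tilde E_i\circ\tau_t^a|\le \delta_a(C)\,(1+|i|)^{r}\cdot\text{poly}$ on $\mathfrak B_{r,C}$, with $\delta_a(C)\to0$ fast.

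The main obstacle is the non-uniformity in $C$ and $i$: the pointwise bounds deteriorate for configurations with large local energies. I would handle it by dominated convergence. Each per-site term is bounded by $2\Exp{t\|A\|}\sum_{j\sim i}(\sum_k[\Exp{tA}]_{jk}\nien_k)$, and the interior average is controlled by translation invariance. I would split the configuration space into $\mathfrak B_{r,C}$ and its complement. On the complement, Hölder with $\zeta>1$ and Markov as in Lemma~\ref{lem:supp-on-Omega0} give an error small uniformly in $a$ as $C\to\infty$. On $\mathfrak B_{r,C}$, the fast decay beats the polynomial factors $(1+a)^{r}$. The negative-time case follows from Proposition~\ref{prop:TRI} and $\vartheta$-invariance of $E_0$.
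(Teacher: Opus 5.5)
Your argument is correct in outline, but it takes a genuinely different route from the paper. The paper never compares $\tau_t$ with the severed flow. It writes $\int E_0\dd\mu_t-\int E_0\dd\mu=\int\int_0^t\{E_0,H\}\circ\tau_s\dd s\dd\mu$ and uses (D1) with Cauchy's inequality to make the integrand integrable in $(s,\cdot)$. It then reindexes the sets $\Set\ni 0$ as translates of sets with lexicographic minimum $0$. This discrete divergence identity for the energy current gives $\int\{E_0,H\}\dd\nu=0$ for every translation-invariant $\nu$ with the relevant moment, and applying it to $\nu=\mu_s$ together with Fubini concludes. That proof is soft: it uses only that $\tau_t$ solves~\eqref{eq:dyn} and preserves $\mathcal I_2$, not the quantitative locality bounds of Lemma~\ref{lem:dyn-bounds}.

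Your route combines three ingredients:
\begin{itemize}
\item exact conservation of $H_{\Lambda(a)}$ under $\tau^a_t$;
\item Lemma~\ref{lem:limitHLambda} applied to both $\mu$ and $\mu_t$;
\item a boundary-layer/interior split with a cutoff in $C$.
\end{itemize}
This is the Lanford--Robinson strategy that the paper reserves for the entropy. It presents conservation of specific energy literally as the thermodynamic limit of finite-volume conservation and needs no current computation. The price is that it leans on the delicate estimates of Lemma~\ref{lem:dyn-bounds}, and it needs two repairs.

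First, a small power $\ell_a=a^\epsilon$ does not work. For $|i|$ of order $a$, the interior bound from Lemma~\ref{lem:dyn-bounds} behaves like $\exp\bigl((r-2\epsilon)\ell_a\ln a+O(\ell_a)\bigr)$, which diverges when $\epsilon<r/2$, and $r$ may be close to $2$. You need $a^{r/2}\ll\ell_a\ll a$; your alternative $\ell_a=a/\ln a$ is fine.

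Second, (D1) only covers $\partial_{q_i}W_\Set$ with $\Set\supsetneq\{i\}$ and gives no control of the pinning force $\nabla W_{\{i\}}$. The relevant sublevel sets have level of order $Ca^r$, so an uncontrolled pinning gradient could destroy the super-exponential smallness of $|q^a_i(t)-q_i(t)|$. The fix is available within the assumptions:
\begin{itemize}
\item use (D3) with $\Lambda=\{i\}$, which gives $|\partial^2_{q_i}W_{\{i\}}|\le C_3W_{\{i\}}$;
\item integrate it along a segment from a fixed minimizer; the segment stays in a sublevel set of level $O(z)$ by (P2) and has length $O(z)$ by (P3);
\item this yields $|\nabla W_{\{i\}}|\le c(1+z^2)$ on $\{W_{\{i\}}\le z\}$, and hence only a polynomial factor along your interpolation path.
\end{itemize}

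With these fixes the rest goes through:
\begin{itemize}
\item the boundary layer is handled via \eqref{eq:a-priori-in-box}; for $\tau_t$, get the entrywise bound by letting $a\to\infty$ there, or use translation invariance of $\mu_t$;
\item Markov/H\"older on $\mathfrak{B}_{r,C}^\complement$ gives an error uniform in $a$;
\item time reversal handles $t<0$.
\end{itemize}
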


\begin{proof} 
    Throughout this proof, we use $H$ as a shorthand for $H_\Lambda$ with $\Lambda$ chosen large enough to contain all sets $\Set$ that include $0$ and have nontrivial interaction terms. 
    Using the integrability provided by Corollary~\ref{cor:pre-energy-cons}, we see that
    \begin{align*}
        \int_{\Omega_2}E_0 \dd\mu_t - \int_{\Omega_2}E_0 \dd\mu 
        &= \int_{\Omega_2}(E_0 \circ\tau_t) \dd\mu - \int_{\Omega_2}E_0 \dd\mu \\
        &= \int_{\Omega_2}(E_0 \circ\tau_t) - E_0 \dd\mu \\
        &= \int_{\Omega_2}\int_0^t  \{E_0,H\} \circ \tau_s \dd s \dd\mu
    \end{align*}
    ---\,recall that $\mu$ is supported on~$\Omega_2$, which is invariant under the time evolution by Proposition~\ref{prop:TRI}.
    Hence, it suffices to show that 
    \begin{equation}
    \label{eq:alt-energy-cons}
        \int_{\Omega_2}\int_0^t \{E_0, H\} \circ \tau_s \dd s \dd\mu = 0
    \end{equation}
    for all~$t \geq 0$.
    
    Let us compute
    \begin{align}
        \left\{{E}_0, H\right\} 
        &= \sum_{\Set\subseteq \Lambda} \left\{ K_{\{0\}} , W_{\Set}\right\} + \sum_{x \in \Lambda} \left\{\sum_{\Set \ni 0} \frac{W_{\Set}}{\#\Set},  K_{\{x\}} \right\} \notag \\
        &= -\sum_{\Set \ni 0} p_0 \partial_{q_0}W_{\Set} + \sum_{\Set \ni 0} \frac{1}{\#\Set} \sum_{x \in {\Set}}  p_x \partial_{q_x}W_{\Set} \notag \\
        &=  \sum_{\Set\supsetneq \{0\}}\left( -p_0 \partial_{q_0}W_{\Set} + \frac{1}{\#\Set} \sum_{x \in {\Set}}  p_x \partial_{q_x}W_{\Set}\right).
        \label{eq:P-bracket}
    \end{align}
    This computation has two consequences:
    \begin{enumerate}
        \item {Applying the triangle inequality and Cauchy's inequality to~\eqref{eq:P-bracket}, Corollary~\ref{cor:pre-energy-cons}} can be used in conjunction with~(\ref{it:D1}) and (\ref{it:TI}) to show that
        \[ 
            \int_0^t \int_{\Omega_2} |\left\{{E}_0,H\right\}| \dd\mu_s \dd s < \infty
        \]
        for all~$t \geq 0$. 
        
        \item 
            We claim that the two sums in~\eqref{eq:P-bracket} cancel each other out when integrated against a translation-invariant measure $\nu$ for which 
            $\mathfrak{M}_\zeta(\nu) < \infty$ with $\zeta > \max\{1,\tfrac 12 \nu\}$. 
            On the one hand, we have  
            \begin{align*}
                \int_{\Omega_2}\sum_{\Set \ni 0} p_0 \partial_{q_0}W_{\Set} \dd \nu
                &=  \sum_{\substack{\Set_* \ni 0 \\ \min \Set_* = 0}} \sum_{z \in {\Set}_*} \int_{\Omega_2}p_0 \partial_{q_0}W_{\mathsf{T}_{-z}(\Set_*)} \dd \nu.
            \end{align*}
            We have simply rewritten the set $\Set$ containing~$0$ as a translate of a set $\Set_*$ that has~$0$ as its minimum (in lexicographical order), and used the fact that {there are finitely many $\Set$ by~(\ref{it:F}) and} all summands are integrable.
            On the other hand, the same arguments give
            \begin{align*}          &\int_{\Omega_2}\sum_{\Set \ni 0} \frac{1}{\#\Set} \sum_{x \in {\Set}}  p_x \partial_{q_x}W_{\Set} \dd \nu \\
                &\qquad\qquad = \sum_{\substack{\Set_* \ni 0 \\ \min \Set_* = 0}} \sum_{y \in {\Set}_*} \frac{1}{\#\Set_*} \sum_{x \in \mathsf{T}_{-y}(\Set_*)} \int_{\Omega_2} p_x \partial_{q_x}W_{\mathsf{T}_{-y}(\Set_*)} \dd \nu \\
                &\qquad\qquad = \sum_{\substack{\Set_* \ni 0 \\ \min \Set_* = 0}} \sum_{y \in {\Set}_*} \frac{1}{\#\Set_*} \sum_{x \in \mathsf{T}_{-y}(\Set_*)} \int_{\Omega_2} [p_0 \circ \mathsf{T}^*_{-x}] [(\partial_{q_x}W_{\mathsf{T}_{-y}(\Set_*)})\circ \mathsf{T}^*_{x}\circ \mathsf{T}^*_{-x}] \dd \nu.
            \end{align*}
            Using~(\ref{it:TI}), then using translation invariance of the measure~$\nu$, and finally reindexing, we find
            \begin{align*}
                &\int_{\Omega_2}\sum_{\Set \ni 0} \frac{1}{\#\Set} \sum_{x \in {\Set}}  p_x \partial_{q_x}W_{\Set} \dd \nu \\
                &\qquad = \sum_{\substack{\Set_* \ni 0 \\ \min \Set_* = 0}} \sum_{y \in {\Set}_*} \frac{1}{\#\Set_*} \sum_{x \in \mathsf{T}_{-y}(\Set_*)} \int_{\Omega_2} [p_0 \circ \mathsf{T}^*_{-x}] [(\partial_{q_0}W_{\mathsf{T}_{-x-y}(\Set_*)})\circ \mathsf{T}^*_{-x}] \dd \nu \\
                &\qquad = \sum_{\substack{\Set_* \ni 0 \\ \min \Set_* = 0}} \sum_{y \in {\Set}_*} \frac{1}{\#\Set_*} \sum_{x \in \mathsf{T}_{-y}(\Set_*)} \int_{\Omega_2} p_0\partial_{q_0} W_{\mathsf{T}_{-x-y}(\Set_*)} \dd \nu \\
                &\qquad = \sum_{\substack{\Set_* \ni 0 \\ \min \Set_* = 0}} \sum_{y \in {\Set}_*} \frac{1}{\#\Set_*} \sum_{z \in {\Set}_*} \int_{\Omega_2} p_0\partial_{q_0} W_{\mathsf{T}_{-z}(\Set_*)} \dd \nu \\
                &\qquad = \sum_{\substack{\Set_* \ni 0 \\ \min \Set_* = 0}} \sum_{z \in {\Set}_*} \int_{\Omega_2} p_0\partial_{q_0} W_{\mathsf{T}_{-z}(\Set_*)} \dd \nu .
            \end{align*}
            Therefore, $\int_{\Omega_2}\left\{ {E}_0, H\right\} \dd\nu = 0$
            for every translation-invariant state~$\nu$ for which $\mathfrak{M}_\zeta(\nu) < \infty$ for $\zeta > \max\{1,\tfrac 12 \nu\}$.
        \end{enumerate}
    By Corollary~\ref{cor:pre-energy-cons}, the second consequence applies to $\mu_s$ for all~$s\geq 0$, so 
    \begin{align*}
        0 
            &= \int_{\Omega_2}\left\{ {E}_0, H\right\} \dd\mu_s 
            = \int_{\Omega_2}\left\{ {E}_0, H\right\}\circ\tau_s \dd\mu
    \end{align*}
    for all~$s \geq 0$. Therefore, by the first consequence and the Fubini--Tonelli theorem,
    \begin{align*}
        0 
            &= \int_0^t \int_{\Omega_2}\left\{ {E}_0, H\right\}\circ\tau_s \dd\mu \dd s
            =  \int_{\Omega_2}\int_0^t \left\{ {E}_0, H\right\}\circ\tau_s \dd s \dd\mu 
    \end{align*}
    for all~$t\geq 0$, establishing~\eqref{eq:alt-energy-cons}.
\end{proof}

\section{Conservation of specific entropy}
\label{sec:entropy}

\begin{theorem}[Conservation of entropy under time evolution]
\label{thm:conservation-entropy}
    Suppose that (\ref{it:F}), (\ref{it:R}), (\ref{it:P1})--(\ref{it:P4}), (\ref{it:D1})--(\ref{it:D3}) and (\ref{it:TI}) hold. If $\mu\in\mathcal{I}_2$, then 
    \begin{equation}\label{eq:cons-entropy}
        s(\mu_t) = s(\mu)
    \end{equation}
    for every $t \in \rr$.
\end{theorem}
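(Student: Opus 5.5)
The plan follows the quantum-spin strategy of~\cite{LR68}: compare the true evolution with the severed dynamics~\eqref{eq:a-severed-dyn}, which is Hamiltonian on $\bigtimes_{i\in\Lambda(a)}T^*X_i$ and so preserves $\vol_{\Lambda(a)}$. By the time-reversal symmetry of Proposition~\ref{prop:TRI}, $\mu_{-t}=(\mu\circ\vartheta)_t\circ\vartheta$, and $\vartheta$ preserves both $s$ and $\mathcal{I}_2$. So it suffices to prove $s(\mu_t)\geq s(\mu)$ for all $t\geq0$ and all $\mu\in\mathcal{I}_2$. Applying this to $\mu_t\in\mathcal{I}_2$ (Corollary~\ref{cor:pre-energy-cons}) with the reversed flow then gives the reverse inequality.

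The first step is exact conservation at finite volume for the severed flow. The map $\tau^a_t$ acts only on the coordinates in $\Lambda(a)$ and is a symplectomorphism there, so Liouville's theorem gives
\[
S_{\Lambda(a)}(\mu^a_t\res{\Lambda(a)})=S_{\Lambda(a)}(\mu\res{\Lambda(a)}).
\]
Here $\mu^a_t$ is as in~\eqref{eq:mu_t_severed}. This holds because the marginal of $\mu^a_t$ on $\Lambda(a)$ is the pushforward of $\mu\res{\Lambda(a)}$ by a volume-preserving diffeomorphism. Next, fix $b\ll a$ and let $\Lambda(b)+x$ range over a tiling of the interior $\{i\in\Lambda(a):\gamma(i,\Lambda(a)^\complement)\geq\ell\}$, with $\ell=\ell(a)\to\infty$ and $\ell=o(a)$. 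Subadditivity of the entropy, together with the Appendix~\ref{app:ent} bounds controlling the boundary layer through moments of $\nien$ (Remark~\ref{rmk:MomSevDyn}), should give
\[
\frac{S_{\Lambda(a)}(\mu\res{\Lambda(a)})}{\#\Lambda(a)}\leq \frac{1}{\#\Lambda(a)}\sum_x S_{\Lambda(b)+x}(\mu^a_t\res{\Lambda(b)+x})+o(1).
\]
The error term is uniform in $b$ once $a\to\infty$.

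The key comparison is the following. For blocks at $\gamma$-distance at least $\ell$ from $\Lambda(a)^\complement$, Lemma~\ref{lem:dyn-bounds} shows that $\tau^a_t$ and $\tau_t$ differ by a superexponentially small amount on the support of $\mu$, uniformly over the configurations in $\mathfrak{B}_{r,C}$. The growth factors in $|i|$ are harmless after translating each block to the origin, using translation invariance together with $\mathsf{T}^*$-invariance of $\mathfrak{B}_r$. Hence the marginal $\mu^a_t\res{\Lambda(b)+x}$ converges weakly to $\mu_t\res{\Lambda(b)+x}$, and by translation invariance this limit equals the translate of $\mu_t\res{\Lambda(b)}$.

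The main obstacle is that entropy is only upper semicontinuous, so the inequality goes the right way. Upper semicontinuity holds on the noncompact phase space once uniform moment bounds on $\nien$ are available, since one can compare against the normalizable reference $\Exp{-c\sum E^{\textnormal{n.i.}}}\dd\vol$; Remark~\ref{rmk:MomSevDyn} supplies these bounds uniformly in $a$. This gives
\[
\limsup_{a}\frac{1}{\#\Lambda(a)}\sum_x S_{\Lambda(b)+x}(\mu^a_t\res{\Lambda(b)+x})\leq \frac{S_{\Lambda(b)}(\mu_t\res{\Lambda(b)})}{\#\Lambda(b)}.
\]
The difficulty is making the weak convergence and upper semicontinuity uniform in the block position $x$, since there are $(a/b)^\nu$ blocks. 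I would handle this by coupling: the pair $(\tau^a_t,\tau_t)$ under $\mu$ gives couplings whose displacement is controlled deterministically via Lemma~\ref{lem:dyn-bounds}, with the tail of $C$ controlled via Lemma~\ref{lem:supp-on-Omega0}. Letting $a\to\infty$ and then $b\to\infty$ yields $s(\mu)\leq s(\mu_t)$, as required.
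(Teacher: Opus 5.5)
Your argument is correct, but it transfers the finite-volume Liouville identity to infinite volume by a different mechanism than the paper. Several ingredients are shared with the paper:
\begin{itemize}
\item the time-reversal reduction;
\item the choice of severed dynamics, for which $S_{\Lambda(a)}(\mu^a_t\res{\Lambda(a)}) = S_{\Lambda(a)}(\mu\res{\Lambda(a)})$;
\item the estimates of Lemma~\ref{lem:dyn-bounds}, applied only after translating back to the origin (which is essential);
\item upper semicontinuity.
\end{itemize}

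The paper, however, never decomposes $\Lambda(a)$ into blocks. It periodizes $\mu^a_t\res{\Lambda(a)}$ and averages over all shifts $j\in\Lambda(a)$. This produces translation-invariant states $\overline{\mu}^a_t$ whose specific entropy is $S_{\Lambda(a)}(\mu\res{\Lambda(a)})/\#\Lambda(a)$, by Lemma~\ref{lemma:periodization}. It then shows $\overline{\mu}^a_t\to\mu_t$ weakly, using the pointwise bound $\sum_{j\in\Lambda(a)}|f\circ\tau^{a,j}_t - f\circ\tau_t| = O(a^{\nu-1})$ on $\Omega_2$ together with dominated convergence. It concludes with upper semicontinuity of $s$ itself (Proposition~\ref{prop:s-prop}). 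That route needs no uniformity in position, since shifts placing the support of $f$ near the boundary are just a vanishing fraction. The price is the periodization lemma, which uses both subadditivity and almost convexity.

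Your route stays at a fixed block scale $b$ and needs only one-sided subadditivity and upper semicontinuity of the single functional $S_{\Lambda(b)}$. In exchange, it requires two uniformity statements that you should make explicit.
\begin{itemize}
\item You need moment bounds on $\nieni$ under $\mu^a_t$ that are uniform in $i\in\Lambda(a)$ and in $a$. Remark~\ref{rmk:MomSevDyn} only concerns site $0$. The bound nevertheless follows from $\nieni\circ\tau^a_t\leq\sum_j[\Exp{tA}]_{i,j}\,\nienj$, Minkowski's inequality in $L^\zeta(\dd\mu)$, and translation invariance of $\mu$.
\item You need weak convergence of the translated block marginals uniformly in the block position $x$. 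Your coupling does deliver this in bounded-Lipschitz distance. A subsequence argument then gives $\limsup_{a}\max_x S_{\Lambda(b)+x}(\mu^a_t\res{\Lambda(b)+x}) \leq S_{\Lambda(b)}(\mu_t\res{\Lambda(b)})$.
\end{itemize}

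You could avoid the second point altogether by using concavity of $S_{\Lambda(b)}$ (Lemma~\ref{lem:prop-S}). Bound the average of the block entropies by the entropy of the averaged translated block marginal, then apply upper semicontinuity once. This is precisely what the paper's translation averaging encodes.
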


\begin{proof}
    We follow the strategy of Lanford and Robinson~\cite[\S{4}]{LR68}. 
    \begin{description}
        \item[Step 1: Reduction.] Because our assumptions on~$\mu$ and $H$ are symmetric in~$\bp$, time-reversal invariance (Proposition~\ref{prop:TRI}) implies that it suffices to show that, under the ongoing assumptions, $s(\mu_t) \geq s(\mu)$ for all $t \in (0,1]$. For this purpose, we may assume that $s(\mu) \neq -\infty$: indeed, if $s(\mu) = -\infty$, then trivially $s(\mu_t) \geq s(\mu)$, whether $s(\mu_t)$ is finite or not.
        
        Now, one can argue using upper semicontinuity (Lemma~\ref{prop:s-prop}) that it suffices to find, for every~$t \in (0,1]$, a sequence $(\overline{\mu}_t^a)_{a\in\nn}$ of translation-invariant states such that: 
        \begin{enumerate}
            \item[i.] the sequence $(\overline{\mu}_t^a)_{a\in\nn}$ satisfies the weak-convergence criterion to $\mu_t$ as $a\to\infty$ in Lemma~\ref{lem:topo-on-states};
            \item[ii.] for $a$ large enough, $s(\overline{\mu}_t^a) = s(\mu)$.
        \end{enumerate}
        We now fix~$t \in (0,1]$.
        
        \item[Step 2: Construction.] 
        {Let $\mu_t^a$ be defined as in~\eqref{eq:mu_t_severed}.}
        Note that this is an extended state that is not necessarily translation invariant. To address this, we define the translation-averaged version~$\overline{\mu}_t^a$ of~$\mu_t^a$ following the procedure from Lemma~\ref{lemma:periodization}.

        \item[Step 3: Convergence.] If $f$ is a strictly local observable, then there exists $s \in \nn$ 
        such that $f(\bq,\bp)$ only depends on $(q_i,p_i)$ with $|i| \leq s$. Moreover, if $f$ is also Lipschitz, we can let $L$ be such that 
        \begin{align*}
            |f(\bq',\bp') - f(\bq,\bp)| \leq L \max_{|i| \leq s} |(q'_i,p_i')-(q_i,p_i)|
        \end{align*}
        whenever $\max_{|i| \leq s} |(q'_i,p_i')-( q_i,p_i)| \leq 1$ (i.e.\ a Lipschitz constant). Without loss of generality, we will work with $L=1$.
        
        Note that, in view of the definition of $\overline{\mu}_t^a$ from Lemma~\ref{lemma:periodization}, since $\mu \in \mathcal{I}$ we have
        \begin{align*}
            \int f \dd\overline{\mu}_t^a 
                &= \frac{1}{\#\Lambda(a)} \sum_{j \in \Lambda(a)} \int f \dd\left(\mu_t^a \circ \mathsf{T}^*_{-j}\right) \\
                &= \frac{1}{\#\Lambda(a)} \sum_{j \in \Lambda(a)} \int \left(f \circ \mathsf{T}^*_{j} \circ \tau_t^a \circ (\mathsf{T}^*_{j})^{-1}\right) \dd\mu.
        \end{align*}
        Here, $\tau^{a,j}_t := \mathsf{T}^*_{j} \circ \tau_t^a \circ (\mathsf{T}^*_{j})^{-1}$ implements a variant of the severed dynamics~\eqref{eq:a-severed-dyn} where the center of the $a$-elementary box is translated by~$j$. Because $f$ is bounded, it suffices to show that
        \begin{equation}
        \label{eq:reduced-ent-goal}
            \lim_{a\to\infty}  \frac{1}{\#\Lambda(a)} \sum_{j \in \Lambda(a)} f \circ \tau_t^{a,j} = f \circ \tau_t
        \end{equation}
        $\mu$-almost surely. 
        To this end, fix~$(\bq,\bp) \in \Omega_2$, i.e.\ in $\mathfrak{B}_{r,C}$ for some $r \in (0,2)$ and some $C>0$\,---\,recall that $\mu(\Omega_2) = 1$ by Corollary~\ref{cor:important-I2}. It will be convenient to use the estimates from Lemma~\ref{lem:dyn-bounds} with the notation
        \begin{align*}
            \tilde{c}_t &= c_t \sup_{\gamma \in \nn} \frac{(1+s+\gamma \irange)^{r(\gamma+1)} 4\gamma^2 }{(2\gamma)!} \gamma!,
        \end{align*}
        ---\,the supremum is finite for fixed $s$, $\irange$ and $r \in (0,2)$ by e.g.\ Stirling's formula. 
        Then, 
        \begin{align*}
            &\sum_{j\in \Lambda(a)} \left|(f \circ \tau_t^{a,j})(\bq,\bp) - (f \circ \tau_t)(\bq,\bp)\right| 
            \\ 
            &\qquad \leq \sum_{j\in \Lambda(a)} \max_{|i| \leq s} \left[|q^{a,j}_i(t) - q_i(t)|+  |p^{a,j}_i(t) - p_i(t)|\right] \\
            &\qquad \leq \sum_{j\in \Lambda(a)} \max_{|i| \leq s} \inf_{\gamma \leq \gamma(i,\Lambda(a)^{\mathsf{C}}-j) }\left[\frac{\tilde{c}_t^{\gamma+1}}{\gamma!}  +\frac{\tilde{c}_t^{\gamma+1} \gamma^2}{\gamma!}\right]
            \\ 
            &\qquad \leq  2^\nu \nu\,  a^{\nu-1} \left(s \tilde{c}_t+ \sum_{k = s}^{a} \left[\frac{\tilde{c}_t^{\lceil\frac{k-s}{\irange}\rceil+1}}{\lceil\frac{k-s}{\irange}\rceil!} +\frac{\tilde{c}_t^{\lceil\frac{k-s}{\irange}\rceil+1}}{\lceil\frac{k-s}{\irange}\rceil!}\left\lceil\frac{k-s}{\irange}\right\rceil^2\right] \right)
            \\ &\qquad\leq  2^\nu \nu  \, a^{\nu-1} \left( s\tilde{c}_t + \irange \tilde c_t\, \sum_{\ell=0}^\infty \frac{\tilde c_t^\ell}{\ell!} \left[1+\ell^2 \right] \right)
            .
        \end{align*}
        We have used the fact that $\gamma(i,\Lambda(a)^{\mathsf{C}}-j) \geq \lceil\irange^{-1}(a-|j|-s)\rceil$ for $|j| \leq a-s$ (see Figure~\ref{fig:shifting-volumes}), which
        allows us to reindex the summation over~$j \in \Lambda(a)$ in terms of an index $k = a-|j|$, with each value of $k$ accounting for at most $2^\nu \nu a^{\nu-1}$ terms in the original sum over $j$.
        \begin{figure}
            \centering
            \input{fig-shifted-shells}
            \caption{A graphical justification for the inequality $\gamma(i,\Lambda(a)^{\mathsf{C}}-j) \geq \lceil\irange^{-1}(a-|j|-s)\rceil$ in cases where the dimension is $\nu = 2$, the scale of the (white) box is $a = 10$, the observable only depends on the (blue) coordinates $i$ with $\ell^\infty$-norm at most $s = 2$, and the range of the family of interactions is $D=1$. We have an inequality not only because we are considering the worst-case scenario $|i|=s$, but also due to the asymmetry in our definition of the box~$\Lambda(a)$.
            }
            \label{fig:shifting-volumes}
        \end{figure}
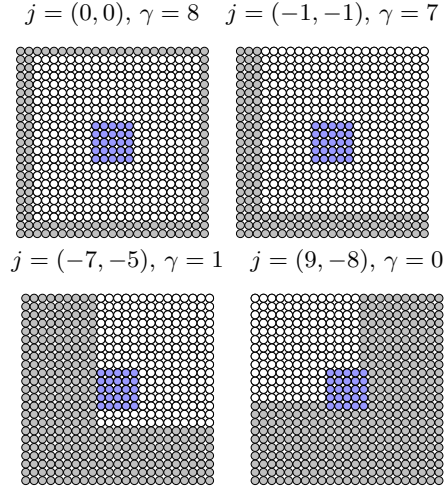
        Dividing by the volume~$\#\Lambda(a) = 2^\nu a^\nu$ and taking $a\to\infty$\,---\,keeping in mind that $s$ and $t$ (and thus $\tilde{c}_t$) are fixed\,---, we deduce that~\eqref{eq:reduced-ent-goal} indeed holds.

        \item[Step 4: Entropy.] We now fix $a \in \nn$ in addition to~$t\in(0,1]$. 
        By Liouville's theorem,
        $$
            S(\mu\res{\Lambda(na)}) = S\left(\mu_t^a\res{\Lambda(na)}\right).
        $$
        for every~$n\in\nn$. Hence, Proposition~\ref{prop:s-prop} and Lemma~\ref{lemma:periodization} prove that $s(\overline{\mu}^a_t) = s(\mu)$, as desired. \qedhere
    \end{description}
\end{proof}

\section{Dynamical equilibrium states}
\label{sec:DES}

For any $T>0$, we define the time average of the orbit of $\mu$ under the dynamics:
\begin{equation}
    \amu_T\coloneqq \frac 1 T \int_0^T\mu_t \dd t.
\end{equation}
Throughout this section we assume that (\ref{it:F}), (\ref{it:R}), (\ref{it:P1})--(\ref{it:P4}), (\ref{it:D1})--(\ref{it:D3}) and  (\ref{it:TI}) hold.

\begin{corollary}[Conservation of energy under time-averaged measures]
\label{cor:energy-conservation-averaged}
If $\mu\in\mathcal{I}_2$, then
    \begin{align*}
        \int E_0 \dd\amu_T = \int E_0 \dd\mu
    \end{align*}
    for all~$T>0$.
\end{corollary}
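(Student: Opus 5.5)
The idea is to integrate Proposition~\ref{prop:energy-conservation} in time, with Fubini--Tonelli justifying the exchange of the time integral and the integral against $\mu$. By definition of the time average,
\begin{align*}
    \int E_0 \dd\amu_T = \frac1T \int_0^T \int E_0 \dd\mu_t \dd t = \frac1T\int_0^T \int E_0\circ\tau_t \dd\mu \dd t .
\end{align*}
Here $\amu_T$ is understood as the measure $A \mapsto \frac1T\int_0^T \mu_t(A)\dd t$, and $E_0$ is a priori only known to be $\mu_t$-integrable for each fixed $t$ (Corollary~\ref{cor:pre-energy-cons}). The first equality holds for nonnegative measurable functions by the definition of $\amu_T$ and Tonelli. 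It extends to $E_0$ provided $\int_0^T\int |E_0|\dd\mu_t\dd t<\infty$. Once this is known, Proposition~\ref{prop:energy-conservation} gives $\int E_0\circ\tau_t\dd\mu=\int E_0\dd\mu$ for every $t\in[0,T]$, and averaging over $t$ yields the claim.

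The main point is therefore the uniform-in-time integrability bound. The plan is to bound $|E_0|$ pointwise by a constant multiple of $\sum_{j:\gamma(0,j)\le1}\nienj$, using (\ref{it:P1}), (\ref{it:D2}) and (\ref{it:F}). By translation invariance of $\mu_t$ this gives
\begin{align*}
    \int |E_0|\dd\mu_t\le C'\,\mathfrak{M}_1(\mu_t) .
\end{align*}
Since $\mu\in\mathcal{I}_2$, we have $\mathfrak{M}_\zeta(\mu)<\infty$ for some $\zeta>\max\{1,\tfrac12\nu\}$. The proof of Lemma~\ref{lem:pre-energy-cons} then provides an explicit bound of the form $\mathfrak{M}_\xi(\mu_t)\le 1+c\,\Exp{\alpha\zeta |t|}\mathfrak{M}_\zeta(\mu)$ with $\xi\in(1,\zeta)$. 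Jensen's inequality turns this into a bound on $\mathfrak{M}_1(\mu_t)$, which is bounded on $[0,T]$. Hence the double integral is finite.

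Joint measurability of $(t,\omega)\mapsto E_0(\tau_t\omega)$ is also needed. It follows from continuity of $t\mapsto\tau_t(\bq,\bp)$ on $\Omega_2$, since solutions are $C^1$ in each coordinate by Lemma~\ref{lem:dyn-bounds}, together with measurability of $\tau_t$ as a pointwise limit of the severed flows $\tau^a_t$. This makes $(t,\omega)\mapsto E_0\circ\tau_t(\omega)$ Carathéodory, hence jointly measurable; together with the integrability bound above this justifies Fubini. The same reasoning shows that $t\mapsto\mu_t(A)$ is measurable, so $\amu_T$ is well defined.
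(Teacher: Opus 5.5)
Your proposal is correct and takes the same route as the paper: write $\int E_0\,\dd\amu_T$ as $\frac1T\int_0^T\int E_0\circ\tau_t\,\dd\mu\,\dd t$ via Fubini, then apply Proposition~\ref{prop:energy-conservation} for each $t\in[0,T]$. The paper leaves the Fubini justification implicit (citing only Lemma~\ref{lem:ien-from-nien}). You supply the ingredients it needs, namely the bound on $\mathfrak{M}_1(\mu_t)$ that is uniform over $t\in[0,T]$ (from the proof of Lemma~\ref{lem:pre-energy-cons}) and the joint measurability of $(t,\omega)\mapsto E_0(\tau_t\omega)$, which tightens the paper's argument without changing it.
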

\begin{proof}
    The result directly follows by definition of $\amu_T$, Lemma~\ref{lem:ien-from-nien} with Fubini's theorem and Proposition~\ref{prop:energy-conservation}:
    \[
    \int_{\Omega_2}E_0\dd\amu_T=\frac 1 T\int_0^T \int_{\Omega_2}E_0\dd\mu_t \dd t=\int_{\Omega_2}E_0\dd\mu.
    \]
\end{proof}

\begin{corollary}[Conservation of entropy under time-averaged measures]
\label{cor:entropy-conservation-averaged}
    If $\mu \in \mathcal{I}_2$, then $$s(\amu_T) = s(\mu)$$ for all $T > 0$.
\end{corollary}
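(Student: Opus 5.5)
We need $s(\amu_T)=s(\mu)$. By Theorem~\ref{thm:conservation-entropy}, $s(\mu_t)=s(\mu)$ for every $t$, so the claim reduces to showing that the specific entropy commutes with the continuous convex combination $\amu_T=\frac1T\int_0^T\mu_t\,\dd t$. The plan is to prove two inequalities using two standard properties of the specific entropy on translation-invariant states, which I expect are available from Appendix~\ref{app:ent} (Proposition~\ref{prop:s-prop}): affinity of $s$ under finite convex combinations, and upper semicontinuity with respect to the topology of Lemma~\ref{lem:topo-on-states}.

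\emph{Lower bound.} The finite-volume entropy $S_\Lambda$ is concave. It also satisfies the almost-convexity bound
\[
S_\Lambda\Bigl(\sum_k\lambda_k\rho_k\Bigr)\le \sum_k\lambda_k S_\Lambda(\rho_k)+\ln N
\]
for $N$ states, so $s$ is affine on finite mixtures. To pass to the integral, approximate $\amu_T$ by Riemann sums $\amu_T^{(N)}:=\frac1N\sum_{k=0}^{N-1}\mu_{kT/N}$. Affinity gives $s(\amu_T^{(N)})=s(\mu)$.

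To show $\amu_T^{(N)}\to\amu_T$ in the sense of Lemma~\ref{lem:topo-on-states}, it suffices to know that $t\mapsto\int f\,\dd\mu_t=\int f\circ\tau_t\,\dd\mu$ is continuous for bounded local Lipschitz $f$. This follows from continuity of trajectories (Lemma~\ref{lem:dyn-bounds}) together with dominated convergence. Upper semicontinuity then yields $s(\amu_T)\ge\limsup_N s(\amu_T^{(N)})=s(\mu)$.

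\emph{Upper bound.} Use concavity of $S_\Lambda$ in the integral form
\[
S_{\Lambda}(\amu_T\res{\Lambda})\ge \frac1T\int_0^T S_\Lambda(\mu_t\res{\Lambda})\,\dd t,
\]
which holds via Jensen. That is the wrong direction for an upper bound, so instead I would prove the upper bound by a different discretization. Partition $[0,T]$ into $N$ subintervals. On the $k$-th interval, write $\amu_T=\frac1N\sum_k\nu_k$ with $\nu_k:=\frac NT\int_{I_k}\mu_t\,\dd t$. Then
\[
S_\Lambda(\amu_T)\le\frac1N\sum_k S_\Lambda(\nu_k)+\ln N .
\]
Dividing by $\#\Lambda$ and letting $\Lambda\to\infty$ kills the $\ln N$ term, so $s(\amu_T)\le \frac1N\sum_k s(\nu_k)$. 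The problem becomes bounding $s(\nu_k)$, which is again a time average.

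A cleaner route for the upper bound is to bypass entropy of averages directly: each $\nu_k$ is itself the image of a time average, and one can iterate. Alternatively, apply the upper bound argument of Theorem~\ref{thm:conservation-entropy} with $\mu$ replaced by $\amu_T$. The entropy balance also gives $s\le s(\mu)$ directly from Liouville plus the $\ln N$ estimate, in the following way. Conditioning on a finite time grid, the mixing cost $\ln N$ is sub-extensive. The continuum must be handled by a measurable-selection/Fubini argument showing that $S_\Lambda(\amu_T\res\Lambda)\le \sup_t S_\Lambda(\mu_t\res\Lambda)+o(\#\Lambda)$ fails in general. I would therefore instead use a Lanford–Robinson-type construction: realize $\amu_T$ as a weak limit of translation-averaged severed states $\frac1T\int_0^T\overline\mu^a_t\,\dd t$, in the same way as in Step~3 of the theorem.

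The \textbf{main obstacle} is the upper bound for a continuum mixture. Affinity of $s$ on continuous mixtures is not automatic: the $\ln N$ correction must be controlled uniformly, which requires the Riemann-sum approximation to converge in a sense compatible with $S_\Lambda$ rather than merely weakly. I would first try to establish the general affine property $s\bigl(\int\rho\,\dd m(\rho)\bigr)=\int s(\rho)\,\dd m(\rho)$ for mixtures of states in $\mathcal I_2$ with uniformly bounded $\mathfrak M_\zeta$. Lemma~\ref{lem:pre-energy-cons} supplies exactly this uniform bound for $t\in[0,T]$. The proof would proceed via the entropy-density characterization of $s$ through relative entropy with respect to a reference product Gibbs measure, where affinity follows from the classical ergodic-decomposition argument.
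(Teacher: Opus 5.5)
Your lower bound is sound: the Riemann sums $\frac1N\sum_{k}\mu_{kT/N}$ have specific entropy $s(\mu)$ by affinity and Theorem~\ref{thm:conservation-entropy}, and they converge weakly to $\amu_T$. All states involved lie in one set $\{\nu\in\mathcal{I}:\int|\nien|^{\xi}\dd\nu\le M\}$ with $\xi>1$, by Lemma~\ref{lem:pre-energy-cons}. Upper semicontinuity from Proposition~\ref{prop:s-prop} then gives $s(\amu_T)\ge s(\mu)$.

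The genuine gap is the reverse inequality $s(\amu_T)\le s(\mu)$, which the proposal never establishes. None of the routes you list delivers it:
\begin{itemize}
\item Realizing $\amu_T$ as a weak limit of translation-averaged severed states can, through upper semicontinuity, only bound the entropy of the limit from below. It therefore just reproduces the lower bound.
\item Rerunning Theorem~\ref{thm:conservation-entropy} with $\mu$ replaced by $\amu_T$ only shows that $s$ is constant along the orbit of $\amu_T$.
\item The closing plan (a barycentric formula for $s$ via specific relative entropy and ergodic decomposition) is only announced. It would need results the paper does not contain for unbounded spins, and it would have to handle ergodic components lying outside the moment-bounded sets on which Proposition~\ref{prop:s-prop} applies.
\end{itemize}
Your diagnosis of the obstacle is also off: no convergence ``compatible with $S_\Lambda$'' is needed, and weak convergence suffices.

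The missing step is short, and you already wrote down half of it. Take $I_k=[(k-1)T/N,kT/N)$ and $\nu^{(N)}_k:=\frac{N}{T}\int_{I_k}\mu_t\dd t$. Then $\amu_T=\frac1N\sum_{k=1}^N\nu^{(N)}_k$ exactly, so affinity gives
\[
  s(\amu_T)=\frac1N\sum_{k=1}^N s\bigl(\nu^{(N)}_k\bigr)=\frac1T\int_0^T s\bigl(\nu^{(N)}_{k_N(t)}\bigr)\dd t,
\]
where $k_N(t)$ is the index with $t\in I_{k_N(t)}$.

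Do not try to bound each $s(\nu^{(N)}_k)$ at fixed $N$; instead let $N\to\infty$ at fixed $t$. For strictly local Lipschitz $f$, the map $t'\mapsto\int f\dd\mu_{t'}$ is continuous, so Lemma~\ref{lem:topo-on-states} gives $\nu^{(N)}_{k_N(t)}\to\mu_t$ weakly. Upper semicontinuity and Theorem~\ref{thm:conservation-entropy} then give $\limsup_{N}s(\nu^{(N)}_{k_N(t)})\le s(\mu_t)=s(\mu)$ for every $t$.

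All $\nu^{(N)}_k$ lie in the same moment-bounded set, on which $s$ is bounded above. The reverse Fatou lemma therefore yields
\[
  s(\amu_T)\le\frac1T\int_0^T\limsup_N s\bigl(\nu^{(N)}_{k_N(t)}\bigr)\dd t\le s(\mu).
\]
This is the standard barycentric argument for upper semicontinuous affine functions. It needs nothing beyond Theorem~\ref{thm:conservation-entropy} and Proposition~\ref{prop:s-prop}, which is consistent with the paper's one-line proof transferring the quantum-spin argument via Theorem~\ref{thm:conservation-entropy}.
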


\begin{proof}
    Relying on Theorem~\ref{thm:conservation-entropy}, the corollary follows as in the proof of Proposition~2.16(1) in~\cite[\S{3.6}]{JPT24a}.
\end{proof}

Hereinafter, we will make the additional assumption that there exists $\epsilon \in (0,1)$ such that 
\begin{equation}
\label{eq:ss-like}
    |\min\{0,W_\Delta\}| \leq \frac{1-\epsilon}{\#\Delta} \sum_{i \in \Delta} W_{\{i\}}.
\end{equation}
for every $\Set \Subset \zz^\nu$. As with other assumptions, a constant can be added to each $W_{\{i\}}$ without any physical consequence.

\begin{lemma}
\label{lem:proto-tightness}
    Under the ongoing assumptions, if $\mu \in \mathcal{I}_2$, then the integrals 
    \begin{align*}
        \int \nien \dd\mu_t
        \qquad\text{and}\qquad
        \int \nien \dd\amu_T
    \end{align*}
    are uniformly bounded in $t\geq 0$ and $T > 0$.
\end{lemma}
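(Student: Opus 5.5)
The plan is to compare $\nien$ with $E_0$ using the new assumption~\eqref{eq:ss-like}, and then use conservation of specific energy (Proposition~\ref{prop:energy-conservation} and Corollary~\ref{cor:energy-conservation-averaged}). Conservation fixes $\int E_0\dd\mu_t$ and $\int E_0\dd\amu_T$ at the finite value $\int E_0\dd\mu$ (finite by Corollary~\ref{cor:important-I2}). It therefore suffices to show a pointwise-in-expectation inequality of the form
\[
    \epsilon \int \nien \dd\nu \leq \int E_0 \dd\nu
\]
for every translation-invariant $\nu$ with $\nien\in L^1(\dd\nu)$. This applies to $\nu=\mu_t$, which lies in $\mathcal{I}_2$ by Corollary~\ref{cor:pre-energy-cons}, and to $\nu=\amu_T$, which is a convex combination of such states. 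Once it is established, both integrals are bounded by $\epsilon^{-1}\int E_0\dd\mu$, uniformly in $t\geq0$ and $T>0$.

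To prove the inequality, write
\[
    E_0 = K_{\{0\}} + W_{\{0\}} + \sum_{\Set\supsetneq\{0\}} \frac{W_\Set}{\#\Set}
\]
and bound each many-body term below by $W_\Set \geq -|\min\{0,W_\Set\}| \geq -\frac{1-\epsilon}{\#\Set}\sum_{i\in\Set}W_{\{i\}}$ using~\eqref{eq:ss-like}. After integrating against the translation-invariant $\nu$, use~(\ref{it:TI}) and the same reindexing trick as in the proof of Proposition~\ref{prop:energy-conservation}. Translate each $\Set\ni0$ so that each site $i\in\Set$ is moved to $0$; this gives
\[
    \int \sum_{\Set\ni 0}\frac{1}{\#\Set}\cdot\frac{1}{\#\Set}\sum_{i\in\Set}W_{\{i\}}\dd\nu = \sum_{\Set\ni0}\frac{1}{\#\Set}\int W_{\{0\}}\dd\nu .
\]
The sum is finite by~(\ref{it:F}).

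\textbf{Main obstacle.} The constant produced above is $\sum_{\Set\ni0,\,\Set\neq\{0\}}(\#\Set)^{-1}$, which need not be at most $1$. It is unclear whether~\eqref{eq:ss-like} is meant for a single $\Set$ or is meant to be summed. I would handle this by applying~\eqref{eq:ss-like} at the level of $H_{\Lambda(a)}$ rather than site by site, i.e.\ read it as the stability-type bound $\sum_{\Set\subseteq\Lambda,\#\Set\geq2}\min\{0,W_\Set\}\geq-(1-\epsilon)\sum_{i\in\Lambda}W_{\{i\}}$ up to boundary terms. Combining this with Lemma~\ref{lem:limitHLambda}, which expresses $\int E_0\dd\nu$ as the limit of $\int H_{\Lambda(a)}\dd\nu/\#\Lambda(a)$, would give $\int E_0\dd\nu \geq \int K_{\{0\}}\dd\nu+\epsilon\int W_{\{0\}}\dd\nu\geq\epsilon\int\nien\dd\nu$. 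If the per-$\Set$ reading is forced instead, I would renormalise by absorbing the combinatorial factor into $\epsilon$, using Remark~\ref{rem:add-cst} to add constants as needed.

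Finally, nonnegativity of $\nien$ by~(\ref{it:P1}) ensures that Fubini's theorem applies to $\amu_T$. Integrability of all terms follows from Corollary~\ref{cor:pre-energy-cons} and~(\ref{it:D2}).
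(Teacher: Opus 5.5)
Your route is the paper's: compare $\nien$ with $E_0$ via~\eqref{eq:ss-like}, then invoke Proposition~\ref{prop:energy-conservation} and Corollary~\ref{cor:energy-conservation-averaged}. The paper does this in one line by asserting a \emph{pointwise} bound $\nien\le C(1+E_0)$. Your version, which uses~\eqref{eq:ss-like} only after integrating against a translation-invariant state, is the correct way to proceed. Pointwise, the lower bound $E_0\ge K_{\{0\}}+W_{\{0\}}-\sum_{\Set\supsetneq\{0\}}\frac{1-\epsilon}{(\#\Set)^2}\sum_{i\in\Set}W_{\{i\}}$ involves $W_{\{i\}}$ at sites $i\neq0$. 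These can be large while $\nien$ stays fixed, so no pointwise bound follows. The obstacle you isolate is also genuine. After reindexing, one only gets $\int E_0\dd\nu\ge\int K_{\{0\}}\dd\nu+\bigl(1-(1-\epsilon)N\bigr)\int W_{\{0\}}\dd\nu$, where $N=\sum_{\Set\ni0,\,\#\Set\ge2,\,W_\Set\not\equiv0}(\#\Set)^{-1}$.

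The gap is your fallback for the per-$\Set$ reading, which does not work. The constant $\epsilon$ is given, and nothing lets you push $(1-\epsilon)N$ below $1$. Adding a constant $c$ to every $W_{\{i\}}$ shifts $\int E_0$ and $\int\nien$ by the same $c$ and only makes~\eqref{eq:ss-like} easier to satisfy. It therefore cannot repair the sign of $1-(1-\epsilon)N$.

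In fact no argument can close this, because the statement fails under the literal hypothesis. Take $\nu=2$, $W_{\{i\}}=\tfrac12q_i^2+c$ with $c>0$, and $W_{\{i,j\}}=-\tfrac25q_iq_j$ for nearest neighbours. This satisfies (F), (R), (P1)--(P3), (D1)--(D3), (TI) and~\eqref{eq:ss-like} with $\epsilon=\tfrac15$. Yet the Fourier symbol $1-\tfrac45(\cos k_1+\cos k_2)$ of the linear dynamics is negative near $k=0$. Take i.i.d.\ standard Gaussian $(q_i,p_i)$, which is a state in $\mathcal{I}_2$. Then $\int q_0^2\dd\mu_t$ grows exponentially, and hence so do $\int\nien\dd\mu_t$ and $\int\nien\dd\amu_T$.

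So with~\eqref{eq:ss-like} read per $\Set$, the lemma holds only when $(1-\epsilon)N<1$, for example nearest-neighbour pairs in $\nu=1$. There your reindexing argument is complete, with $1-(1-\epsilon)N$ in place of $\epsilon$. In general, what is missing is exactly the summed stability hypothesis you wrote down:
$\sum_{\Set\subseteq\Lambda,\#\Set\ge2}\min\{0,W_\Set\}\ge-(1-\epsilon)\sum_{i\in\Lambda}W_{\{i\}}$.
Under it, your first argument goes through with no boundary terms at all. Since $H_\Lambda$ only contains $\Set\subseteq\Lambda$, you get $H_\Lambda\ge\sum_{i\in\Lambda}(K_{\{i\}}+\epsilon W_{\{i\}})$ pointwise. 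Lemma~\ref{lem:limitHLambda} then gives $\int E_0\dd\nu\ge\epsilon\int\nien\dd\nu$. The same strengthening is also what is needed for the normalising constant in Lemma~\ref{lem:existence-eq} to be finite.
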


\begin{proof}
    Given that $E_0\geq \nien-\sum_{\Set\supset\{0\}} |\min\{0,W_\Set\}|/{\#\Set}$, it follows from~\eqref{eq:ss-like} that there exists a constant $C$ such that 
    $
        \nien \leq C(1 + E_0).
    $
    Hence, the result follows from Proposition~\ref{prop:energy-conservation} and Corollary~\ref{cor:energy-conservation-averaged}.
\end{proof}

\begin{corollary}\label{cor:tight}
    Under the ongoing assumptions, if $\mu\in\mathcal{I}_2$, then the nets $(\mu_t)_{t \geq 0}$ and $(\amu_T)_{T> 0}$ are tight in the product topology.
\end{corollary}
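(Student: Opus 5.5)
The plan is to reduce tightness in the product topology to tightness of the one-site marginals, and then to control those marginals with the uniform energy bound from Lemma~\ref{lem:proto-tightness}.

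\textbf{Reduction to one site.} A family of probability measures on a countable product of Polish spaces is tight as soon as every one-site marginal family is tight. Indeed, given $\varepsilon>0$, choose for each $i$ a compact $K_i\subseteq T^*X_i$ with marginal mass of its complement at most $\varepsilon 2^{-(|i|+1)}\cdot c_\nu$, where $c_\nu$ is chosen so that these bounds sum to at most $\varepsilon$. The product $\bigtimes_i K_i$ is then compact by Tychonoff, and a union bound shows its complement has mass at most $\varepsilon$ uniformly along the net. By translation invariance of $\mu_t$ (the dynamics commutes with translations by~(\ref{it:TI}) and uniqueness in Lemma~\ref{lem:dyn-bounds}) and of $\amu_T$ (an average of translation-invariant states), every one-site marginal coincides with the marginal at site~$0$. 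It therefore suffices to find, for each $\varepsilon$, one compact $K\subseteq T^*X_0$ with $\mu_t(K^\complement)\le\varepsilon$ and $\amu_T(K^\complement)\le\varepsilon$ for all $t,T$. The choice $K_i=\mathsf{T}$-translate of $K$ with $\varepsilon$ replaced by $\varepsilon 2^{-(|i|+1)}c_\nu$ handles the summation.

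\textbf{Compact sublevel sets.} Take $K_z=\{(q_0,p_0): \nien\le z\}$. This set is closed by~(\ref{it:R}). It is also bounded: $q_0$ lies in the sublevel set $L_0(z)$, which is compact by~(\ref{it:P3}), and $|p_0|^2\le 2z$ by~(\ref{it:P1}). Hence $K_z$ is compact. Markov's inequality then gives
\[
\mu_t(K_z^\complement)\le \frac{1}{z}\int \nien\,\dd\mu_t \le \frac{M}{z},
\]
where $M$ is the uniform bound from Lemma~\ref{lem:proto-tightness}, and the same estimate holds for $\amu_T$. This is uniform in $t\ge0$ and $T>0$, so choosing $z$ large, depending on $\varepsilon$ and the weight $2^{-(|i|+1)}c_\nu$, finishes the argument.

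\textbf{Where the difficulty lies.} The only substantive ingredient is the uniform first-moment bound, which is already provided by Lemma~\ref{lem:proto-tightness}. The one point that needs care is checking that $\mu_t$ is translation invariant. This follows from uniqueness of solutions in $\mathfrak{B}_r$ together with the invariance of $\mathfrak{B}_r$ under $\mathsf{T}_i^*$: translating a solution gives a solution with the translated initial datum, so $\tau_t$ commutes with $\mathsf{T}^*_i$. Even without translation invariance, the argument survives, since $\int \nieni\,\dd\mu_t$ could be bounded site by site in the same way.
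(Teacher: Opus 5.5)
Your proposal is correct and is essentially the paper's argument: the paper uses the compact set $\mathfrak{B}_{r,C}$, which is a product of the one-site sublevel sets $\{\nieni \le C(1+|i|)^r\}$ with $r>\nu$, and bounds the mass of its complement by a union bound, Markov's inequality, translation invariance and the uniform bound of Lemma~\ref{lem:proto-tightness}, so the only difference is that you use exponentially growing thresholds where the paper uses polynomial ones. You also usefully justify the translation invariance of $\mu_t$, which the paper uses without comment; however, your closing aside that the argument survives without translation invariance is not right, because the uniform bound of Lemma~\ref{lem:proto-tightness} comes from Proposition~\ref{prop:energy-conservation}, whose cancellation step itself requires $\mu_s$ to be translation invariant.
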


\begin{proof}
    By Lemma~\ref{lem:proto-tightness}, we have integrals that are uniformly bounded in~$t \geq 0$, which we can use in Markov's inequality:
    \begin{align*}
        \mu_t(\mathfrak{B}_{r,C}^\complement)  
        &= \mu_t\{(\bq,\bp) : \nieni > C(1+|i|)^r \text{ for some } i\in \zz^\nu \} \\
        &\leq \sum_{i \in \zz^\nu} \mu_t\{(\bq,\bp) : \nieni > C(1+|i|)^r\} \\
        &\leq \sum_{i\in\zz^\nu}
            \frac{\int \nieni \dd\mu_t}{C(1+|i|)^r} \\
        &= \sum_{i\in\zz^\nu}
            \frac{\int  \nien \dd\mu_t}{C(1+|i|)^r} \\
        &= \frac{1}{C}\int \nien \dd\mu_t \sum_{i\in\zz^\nu}
            \frac{1}{(1+|i|)^r}.
    \end{align*}
    We have used that $\mu_t$ is translational invariant.
    Picking $r > \nu$, the right-hand side is finite and vanishes as $C \to \infty$, uniformly in $t \geq 0$. Since the set $\mathfrak{B}_{r,C}$
    is compact (in the product topology) for every fixed $r > 0$ and $C>0$, this yields the desired tightness property for $(\mu_t)_{t \geq 0}$. The desired tightness property for $(\amu_T)_{T > 0}$ is derived in the same way using the other type of integral in Lemma~\ref{lem:proto-tightness}.
\end{proof}

\begin{definition}[Dynamical Equilibrium States]\label{def:DEES}
    Given $\mu\in\mathcal I_2$, the weak$^*$-limit points of the net $(\amu_T)_{T> 0}$ as $T\to\infty$, denoted by $\mu_+\in \mathcal S_+(\mu)$, are called \emph{dynamical equilibrium states} (DES).\footnote{We are adding the adjective ``dynamical'' to emphasize that this is a dynamical property and not a thermodynamical property.} 
\end{definition}

The main results of this section concern the conservation of energy in finite time and the increase of entropy for DES, as well as the formulation of the approach to thermal equilibrium.

\begin{corollary}\label{cor:DESnonEmpty}
    Under the ongoing assumptions, if $\mu\in\mathcal I_2$, then the set $\mathcal S_+(\mu)$ is non-empty, and every $\mu_+ \in S_+(\mu)$ satisfies $\mu_+\circ \tau_t=\mu_+$ for all $t\in\rr$.
\end{corollary}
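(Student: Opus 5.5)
Non-emptiness comes first. By Corollary~\ref{cor:tight}, the net $(\amu_T)_{T>0}$ is tight in the product topology. Since $\Omega$ is a Polish space, being a countable product of separable metric spaces, Prokhorov's theorem applies: every sequence $T_n\to\infty$ has a subsequence along which $\amu_{T_n}$ converges weakly to some probability measure $\mu_+$. Hence $\mathcal S_+(\mu)\neq\emptyset$.

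Some routine properties of a limit point $\mu_+$ follow at once. Translation invariance of $\mu_+$ follows from that of each $\mu_t$, hence of each $\amu_T$, since translation invariance is preserved under weak limits. By Lemma~\ref{lem:proto-tightness} and lower semicontinuity of $\int \nien\,\dd(\cdot)$ (as $\nien\geq 0$ is continuous), we get $\int\nien\,\dd\mu_+<\infty$. By Lemma~\ref{lem:supp-on-Omega0} this gives concentration on $\mathfrak B_r$ for $r>\nu$, but not necessarily on $\Omega_2$. This is a genuine technical point, since $\tau_t$ is only defined on $\Omega_2$.

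To prove invariance, fix $t\in\rr$ and a bounded local Lipschitz observable $f$. The key identity is
\[
\int f\circ\tau_t\,\dd\amu_T-\int f\,\dd\amu_T=\frac1T\Big(\int_T^{T+t}-\int_0^t\Big)\int f\,\dd\mu_s\,\dd s,
\]
whose absolute value is at most $2|t|\,\|f\|_\infty/T\to0$. It then remains to pass to the limit along $T_n$ in $\int f\circ\tau_t\,\dd\amu_{T_n}$, i.e.\ to show that it converges to $\int f\circ\tau_t\,\dd\mu_+$.

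This is the main obstacle, because $f\circ\tau_t$ need not be continuous in the product topology. My plan is to use the estimates of Lemma~\ref{lem:dyn-bounds}: on each set $\mathfrak{B}_{r,C}$ with $r<2$, $f\circ\tau_t$ is the uniform limit of $f\circ\tau_t^{a}$. Each $f\circ\tau_t^a$ is a continuous function of finitely many coordinates, by continuous dependence of the finite-dimensional ODE on initial data. Hence $f\circ\tau_t$ is continuous on each $\mathfrak{B}_{r,C}$, and these sets are compact. To use this, I need the nets to concentrate uniformly on $\mathfrak{B}_{r,C}$ with $r<2$, i.e.\ tightness on such sets. This requires a uniform bound on $\int(\nien)^\zeta\,\dd\mu_s$ with $\zeta r>\nu$, which is automatic only for $\nu=1$ from Lemma~\ref{lem:proto-tightness}.

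For general $\nu$, I would therefore instead approximate $\mu_+$ directly. The plan is to show that $\mu_+(\Omega_2)=1$ by a uniform-in-$s$ moment bound, first trying to propagate $\zeta$-th moments of the conserved energy density via a variant of Proposition~\ref{prop:energy-conservation} applied to the time-averaged measures. If that fails, I would restrict to the $\nu\le 1$ argument above and handle the rest via the weaker growth class $\mathfrak B_r$, $r>\nu$, extending the uniqueness estimates. Given uniform concentration, I would combine continuity of $f\circ\tau_t$ on $\mathfrak{B}_{r,C}$ with the Tietze extension theorem to conclude $\int f\circ\tau_t\,\dd\mu_+=\int f\,\dd\mu_+$. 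Since such $f$ determine measures (Lemma~\ref{lem:topo-on-states}), this gives $\mu_+\circ\tau_t^{-1}=\mu_+$. Applying this to $-t$ via Proposition~\ref{prop:TRI} yields $\mu_+\circ\tau_t=\mu_+$.
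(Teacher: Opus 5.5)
Your skeleton is exactly the paper's: Prokhorov's theorem via Corollary~\ref{cor:tight}, the telescoping estimate $\bigl|\int f\circ\tau_t\,\dd\amu_T-\int f\,\dd\amu_T\bigr|\leq 2|t|\,\|f\|_\infty/T$, and Lemma~\ref{lem:topo-on-states} to identify the measures. The step you single out as the main obstacle is not addressed in the paper. Its proof simply writes $\int f\circ\tau_{t_0}\,\dd\mu_+=\lim_k\int f\circ\tau_{t_0}\,\dd\amu_{T_k}$, as if $f\circ\tau_{t_0}$ were a bounded continuous test function. It is not known to be one: $\tau_t$ is set to the identity off $\Omega_2$, which is neither open nor closed, and the estimates of Lemma~\ref{lem:dyn-bounds} are only uniform on each $\mathfrak{B}_{r,C}$ with $r<2$. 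Your concern is therefore legitimate.

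Your treatment of $\nu=1$ fills this hole correctly. Since the bounds of Lemma~\ref{lem:dyn-bounds} are uniform on $\mathfrak{B}_{r,C}$, $f\circ\tau_t$ is continuous on this compact set as a uniform limit of the continuous $f\circ\tau_t^a$. Lemma~\ref{lem:proto-tightness} and the Markov bound of Corollary~\ref{cor:tight} give $\sup_T\amu_T(\mathfrak{B}_{r,C}^\complement)\to0$ as $C\to\infty$ for $r\in(1,2)$. The same bound holds for $\mu_+$ because the complement is open, and a Tietze extension finishes the argument.

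For $\nu\geq2$, however, your proposal has a genuine gap, and neither fallback closes it.

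\emph{Propagating $\zeta$-th moments cannot work.} Only the mean $\int E_0\,\dd\mu_t$ is conserved. The proof of Proposition~\ref{prop:energy-conservation} uses that $\{E_0,H\}$ is a discrete divergence whose integral vanishes against translation-invariant states. By contrast, $\{|E_0|^\zeta,H\}=\zeta|E_0|^{\zeta-1}\operatorname{sgn}(E_0)\{E_0,H\}$ has no such structure. The only available control, Lemma~\ref{lem:pre-energy-cons}, grows like $\Exp{\alpha\zeta t}$ and is useless uniformly in $T$.

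\emph{Extending the uniqueness estimates to $\mathfrak{B}_r$ with $r>\nu\geq2$ also fails.} The Cauchy bounds in Lemma~\ref{lem:dyn-bounds} are of order $\exp(r\gamma\log\gamma-2\gamma\log(2\gamma)+O(\gamma))$ and need $r<2$. Heuristically, $r=2$ is where energy moving at speed of order $\sqrt{E}$ can arrive from infinity in finite time, since $\sum_n n^{-r/2}<\infty$ iff $r>2$. So this limitation is not cosmetic.

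Uniformly bounded first moments only give uniform concentration on $\mathfrak{B}_{r,C}$ for $r>\nu$. For $\nu\geq2$ you therefore cannot even show $\mu_+(\Omega_2)=1$. A genuinely new ingredient is needed, for instance one of the following:
\begin{itemize}
\item a uniform-in-time bound on $\mathfrak{M}_\zeta(\mu_t)$ for some $\zeta>\nu/2$;
\item an existence, uniqueness and continuity theory for the dynamics on a class such as $\{\sup_n \#\Lambda(n)^{-1}\sum_{i\in\Lambda(n)}\nieni\leq C\}$. This set is compact and has uniformly large $\amu_T$-measure by the maximal ergodic inequality, but Lemma~\ref{lem:dyn-bounds} does not cover it.
\end{itemize}
Your argument establishes the corollary for $\nu=1$ only. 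The case $\nu\geq2$ remains unjustified both in your proposal and in the paper's proof.
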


\begin{proof}
    Fix $\mu \in \mathcal{I}_2$. Since each $X_i$ is a complete, separable metric space, so is $\Omega$. Hence, Prokhorov's theorem applies: the tightness property in Corollary~\ref{cor:tight} implies that $(\amu_T)_{T>0}$ is sequentially compact in the weak$^*$ topology. Therefore, $\mathcal{S}_+(\mu)$ is nonempty. 

    Let $\mu_+$ be an element of $\mathcal{S}_+(\mu)$, i.e.\ a limit of some subsequence $(\amu_{T_k})_{k=1}^\infty$ with $T_k \to \infty$ as $k\to\infty$. To prove invariance under $\tau_{t_0}$, we consider a strictly local, Lipschitz observable $f$ and estimate
    \begin{align*}
        \left|\int_\Omega f \circ \tau_{t_0} \dd\mu_+ - \int_\Omega f \dd\mu_+\right|
            &= \left|\lim_{k\to\infty} \int_\Omega f \circ \tau_{t_0}  \dd\amu_{T_k} - \lim_{k\to\infty} \int_\Omega f   \dd\amu_{T_k} \right| \\
            &= \lim_{k\to\infty}\left|  \frac{1}{T_k} \int_0^{T_k} \int_\Omega f \circ \tau_{t_0} \dd\mu_{t}  \dd t - \frac{1}{T_k} \int_0^{T_k} \int_\Omega f \dd\mu_{t}  \dd t \right| \\
            &= \lim_{k\to\infty} \left| \frac{1}{T_k}\left(\int_{T_k}^{T_k + t_0} \int_\Omega f \circ \tau_{t} \dd\mu  \dd t - \int_{0}^{t_0} \int_\Omega f \circ \tau_{t} \dd\mu  \dd t \right) \right| \\
            &\leq \limsup_{k\to\infty} \frac{2|t_0| \sup f}{T_k} \\
            &= 0.
    \end{align*}
    Since $f$ was arbitrary among a measure-determining class of observables\,---\,this fact is a consequence of Lemma~\ref{lem:topo-on-states}\,---, this proves that $\mu_+ \circ \tau_{t_0}^{-1} = \mu_+$.
\end{proof}

\begin{corollary}[Conservation of energy for DES]\label{cor:consEnDES}
    Under the ongoing assumptions, if $\mu\in\mathcal I_2$ and $\mu_+\in\mathcal S_+(\mu)$, then
    \[
    \int E_0\dd\mu_+=\int E_0\dd \mu.
    \]
\end{corollary}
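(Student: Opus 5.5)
Since $\mu_+$ is a weak$^*$ limit of $\amu_{T_k}$ along some sequence $T_k\to\infty$, and Corollary~\ref{cor:energy-conservation-averaged} gives $\int E_0\dd\amu_{T_k}=\int E_0\dd\mu$ for every $k$, the claim reduces to passing to the limit
\[
\lim_{k\to\infty}\int E_0\dd\amu_{T_k}=\int E_0\dd\mu_+ .
\]
$E_0$ is continuous and strictly local (by (\ref{it:F}) and (\ref{it:R})) but unbounded, so weak$^*$ convergence alone does not suffice. I would split $E_0$ into a bounded continuous part, to which weak convergence applies, and a tail that is controlled uniformly in $k$.

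For the truncation, set $E_0^{(M)}:=\max\{-M,\min\{E_0,M\}\}$, which is bounded and continuous in the product topology, so $\int E_0^{(M)}\dd\amu_{T_k}\to\int E_0^{(M)}\dd\mu_+$ for each fixed $M$. It then suffices to show $\sup_k\int|E_0-E_0^{(M)}|\dd\amu_{T_k}\to0$ as $M\to\infty$, together with $E_0\in L^1(\dd\mu_+)$ and $\int|E_0-E_0^{(M)}|\dd\mu_+\to0$. Since $|E_0|\le C\sum_{\gamma(0,j)\le1}\nienj$ by (\ref{it:D2}), and by translation invariance, this amounts to uniform integrability of $\nien$ under $(\amu_{T_k})_k$.

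The main obstacle is this uniform integrability, because Lemma~\ref{lem:proto-tightness} gives only uniform first moments, and uniform $L^1$ bounds do not imply uniform integrability. I would get it from the lower bound \eqref{eq:ss-like}, exploiting that the energy is conserved exactly, not just bounded. Concretely, write $\nien\le C(1+E_0)$ as in Lemma~\ref{lem:proto-tightness}, or better use the decomposition $E_0=\nien+(\text{interaction part})$, where the negative part of the interaction is at most a $(1-\epsilon)$ fraction of neighbouring pinning. From Fatou's lemma and lower semicontinuity on nonnegative continuous functions one gets $\int\nien\dd\mu_+\le\liminf\int\nien\dd\amu_{T_k}<\infty$, hence $E_0\in L^1(\dd\mu_+)$. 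One also gets $\int E_0^+\dd\mu_+\le\liminf\int E_0^+\dd\amu_{T_k}$ and $\int E_0^-\dd\mu_+\le\liminf\int E_0^-\dd\amu_{T_k}$, so one inequality follows by applying Fatou to $E_0+c\sum_{j}W_{\{j\}}$-type nonnegative combinations.

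For the reverse inequality, I would first try to obtain uniform integrability from a higher moment. By the same Markov and Borel--Cantelli style estimate as in Lemma~\ref{lem:pre-energy-cons}, bound $\mathfrak{M}_\xi(\mu_t)$ with $\xi>1$ uniformly in $t$ using conservation of energy. If that fails, because the bound in Lemma~\ref{lem:pre-energy-cons} grows exponentially, the fallback is to show only $\int E_0\dd\mu_+\le\int E_0\dd\mu$ via Fatou applied to the nonnegative continuous function $E_0+(1-\epsilon')\sum\ldots\ge0$. The equality would then follow from a matching argument applied to the kinetic part, using that $\mu_+$ is $\tau$-invariant (Corollary~\ref{cor:DESnonEmpty}) and lies in $\mathcal{I}_2$.
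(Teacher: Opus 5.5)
Your reduction is correct, and you have found the real issue. Since $E_0^{(M)}$ is bounded and continuous, everything follows once $\lim_{M\to\infty}\sup_k\int|E_0-E_0^{(M)}|\,\dd\amu_{T_k}=0$, i.e.\ once $\nien$ is uniformly integrable along $(\amu_{T_k})_k$. The gap is that the proposal never proves this, and none of your suggested routes can.

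\begin{itemize}
\item A bound on $\mathfrak{M}_\xi(\mu_t)$ with $\xi>1$, uniform in $t$, is not available. Lemma~\ref{lem:pre-energy-cons} grows like $\Exp{\alpha_r\zeta t}$. The uniform first-moment bound of Lemma~\ref{lem:proto-tightness}, which is all that energy conservation gives, only yields $\mu_t\{\nien>\lambda\}\lesssim\lambda^{-1}$. That controls $\mathfrak{M}_\xi(\mu_t)$ uniformly for $\xi<1$ only.
\item Fatou and lower semicontinuity can at best give $\int E_0\,\dd\mu_+\le\int E_0\,\dd\mu$: they can only show that energy is not gained in the limit. Even that would need a pointwise nonnegative continuous function whose integral agrees with $\int E_0$, up to a constant, against every translation-invariant state. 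With $E_0+c\sum_jW_{\{j\}}$ you would afterwards have to subtract $c\int\sum_jW_{\{j\}}\,\dd\mu_+$. That quantity is only bounded above by the corresponding $\liminf$, so the leftover error term has the wrong sign.
\item The reverse inequality is a genuine no-loss-of-energy statement: energy must not concentrate on ever rarer, ever more energetic sites as $T\to\infty$. Your ``matching argument'' does not address it. $\tau$-invariance of $\mu_+$ (Corollary~\ref{cor:DESnonEmpty}) only gives identities internal to $\mu_+$ and carries no information about $\mu$. Moreover, $\mu_+\in\mathcal{I}_2$ is not known; only $\nien\in L^1(\dd\mu_+)$ is.
\end{itemize}

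For comparison, the paper's proof uses exactly your truncation, $g_R(x)=\max\{-R,\min\{x,R\}\}$, followed by dominated convergence with $|E_0|$ as dominating function. It moves the truncated integrals back to $\mu$ by asserting $\lim_k\int g_R\circ E_0\,\dd\amu_{T_k}=\int g_R\circ E_0\,\dd\mu$, citing Corollary~\ref{cor:energy-conservation-averaged}. That corollary conserves only the mean of $E_0$, not the mean of $g_R\circ E_0$; the law of $E_0$ under $\mu_t$ is not invariant under the interacting dynamics. So this step does not follow from the corollary, and it does not supply the missing tail control either. What is missing is a genuinely dynamical, uniform-in-time tail estimate, for instance $\lim_{M\to\infty}\sup_{t\ge0}\int\nien\,\mathbf{1}_{\{\nien>M\}}\,\dd\mu_t=0$, which would pass to the averages $\amu_T$ by linearity. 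None of the lemmas in the paper provides it.
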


\begin{proof}
    Suppose $(T_k)_{k=1}^\infty$ is a sequence such that $T_k \to \infty$ and $\amu_{T_k} \to \mu_+$ weakly as $k\to\infty$.
    By the invariance property established in Corollary~\ref{cor:energy-conservation-averaged}, defining a bounded continuous truncation of $E_0$ and using weak$^*$ convergence, we obtain
    \[
    \int (g_R\circ E_0)\dd \mu_+=\lim_{k\to\infty}\int (g_R\circ E_0)\dd \amu_{T_k}=\int (g_R\circ E_0)\dd \mu
    \]
    for every $R>0$, where $g_R(x)\coloneqq\max\{-R,\min\{x,R\}\}$.
    By Lemma~\ref{lem:ien-from-nien}, $|g_R\circ E_0|\leq |E_0|$ provides an integrable dominating function, and the thesis follows by applying the dominated convergence theorem.
\end{proof}

Furthermore, Lemma~\ref{lem:VP-upper-bound} extends to DES, demonstrating that these states exhibit a thermodynamic property alongside their dynamical properties.

\begin{theorem}\label{thm:entrDES}
    Under the ongoing assumptions, if $\mu \in \mathfrak{I}_2$ and $\mu_+ \in S_+(\mu)$, then 
    \begin{align}
    \label{eq:allowed-entropies}
        s(\mu) \leq s(\mu_+) \leq \beta \int E_0 \dd\mu + p_\beta(H).
    \end{align}
    for every $\beta > 0$.
\end{theorem}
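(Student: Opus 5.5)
The two inequalities in \eqref{eq:allowed-entropies} are proved separately. The left one comes from conservation of entropy plus upper semicontinuity. The right one comes from the variational upper bound of Lemma~\ref{lem:VP-upper-bound} applied to $\mu_+$, together with Corollary~\ref{cor:consEnDES}.

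\textbf{Lower bound.} Fix a sequence $T_k\to\infty$ with $\amu_{T_k}\to\mu_+$ weakly. By Corollary~\ref{cor:entropy-conservation-averaged}, $s(\amu_{T_k})=s(\mu)$ for every $k$. Upper semicontinuity of $s$ along this sequence (Proposition~\ref{prop:s-prop}, via the convergence criterion of Lemma~\ref{lem:topo-on-states}) then gives
\[
s(\mu_+)\ge \limsup_{k\to\infty} s(\amu_{T_k}) = s(\mu).
\]

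\textbf{Main obstacle.} The delicate point is that upper semicontinuity of $s$ is not automatic here, because the reference measure is not normalizable. The appendix presumably proves it only on families with uniformly controlled moments, for instance uniformly bounded $\int \nien$. This is exactly what Lemma~\ref{lem:proto-tightness} provides for $(\amu_T)_{T>0}$, under the extra hypothesis \eqref{eq:ss-like}. So I would check that the hypotheses of the appendix's semicontinuity statement (cf.\ Remark~\ref{rem:concrete-frak-F}) are met by the family $(\amu_{T_k})_k$ using that lemma. If $s(\mu)=-\infty$, the lower bound is trivial.

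\textbf{Upper bound.} Lemma~\ref{lem:VP-upper-bound} applies to states in $\mathcal{I}_2$, but I do not expect to know that $\mu_+\in\mathcal I_2$: the weak limit only inherits first moments, not higher ones. I would therefore instead check that the proof of Lemma~\ref{lem:VP-upper-bound} needs only the following:
\begin{itemize}
\item $\mu_+$ is translation invariant, which holds as a weak limit of translation-invariant states;
\item $\nien\in L^1(\dd\mu_+)$, which follows from Fatou/lower semicontinuity and Lemma~\ref{lem:proto-tightness};
\item $s(\mu_+)<\infty$, which follows from Proposition~\ref{prop:s-prop} given the finite first moment.
\end{itemize}
With these, Lemma~\ref{lem:ham-from-nien} and Lemma~\ref{lem:limitHLambda} hold (they only require $L^1$). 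The finite-volume Gibbs inequality $S_{\Lambda(a)}-\int H_{\Lambda(a)}\,\dd\mu_+\le P(H_{\Lambda(a)})$ from Lemma~\ref{lem:rel-KL} still applies. Dividing by $\#\Lambda(a)$ and passing to the limit gives $s(\mu_+)\le \beta\int E_0\,\dd\mu_+ + p_\beta(H)$.

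Finally, Corollary~\ref{cor:consEnDES} gives $\int E_0\,\dd\mu_+=\int E_0\,\dd\mu$. Its dominated-convergence step should be read as using the uniform integrability of $E_0$ coming from Lemma~\ref{lem:proto-tightness} together with \eqref{eq:ss-like}. Substituting this into the previous inequality yields the stated upper bound for every $\beta>0$.
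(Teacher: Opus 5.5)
Your route for both bounds is the paper's. That includes bypassing Lemma~\ref{lem:VP-upper-bound} for $\mu_+$ and applying the finite-volume inequality from Lemma~\ref{lem:rel-KL} directly, using only translation invariance and $\nien\in L^1(\dd\mu_+)$. The gap is at the step you single out as the main obstacle, and your resolution of it does not work. Proposition~\ref{prop:s-prop} gives existence and upper semicontinuity of $s$ only on sets $\{\lambda\in\mathcal{I}:\int|\nien|^\zeta\,\dd\lambda\le M\}$ with $\zeta>1$. Lemma~\ref{lem:proto-tightness} bounds only the first moment $\int\nien\,\dd\amu_T$ uniformly in $T$; the only higher-moment bounds available, from Lemma~\ref{lem:pre-energy-cons}, grow exponentially in $t$. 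The condition $\zeta>1$ is not cosmetic. Semicontinuity is inherited from Lemmas~\ref{lem:lim-int-phi} and~\ref{lem:prop-S}, which need $\lambda\mapsto\int\psi_\Lambda\,\dd\lambda$ to be weakly continuous, i.e.\ uniform integrability, and a uniform $L^1$ bound does not give that. So the hypothesis check you propose for $(\amu_{T_k})_k$ fails. For the same reason, your claim that $s(\mu_+)<\infty$ follows ``given the finite first moment'' is unjustified.

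The missing idea, which is exactly what the paper does, is to feed Proposition~\ref{prop:s-prop} a different single-site function: fix $\zeta>1$ and use $\nifn=(\nien)^{1/\zeta}$. This function has the required properties:
\begin{itemize}
\item it is continuous and nonnegative;
\item by (\ref{it:P4}), $\Exp{-c\nifn}$ is dominated by a stretched exponential in $(q_0,p_0)$, hence lies in $L^1(\dd\vol_{\{0\}})$ for every $c>0$;
\item $\int|\nifn|^\zeta\,\dd\amu_T=\int\nien\,\dd\amu_T\le M$ uniformly in $T$, and the same bound passes to $\mu_+$ by the Portmanteau theorem, since $\nien\ge 0$ is continuous.
\end{itemize}
The definition of $s$ does not involve this auxiliary function. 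Proposition~\ref{prop:s-prop} therefore gives that $s(\mu_+)$ is well defined and $<\infty$, and that $s(\mu_+)\ge\limsup_k s(\amu_{T_k})=s(\mu)$. With this, your upper-bound argument goes through as written. Finally, drop your gloss on Corollary~\ref{cor:consEnDES}: for the same reason, Lemma~\ref{lem:proto-tightness} yields a uniform $L^1$ bound, not uniform integrability of $E_0$. The paper's proof of the theorem, like yours, simply invokes that corollary.
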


\begin{proof}
    Consider the auxiliary noninteracting energy 
    $\nifn = (\nien)^{1/\zeta}$
    for some arbitrary $\zeta > 1$.
    Thanks to~(\ref{it:P4}), we have $\Exp{-c\nifn} \in L^1(\dd\vol)$ for all $c > 0$.
    Also, by Lemma~\ref{lem:proto-tightness}, the integrals $\int |\nifn|^\zeta \dd\amu_T = \int \nien \dd\amu_T$ are uniformly bounded in $T > 0$. Hence, we can apply Proposition~\ref{prop:s-prop} with this auxiliary noninteracting energy in order to show that $s(\mu_+)$ is well-defined and that it satisfies $s(\mu_+) \geq \limsup_{T\to\infty} s(\amu_T) = s(\mu)$, thanks to the invariance property in Corollary~\ref{cor:entropy-conservation-averaged}. This establishes the desired lower bound.

    By Lemma~\ref{lem:proto-tightness} again, 
    we have $\nien \in L^1(\dd\mu_+)$. Thus, by Lemma~\ref{lem:ham-from-nien}, we have $H_\Lambda \in L^1(\dd\mu_+)$.
    For every fixed $\Lambda$, the variational upper bound obtained from Lemma~\ref{lem:rel-KL} yields
    \begin{align*}
        S_\Lambda(\mu_+\res{\Lambda}) \leq \beta \int H_\Lambda \dd\mu_+ + P_\beta(H_\Lambda).
    \end{align*}
    Dividing by the volume and taking the limit superior using Proposition~\ref{prop:s-prop}, Lemma~\ref{lem:limitHLambda} and Corollary~\ref{cor:consEnDES}, we obtain the upper bound.
\end{proof}

We now briefly address whether the upper and lower bounds in~\eqref{eq:allowed-entropies} can coincide, and if so, under what circumstances. In what follows, we call $\lambda$ a \emph{translation-invariant thermal equilibrium state at inverse temperature $\beta$}, or \emph{$\beta$-equilibrium state} for short, if it is translational invariant and it satisfies 
\begin{align*}
    s(\lambda) - \beta \int E_0 \dd\lambda = p_\beta(H).
\end{align*}

\begin{lemma}
\label{lem:existence-eq}
    Under the ongoing assumptions, for every $\beta > 0$, there exists at least one $\beta$-equilibrium state.
\end{lemma}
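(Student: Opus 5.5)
We will construct a $\beta$-equilibrium state as a limit of periodized finite-volume Gibbs measures, and then use the upper bound of Lemma~\ref{lem:VP-upper-bound} together with upper semicontinuity of the specific entropy (Proposition~\ref{prop:s-prop}) to show that the limit saturates the variational inequality.

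\textbf{Finite-volume Gibbs measures.} For each $a\in\nn$, consider the probability measure $\lambda_a := \Exp{-\beta H_{\Lambda(a)} - P_\beta(H_{\Lambda(a)})}\dd\vol_{\Lambda(a)}$ on $\bigtimes_{i\in\Lambda(a)}T^*X_i$. It is well defined because \eqref{eq:ss-like} gives $H_{\Lambda(a)} \geq \sum_{i\in\Lambda(a)}(K_{\{i\}}+\epsilon W_{\{i\}})$, and $\Exp{-c W_{\{i\}}}\in L^1$ by~(\ref{it:P4}). Extend $\lambda_a$ to $\Omega$ as the product of independent copies on the translates $\Lambda(a)+2a\zz^\nu$. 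Then translation-average over $j\in\Lambda(a)$ as in Lemma~\ref{lemma:periodization}, obtaining $\overline\lambda_a\in\mathcal I$.

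\textbf{Key properties of $\overline\lambda_a$.} By Lemma~\ref{lemma:periodization} and the Gibbs identity,
\[
s(\overline\lambda_a)=\frac{S_{\Lambda(a)}(\lambda_a)}{\#\Lambda(a)}=\frac{\beta\int H_{\Lambda(a)}\dd\lambda_a+P_\beta(H_{\Lambda(a)})}{\#\Lambda(a)}.
\]
We also need uniform moment bounds $\sup_a\int (\nien)^\zeta\dd\overline\lambda_a<\infty$ for some $\zeta>\max\{1,\nu/2\}$. Here the plan is to change $\beta$ to $(1-\delta)\beta$ and use the lower bound on $H_{\Lambda(a)}$ together with the boundedness of $\int \Exp{\delta\beta \epsilon \nieni}$-type integrals. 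Any power of $\nien$ is then controlled by $\Exp{\delta'\nien}$, and the ratio of partition functions grows only like $\Exp{O(\#\Lambda(a))}$, which is compensated after dividing by volume. By Markov's inequality as in Corollary~\ref{cor:tight}, the family $(\overline\lambda_a)$ is tight. Prokhorov's theorem then yields a weak limit point $\lambda$, which is translation invariant and lies in $\mathcal I_2$ by lower semicontinuity of $\int(\nien)^\zeta$.

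\textbf{Saturation.} This is the step I expect to be the main obstacle. We need
\[
s(\lambda)-\beta\int E_0\dd\lambda\ \ge\ \limsup_a\Bigl(s(\overline\lambda_a)-\beta\int E_0\dd\overline\lambda_a\Bigr)\ \ge\ p_\beta(H).
\]
For the first inequality, upper semicontinuity of $s$ follows from Proposition~\ref{prop:s-prop} under the uniform moment bound. Continuity of $\int E_0$ along the subsequence follows from truncation plus uniform integrability, since $\zeta>1$. For the second inequality, note that $\int E_0\dd\overline\lambda_a$ differs from $\int H_{\Lambda(a)}\dd\lambda_a/\#\Lambda(a)$ only by boundary terms. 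These come from interactions crossing $\partial^{\textnormal{int}}\Lambda(a)$, which are absent under $\lambda_a$, and by~(\ref{it:D2}) and the moment bound they are $O(a^{\nu-1})$ relative to $a^\nu$. We then pass to a subsequence realizing the $\limsup$ in $p_\beta(H)$.

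Combining these with Lemma~\ref{lem:VP-upper-bound} gives equality, so $\lambda$ is a $\beta$-equilibrium state. The delicate points are the uniform (in $a$) control of the boundary terms and of the moments of the periodized measures.
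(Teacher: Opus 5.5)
Your route is the paper's: periodized free-boundary Gibbs measures via Lemma~\ref{lemma:periodization}, tightness, a weak limit point, upper semicontinuity of $s$, and Lemma~\ref{lem:VP-upper-bound} for the reverse inequality. Extracting first a subsequence that realizes the $\limsup$ defining $p_\beta(H)$ is the right order.

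\textbf{The higher-moment bound does not follow.} You rest several steps on $\sup_a\int(\nien)^\zeta\dd\overline\lambda_a<\infty$ with $\zeta>1$: the semicontinuity of $s$, the continuity of $\int E_0$, $\lambda\in\mathcal I_2$, and the boundary estimate. Your argument does not give this bound.
\begin{itemize}
\item Used site by site, comparing $Z_{(1-\delta)\beta}$ with $Z_\beta$ bounds $\int\Exp{\delta'\nieni}\dd\lambda_a$ only by $\Exp{O(\#\Lambda(a))}$. Averaging over $i\in\Lambda(a)$ divides by $\#\Lambda(a)$, not by an exponential, so nothing is compensated.
\item What the comparison does give is $\ln\int\Exp{\delta'\sum_{i\in\Lambda(a)}\nieni}\dd\lambda_a=O(\#\Lambda(a))$. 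By Jensen this yields only $\sup_a\int\nien\dd\overline\lambda_a<\infty$. An exponential bound on the sum does not control averaged $\zeta$-th moments for $\zeta>1$.
\end{itemize}
This first-moment bound is all the paper's route needs.
\begin{itemize}
\item For the entropy, argue as in the lower bound of Theorem~\ref{thm:entrDES}. Apply Proposition~\ref{prop:s-prop} to $\nifn=(\nien)^{1/\zeta}$, whose $\zeta$-th moment is the first moment of $\nien$, to get $s(\lambda)\ge\limsup s(\overline\lambda_a)$.
\item For the energy you need only $\int E_0\dd\lambda\le\liminf\int E_0\dd\overline\lambda_a$, not continuity. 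This follows from lower semicontinuity of $\mu\mapsto\int H_{\Lambda(b)}\dd\mu$, since $H_{\Lambda(b)}$ is continuous and bounded below by the stability you already use for the partition function. Combine it with the boundary estimate in the proof of Lemma~\ref{lem:limitHLambda}, which is uniform over translation-invariant states with bounded first moment, and let $b\to\infty$.
\item Likewise, the proof of Lemma~\ref{lem:VP-upper-bound} only uses $\nien\in L^1(\dd\lambda)$ once $s$ is handled through $\nifn$, so you do not need $\lambda\in\mathcal I_2$.
\end{itemize}

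\textbf{The boundary estimate is also unsupported.} The paper's sketch does not spell this step out either, but your stated justification does not work.
\begin{itemize}
\item What you need is that the interactions between adjacent independent copies in $\overline\lambda_a$ contribute at most $o(\#\Lambda(a))$ from above.
\item By (\ref{it:D2}), this reduces to showing that the sites within distance $\irange$ of $\partial\Lambda(a)$ carry $o(\#\Lambda(a))$ of the expected on-site energy under the free-boundary measure $\lambda_a$.
\item A moment bound on $\overline\lambda_a$ is a box average. It controls $\sum_{i\in\Lambda(a)}\int\nieni\dd\lambda_a=O(\#\Lambda(a))$, but says nothing about how much of this sits on the $O(a^{\nu-1})$ boundary sites.
\end{itemize}
You therefore need either per-site estimates near a free boundary (superstability estimates in the style of Ruelle), or a way around them. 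One way is a pigeonhole argument:
\begin{itemize}
\item Among $K=K(a)\to\infty$, $K=o(a)$, disjoint nested layers of width $2\irange$, pick one carrying expected energy $O(\#\Lambda(a)/K)$.
\item Replace $\lambda_a$ by its marginal on the box $\Lambda(a')$ enclosed by that layer, and periodize that marginal instead.
\item The entropy lost is controlled by subadditivity and $S_{\{i\}}\le\beta'\int\nieni\dd\lambda_a+\ln\int\Exp{-\beta'\nien}\dd\vol_{\{0\}}$ with $\beta'\to0$ slowly.
\item The energy of the discarded shell is bounded below by stability.
\item Both the cut interactions and the inter-copy interactions live in the chosen layer, so they contribute $O(\#\Lambda(a)/K)$.
\end{itemize}
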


\begin{proof}
    Rescaling $H_\Lambda$, it suffices to consider the case $\beta = 1$. Let us construct a sequence $(\bar{\lambda}^{(a)})_{a\in\nn}$ following the procedure in Lemma~\ref{lemma:periodization} for $\lambda_{\Lambda(a)}$ defined by the relation
    \[ 
        \dd \lambda_{\Lambda(a)} = \frac{\Exp{- H_{\Lambda(a)}}}{\int \Exp{- H_{\Lambda(a)}}\dd\vol_{\Lambda(a)}} \dd\vol_{\Lambda(a)}.
    \]
    In view of~\eqref{eq:ss-like} and~(\ref{it:P4}), the normalizing constant in the denominator is indeed finite. Then, by Lemma~\ref{lemma:periodization} and a standard computation,
    \begin{align*}
        s(\bar{\lambda}^{(a)}) = \frac{S_{\Lambda(a)}(\lambda_{\Lambda(a)})}{\#\Lambda(a)} = \frac{1}{\#\Lambda(a)}\int H_{\Lambda(a)} \dd\lambda_{\Lambda(a)} + \frac{1}{\#\Lambda(a)} P(H_{\Lambda(a)}).
    \end{align*}
    Adapting the arguments used in Lemma~\ref{lem:proto-tightness} and Corollary~\ref{cor:tight}, one can show that $\bar{\lambda}^{(a)} \in \mathcal{I}_2$ for all $a\in\nn$ and that $(\bar{\lambda}^{(a)})_{a\in\nn}$ is tight.
    Then, in view of Lemma~\ref{lem:VP-upper-bound}, and adapting the arguments used for the lower bound in Theorem~\ref{thm:entrDES}, any weak limit point~$\lambda$ must lie in~$\mathcal{I}_2$ and have specific entropy equal to $\int E_0 \dd \lambda +p (H)$.
\end{proof}

\begin{remark}
\label{rem:product-equilibrium}
    It is easy to see using an argument similar to that used in the proof of Lemma~\ref{lem:S-subadd} that replacing a measure~$\lambda$ with the product of its marginal in position ($\bq$) and its marginal in momentum ($\bp$) cannot decrease the value of $s(\lambda)$ or increase the value $\beta \int E_0 \dd\lambda$. 
    In particular, if the $\beta$-equilibrium state in Lemma~\ref{lem:existence-eq} is unique, then it must be of this product form.
    
    Working with this product Ansatz, the variational principle splits into two parts as in Remark~\ref{rem:split-press}. 
    When~$X_i$ is a torus (the rotator case from Remark~\ref{rem:simplify-torus}), the variational principle for the potential part and its associated thermodynamic formalism fall within well-charted territory because of our finite-range assumption~(\ref{it:F}); see e.g.~\cite{Isr,Sim,vEFS93,Geo}. As for the momentum part, it is, although unbounded, both a simple example of a product specification and a simple example of a massive Gaussian free field; we have already noted that the pressure is strictly convex and smooth, and a straightforward computation shows that a suitable Gaussian marginal in momentum is the unique maximizer for its part of the variational principle.
\end{remark}

Given this, we say that the translation-invariant system starting in the initial state~$\mu$ has the \emph{property of approach to thermal equilibrium at inverse temperature~$\beta$} if $\mathcal{S}_+(\mu) = \{\mu_+\}$ and $\mu_+$ has an entropy that saturates the upper bound~\eqref{eq:allowed-entropies} in Theorem~\ref{thm:entrDES}, i.e.\ is a $\beta$-equilibrium state.

We only expect approach to thermal equilibrium to be conceivable if there exists a positive inverse temperature~$\beta$ that is thermodynamically compatible with the initial mean energy per particle in the sense that
\begin{align*}
    -D_\beta^+ p_\beta(H) \leq \int_\Omega E_0 \dd\mu \leq -D_\beta^- p_{\beta}(H);
\end{align*}
this in particular requires 
\begin{align*}
    -\lim_{\beta\uparrow\infty}  D_\beta^+ p_\beta(H)  \leq \int_\Omega E_0 \dd\mu \leq -\lim_{\beta\downarrow 0} D_\beta^- p_\beta(H).
\end{align*}
In view of the strict convexity property from Remark~\ref{rem:split-press},
such an inverse temperature~$\beta$ is uniquely determined by $\int E_0 \dd\mu$ in the interior of this range.

We now briefly comment on the trivial fact that ``approach to equilibrium'' could technically hold because $\mu$ is already a thermal equilibrium state.

\begin{lemma}
\label{cor:preservation-equil}
    Under the ongoing assumptions, if $\mu \in \mathcal{I}_2$ is a $\beta$-equilibrium state, then the same is true for $\mu_t$ for all~$t\in\rr$.
\end{lemma}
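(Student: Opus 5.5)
The plan is to combine the two conservation results with the definition of a $\beta$-equilibrium state. The argument has three steps: show $\mu_t$ is an admissible state, show its energy and entropy are unchanged, and then substitute into the defining identity.

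\textbf{Step 1: $\mu_t$ is admissible.} Fix $t\in\rr$. By Corollary~\ref{cor:pre-energy-cons}, $\mu_t\in\mathcal{I}_2$. This gives two facts. First, $\mu_t$ is translation invariant; this also follows directly because the dynamics commutes with lattice translations, which one sees from~(\ref{it:TI}) at the level of the severed dynamics and then in the limit of Lemma~\ref{lem:dyn-bounds}. Second, $E_0\in L^1(\dd\mu_t)$, so $\int E_0\dd\mu_t$ is finite.

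\textbf{Step 2: energy and entropy are conserved.} Proposition~\ref{prop:energy-conservation} gives
\[
\int E_0\dd\mu_t=\int E_0\circ\tau_t\dd\mu=\int E_0\dd\mu .
\]
Theorem~\ref{thm:conservation-entropy} gives $s(\mu_t)=s(\mu)$. Both results are stated for all $t\in\rr$. For negative times this relies on the time-reversal symmetry of Proposition~\ref{prop:TRI}, which is already built into those statements.

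\textbf{Step 3: conclusion.} Substituting these two identities,
\[
s(\mu_t)-\beta\int E_0\dd\mu_t = s(\mu)-\beta\int E_0\dd\mu = p_\beta(H).
\]
Hence $\mu_t$ is a $\beta$-equilibrium state.

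\textbf{Main obstacle.} The only real subtlety is checking that every quantity is a well-defined finite number, so that the subtraction is legitimate rather than an $\infty-\infty$ expression. Since $\mu$ is a $\beta$-equilibrium state, $s(\mu)$ is finite: Lemma~\ref{lem:VP-upper-bound} bounds it above and the defining equality fixes its value. It follows that $s(\mu_t)$ is finite as well.

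The other point is that "conservation of energy" here refers to $\int E_0\dd\mu_t$. The variational principle is also written with $E_0$, so the two quantities match and no limit of $H_{\Lambda(a)}/\#\Lambda(a)$ is needed; if one were, Lemma~\ref{lem:limitHLambda} would supply it.

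Optionally, one can add a maximality remark. Lemma~\ref{lem:VP-upper-bound} applied to $\mu_t\in\mathcal{I}_2$ gives only the inequality $\leq p_\beta(H)$, and Step 3 shows that this inequality is attained.
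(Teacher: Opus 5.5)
Your proposal is correct and follows the paper's own one-line argument: energy conservation (Proposition~\ref{prop:energy-conservation}) and entropy conservation (Theorem~\ref{thm:conservation-entropy}), together with the fact that $p_\beta(H)$ does not depend on the evolution. The extra bookkeeping you add, namely translation invariance of $\mu_t$ and finiteness of $\int E_0 \dd\mu_t$ via Corollary~\ref{cor:pre-energy-cons}, is left implicit in the paper and is sound. One small precision: the severed dynamics does not itself commute with $\mathsf{T}^*_j$, since conjugating it only shifts the box, as with $\tau_t^{a,j}$ in the proof of Theorem~\ref{thm:conservation-entropy}; commutation of $\tau_t$ with translations therefore genuinely comes from the limit $a\to\infty$, or from uniqueness of solutions in $\mathfrak{B}_r$.
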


\begin{proof}
    Since the pressure $p_\beta$ does not depend on the evolution, this immediately follows from Proposition~\ref{prop:energy-conservation} and Theorem~\ref{thm:conservation-entropy}.
\end{proof}

\begin{corollary}
\label{cor:invar-unique-equil}
    Under the ongoing assumptions, if $\mu \in \mathcal{I}_2$ is \emph{the unique} $\beta$-equilibrium state for some $\beta$, then $\amu_T = \mu_t = \mu$ for all~$T> 0$ and~$t\in\rr$. 
\end{corollary}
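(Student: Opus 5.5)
The plan is to combine Lemma~\ref{cor:preservation-equil} with the uniqueness hypothesis, and then pass the identity $\mu_t=\mu$ through the time average defining $\amu_T$.

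First, for every fixed $t\in\rr$ I would apply Lemma~\ref{cor:preservation-equil}: since $\mu\in\mathcal{I}_2$ is a $\beta$-equilibrium state, so is $\mu_t$. To invoke uniqueness, $\mu_t$ must be an admissible competitor. The definition of $\beta$-equilibrium asks for translation invariance, which holds because $\mu_t\in\mathcal{I}_2\subseteq\mathcal{I}$ by Corollary~\ref{cor:pre-energy-cons}. The defining identity $s(\mu_t)-\beta\int E_0\dd\mu_t=p_\beta(H)$ is what Lemma~\ref{cor:preservation-equil} provides, via Proposition~\ref{prop:energy-conservation} and Theorem~\ref{thm:conservation-entropy}. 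Uniqueness of the $\beta$-equilibrium state then forces $\mu_t=\mu$ for every $t\in\rr$, including negative times, since Lemma~\ref{cor:preservation-equil} covers all $t$.

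Second, for the time averages I would check equality against an arbitrary bounded measurable (or bounded continuous, strictly local Lipschitz) observable $f$. By the definition of $\amu_T$ as a Bochner/weak integral,
\[
\int f\dd\amu_T=\frac1T\int_0^T\int f\dd\mu_t\dd t=\frac1T\int_0^T\int f\dd\mu\dd t=\int f\dd\mu .
\]
Measurability of $t\mapsto\int f\dd\mu_t$ is no issue here, since the integrand is constant. Because such $f$ determine the measure (Lemma~\ref{lem:topo-on-states}), it follows that $\amu_T=\mu$ for all $T>0$.

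The only step requiring care is the first: one must be sure that uniqueness is meant among all translation-invariant states satisfying the variational equality, so that $\mu_t$ qualifies without any extra hypothesis such as membership in $\mathcal{I}_2$. Even if uniqueness were only meant within $\mathcal{I}_2$, Corollary~\ref{cor:pre-energy-cons} supplies $\mu_t\in\mathcal{I}_2$, so the argument goes through either way. The rest is bookkeeping.
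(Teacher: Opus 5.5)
Your proposal is correct and matches the paper's argument. The paper gives no separate proof: it states the corollary as an immediate consequence of Lemma~\ref{cor:preservation-equil}, where uniqueness forces $\mu_t=\mu$ for all $t\in\rr$ and $\amu_T=\mu$ then follows from the definition of the time average. Your extra check that $\mu_t$ is translation invariant, indeed in $\mathcal{I}_2$ by Corollary~\ref{cor:pre-energy-cons}, and so is an admissible competitor is bookkeeping the paper leaves implicit.
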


This can be reformulated as a statement about the necessity, in the uniqueness regime, of a jump in the specific entropy for a ``nontrivial'' approach to equilibrium.

\begin{corollary}
    Under the ongoing assumptions, if the~$\beta$-equilibrium state is unique, then the scenario $\mu_+ = \amu_T = \mu_t = \mu$ for all~$T> 0$ and~$t\in\rr$ is the only one in which approach to equilibrium at inverse temperature $\beta$ does not require $s(\mu_+) > s(\mu)$.
\end{corollary}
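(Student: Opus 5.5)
The statement is the last corollary: in the uniqueness regime, if approach to equilibrium at inverse temperature $\beta$ holds and $s(\mu_+) = s(\mu)$, then $\mu_+ = \amu_T = \mu_t = \mu$ for all $T>0$ and $t\in\rr$. I will prove this contrapositive form, which is equivalent to the statement.

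\textbf{Step 1 (identify $\mu_+$).} Assume approach to equilibrium at $\beta$. Then $\mathcal{S}_+(\mu)=\{\mu_+\}$ and $\mu_+$ is a $\beta$-equilibrium state. By uniqueness, $\mu_+$ is \emph{the} $\beta$-equilibrium state, i.e.
\begin{equation*}
s(\mu_+) - \beta \int E_0 \dd\mu_+ = p_\beta(H).
\end{equation*}

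\textbf{Step 2 (transfer to $\mu$).} Corollary~\ref{cor:consEnDES} gives $\int E_0 \dd\mu_+ = \int E_0 \dd\mu$. Suppose $s(\mu_+) \leq s(\mu)$. Theorem~\ref{thm:entrDES} gives the reverse inequality, so $s(\mu_+) = s(\mu)$. Substituting both equalities into the display above yields
\begin{equation*}
s(\mu) - \beta\int E_0\dd\mu = p_\beta(H).
\end{equation*}
Since $\mu\in\mathcal I_2$ is translation invariant, $\mu$ is a $\beta$-equilibrium state. By uniqueness, $\mu = \mu_+$.

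\textbf{Step 3 (invariance).} Corollary~\ref{cor:invar-unique-equil} now applies to $\mu$ and gives $\amu_T=\mu_t=\mu$ for all $T>0$ and $t\in\rr$. Combined with Step~2, this yields $\mu_+=\amu_T=\mu_t=\mu$, which is the stated scenario.

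\textbf{Step 4 (the scenario does occur).} It remains to check that this scenario is genuinely compatible with approach to equilibrium, so the exception is not vacuous. If $\mu$ is the unique $\beta$-equilibrium state, Corollary~\ref{cor:invar-unique-equil} gives $\amu_T=\mu$ for all $T>0$. Hence the only weak$^*$ limit point is $\mu$, so $\mathcal S_+(\mu)=\{\mu\}$ and $\mu$ saturates the bound. Here $s(\mu_+)=s(\mu)$, consistent with the claim.

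\textbf{Main obstacle.} Only Step~2 needs care. We need $\mu\in\mathcal I_2$ so that the $\beta$-equilibrium definition applies to it. We also need all quantities to be finite so that subtracting in the variational identity is legitimate. Finiteness of $\int E_0\dd\mu$ comes from Corollary~\ref{cor:important-I2}, and finiteness of $p_\beta(H)$ from the equality satisfied by $\mu_+$. The equality for $\mu_+$ then forces $s(\mu_+)$ to be finite as well.
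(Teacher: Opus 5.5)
Your proof is correct and follows the reasoning the paper leaves implicit when it presents this corollary as a reformulation of Corollary~\ref{cor:invar-unique-equil}. Corollary~\ref{cor:consEnDES} and Theorem~\ref{thm:entrDES} show that, without an entropy jump, $\mu$ itself saturates the variational bound, so uniqueness gives $\mu=\mu_+$ and Corollary~\ref{cor:invar-unique-equil} does the rest. Two minor remarks: the finiteness discussion in your last paragraph is unnecessary and circular as written, because you only substitute equalities and $\int E_0 \dd\mu$ is finite; and in Step~4 you should note that the unique $\beta$-equilibrium state lies in $\mathcal{I}_2$, which the construction in Lemma~\ref{lem:existence-eq} guarantees.
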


Because of the lack of general, fully fleshed-out thermodynamic formalism and theory of conserved quantities, it appears reasonable to only further discuss the relations between the microscopic dynamics and macroscopic entropy growth in more specialized settings. This will be the topic of a future work, with lattices of rotators and the classical Heisenberg model as natural candidates.

\appendix 

\section{Facts on entropy}
\label{app:ent}

\subsection{Fixed finite-dimensional space}
\label{app:ent-finite}

In this subsection, we work on a fixed, finite-dimensional Euclidean space, torus or finite product thereof, equipped with its Borel $\sigma$-algebra and volume measure $\vol$. Importantly, the measure $\vol$ is not assumed to be normalizable, as this property fails for the phase space used to describe a finite region $\Lambda \Subset \zz^\nu$, i.e.\ $\bigtimes_{i \in \Lambda} T^*X_i$. 

Given a probability measure $\beta$ on this space, we set
\begin{align*}
    S(\beta) :=
        \begin{cases}
            \int -\ln \frac{\dd \beta}{\dd \vol} \dd\beta
                & \text{ if } \beta\ll\vol \text{ and the positive part of the integrand is in } L^1(\dd\beta), \\
            \infty 
                & \text{ otherwise.}
        \end{cases}
\end{align*}
As this is relative to~$\vol$, which is not necessarily normalizable, some care is required in deriving the desired properties of the map $\beta \mapsto S(\beta)$ from those of the usual relative entropy (or Kullback--Leibler divergence) for a pair of probability measures,
\begin{align*}
    S(\beta|\lambda) :=
        \begin{cases}
            \int \ln \frac{\dd \beta}{\dd \lambda} \dd\beta
                & \text{ if } \beta\ll\lambda \\
            \infty 
                & \text{ otherwise.}
        \end{cases}
\end{align*}
The following result is standard; see e.g.~\cite[\S{15.1}]{Geo}. Throughout this appendix, we will use $\phi_+$ [resp.~$\phi_-$] for the positive [resp.~negative] part of $\phi$, i.e.\ $\max\{0,\phi\}$ [resp.~$\max\{0,-\phi\}$]. In particular, $\phi = \phi_+ - \phi_-$.

\begin{lemma}
\label{lem:prop-KL}
    For every pair of probability measures $\beta$ and $\lambda$ with $\beta\ll\lambda$, the negative part of $\ln \frac{\dd\beta}{\dd\lambda}$ lies in $L^1(\dd\beta)$, and 
    $
        S(\beta|\lambda) \in [0,\infty].
    $
    Moreover, for every fixed probability measure $\lambda$, the map $\mu \mapsto S(\mu|\lambda)$ is convex and lower-semicontinuous with respect to weak convergence.
\end{lemma}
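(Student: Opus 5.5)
The plan is to use the classical Gibbs variational (Donsker--Varadhan) representation of relative entropy, which gives all three claims at once. First, for the integrability of the negative part: set $h = \frac{\dd\beta}{\dd\lambda}$. Since $x\ln x \geq -1/e$ for all $x \geq 0$, we get
\[
\int (\ln h)_- \dd\beta = \int h\,(\ln h)_- \dd\lambda \leq e^{-1} \lambda(\{h<1\}) \leq e^{-1},
\]
so $\int \ln h \dd\beta$ is well defined in $(-\infty,\infty]$. Nonnegativity then follows from Jensen's inequality applied to the convex function $x\mapsto x\ln x$ under $\lambda$: $\int h\ln h\dd\lambda \geq (\int h\dd\lambda)\ln(\int h\dd\lambda) = 0$. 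Alternatively, one can use $\ln x \leq x-1$ applied to $1/h$ on $\{h>0\}$, which gives $-\int \ln h\dd\beta \leq \int_{\{h>0\}} (1/h - 1)\, h \dd\lambda \leq 0$.

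For convexity and lower semicontinuity, I would establish the variational formula
\[
S(\mu|\lambda) = \sup_{g \in C_b} \Big( \int g \dd\mu - \ln \int \Exp{g}\dd\lambda \Big),
\]
with the supremum taken over bounded continuous functions $g$. Once this holds, $\mu \mapsto S(\mu|\lambda)$ is a supremum of affine, weakly continuous functionals of $\mu$, and is therefore convex and weakly lower semicontinuous.

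The inequality ``$\geq$'' is straightforward. If $\mu \ll \lambda$, apply the first part to the tilted probability measure $\dd\lambda_g = \Exp{g}\dd\lambda/\int\Exp{g}\dd\lambda$. Since $\lambda_g$ and $\lambda$ are mutually absolutely continuous with bounded densities, we have $S(\mu|\lambda) - S(\mu|\lambda_g) = \int g\dd\mu - \ln\int \Exp{g}\dd\lambda$, and $S(\mu|\lambda_g) \geq 0$.

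The reverse inequality ``$\leq$'' is the main obstacle. There are two cases.
\begin{itemize}
\item If $\mu\not\ll\lambda$, pick a Borel set $A$ with $\lambda(A)=0<\mu(A)$. By regularity of finite measures on this Polish space, approximate $A$ by a compact set $K\subseteq A$ from inside and by an open set $U\supseteq K$ with $\lambda(U)$ small from outside. Taking $g = M\phi$ with $\phi$ a continuous Urysohn function between $K$ and $U^\complement$ and letting $M\to\infty$ drives the functional to $+\infty$.
\item If $\mu\ll\lambda$, use truncations $g_n = \max\{-n,\min\{\ln h, n\}\}$. These are bounded and measurable, and monotone convergence on the positive part, together with dominated convergence on the negative part (integrable by the first step), gives $\int g_n\dd\mu - \ln\int\Exp{g_n}\dd\lambda \to S(\mu|\lambda)$. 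Finally, replace each bounded measurable $g_n$ by continuous bounded functions via Lusin's theorem with respect to $\mu+\lambda$, keeping the same uniform bound. Bounded convergence then controls both terms of the functional.
\end{itemize}

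This approximation step, from measurable to continuous test functions, is where care is needed. It is standard, though, and I would cite \cite[\S{15.1}]{Geo} for it rather than redo it in full.
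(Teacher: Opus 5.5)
Your proposal is correct and follows the standard argument that the paper does not reproduce but simply cites from \cite[\S{15.1}]{Geo}. For the first claim you use the elementary bound $x\ln x\geq -\Exp{-1}$ together with Jensen's inequality, and you then obtain convexity and lower semicontinuity from the Donsker--Varadhan formula, which writes $S(\,\cdot\,|\lambda)$ as a supremum of affine, weakly continuous functionals.

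Your Lusin step, which restricts the supremum to $C_b$, is what gives lower semicontinuity for weak convergence rather than only setwise or local convergence, so keep it. Two minor adjustments are worth making: write the tilting identity as $S(\mu|\lambda)=S(\mu|\lambda_g)+\int g\,\dd\mu-\ln\int\Exp{g}\dd\lambda$ so that no $\infty-\infty$ appears, and in the singular case tie the choice of $U$ to $M$, for instance $\lambda(U)\leq\Exp{-M}$.
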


However, we ultimately want to work with $S(\beta)$ instead of $S(\beta|\lambda)$ for some auxiliary probability measure~$\lambda$ because $\vol$ is the measure that is preserved by all Hamiltonian dynamics in Liouville's theorem. While the results below are part of the folklore, we elected to provide proofs for the reader's benefit\,---\,and to dissipate any doubt about the handling of signed infinities.

\begin{lemma}
\label{lem:rel-KL}
    Suppose $\beta$ is a probability measure with $\dd\beta = \rho \dd\vol$ for some $\rho \in L^1(\dd\vol)$. If $\phi_+ \in L^1(\dd\beta)$ and $\Exp{-\phi} \in L^1(\dd\vol)$, then, with $\lambda$ the unique probability measure with $\dd\lambda \propto \Exp{-\phi} \dd\vol$, we have
    \begin{equation}
    \label{eq:rel-KL}
        S(\beta) = -S(\beta|\lambda) + \int \phi \dd\beta + \ln\int \Exp{-\phi} \dd\vol.
    \end{equation}
\end{lemma}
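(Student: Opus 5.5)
The proof is a direct computation of the Radon--Nikodym derivative of $\beta$ with respect to $\lambda$, followed by a careful splitting of the logarithm so that every integral we separate is well defined.

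Set $Z := \int \Exp{-\phi}\dd\vol \in (0,\infty)$. It is positive because $\Exp{-\phi} > 0$ everywhere and $\vol$ is nonzero. Then $\dd\lambda = Z^{-1}\Exp{-\phi}\dd\vol$, and $\lambda$ and $\vol$ are mutually absolutely continuous. Hence $\beta \ll \lambda$ and
\[
    \frac{\dd\beta}{\dd\lambda} = Z\rho\Exp{\phi} \quad\text{$\lambda$-a.e.}
\]
Formally this gives
\[
    \ln\frac{\dd\beta}{\dd\lambda} = \ln Z + \ln\rho + \phi
\]
$\beta$-almost everywhere, and on the set $\{\rho>0\}$, which carries $\beta$, every term is finite. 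Integrating against $\beta$ and rearranging produces~\eqref{eq:rel-KL}. The only real work is to justify splitting the integral when some of the terms may fail to be integrable.

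\textbf{Integrability bookkeeping.} This is the main obstacle. By Lemma~\ref{lem:prop-KL}, the negative part of $g := \ln\frac{\dd\beta}{\dd\lambda}$ lies in $L^1(\dd\beta)$, and $S(\beta|\lambda) = \int g\,\dd\beta \in [0,\infty]$. By hypothesis $\phi_+ \in L^1(\dd\beta)$. The constant $\ln Z$ is integrable against $\beta$.

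Write
\[
    -\ln\rho = \ln Z + \phi - g.
\]
The positive part of the right-hand side is bounded by $|\ln Z| + \phi_+ + g_-$, which is in $L^1(\dd\beta)$. Therefore $(-\ln\rho)_+ \in L^1(\dd\beta)$, so $S(\beta)$ is given by the integral formula rather than being set to $\infty$ by convention. Moreover
\[
    S(\beta) = \int(-\ln\rho)\,\dd\beta \in [-\infty,\infty).
\]

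Now the three functions $\ln Z$, $\phi$ and $-g$ each have integrable positive part, so the integral is additive on them with values in $[-\infty,\infty)$:
\[
    \int(\ln Z + \phi - g)\,\dd\beta = \ln Z + \int\phi\,\dd\beta - \int g\,\dd\beta.
\]
To justify this, split each function into positive and negative parts and apply linearity to the positive parts, which are integrable, and monotone convergence to the negative parts. No $\infty-\infty$ ambiguity can arise, because every term is either finite or equal to $-\infty$. Here $\int\phi\,\dd\beta \in [-\infty,\infty)$ and $-\int g\,\dd\beta = -S(\beta|\lambda) \in [-\infty,0]$. This yields~\eqref{eq:rel-KL} as an identity in $[-\infty,\infty)$, including the cases where both sides equal $-\infty$.
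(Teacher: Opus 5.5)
Your proof is correct and follows essentially the same route as the paper. Both arguments identify $\ln\frac{\dd\beta}{\dd\lambda} = \ln\rho + \phi + \ln\int\Exp{-\phi}\dd\vol$ $\beta$-almost everywhere, use Lemma~\ref{lem:prop-KL} together with $\phi_+\in L^1(\dd\beta)$ to get $(-\ln\rho)_+\in L^1(\dd\beta)$, and then integrate the rearranged identity, all of whose terms have $\beta$-integrable positive parts; your explicit check that the integral is additive in $[-\infty,\infty)$ (which needs only additivity for nonnegative functions, not monotone convergence) spells out what the paper leaves implicit.
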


\begin{proof} 
    First, by Lemma~\ref{lem:prop-KL}, the negative part of $\ln \frac{\dd\beta}{\dd\lambda}$ lies in $L^1(\dd\beta)$. Second, keeping in mind that $\lambda$ is the unique probability measure with $\dd\lambda \propto \Exp{-\phi}\dd\vol$, we compute 
    \begin{align*}
        \ln \frac{\dd\beta}{\dd\lambda} 
        & 
        = \ln \frac{\dd\beta}{\dd\vol} - \ln \frac{\dd \lambda}{\dd \vol}
        = \ln \rho - \left(-\phi - \ln \int \Exp{-\phi}\dd\vol\right)
        = \ln \rho + \phi + \ln \int \Exp{-\phi}\dd\vol,
    \end{align*}
    $\beta$-almost everywhere.
    Therefore, keeping in mind that $\Exp{-\phi}\in L^1(\dd\vol)$, the assumption that $\phi_+ \in L^1(\dd\beta)$ guarantees that the negative part of $\ln \rho$ lies in $L^1(\dd\beta)$ as well.
    The above identity can be rewritten as  
    \begin{align*}
        -\ln \rho 
        &= -\ln \frac{\dd\beta}{\dd\lambda}
        + \phi + \ln\int \Exp{-\phi}\dd\vol
    \end{align*}
    where now both sides have positive parts in $L^1(\dd\beta)$. Hence, integrating with respect to $\beta$ gives the desired identity.
\end{proof}

\begin{remark}
\label{rem:var-ineq}
    If we do not assume that $\phi_+ \in L^1(\dd\beta)$, then the formula~\eqref{eq:rel-KL} may be ill defined. Indeed, the relative entropy is only known not to be~$+\infty$, so the right-hand side could involve an expression of the form $-\infty + \infty$. However, a case-by-case analysis of the possible divergences shows that, still
    \begin{align*}
        S(\beta)  \leq \int \phi \dd\beta + \ln \int \Exp{-\phi} \dd\vol
    \end{align*}
    whenever $\int \phi \dd\beta$ is well defined in $[-\infty,\infty]$.
\end{remark}

\begin{lemma}
\label{lem:lim-int-phi}
    Let $\phi$ be a measurable function, and let $\zeta > 1$ and $M > 0$ be constants. On the set $\{\beta : \int|\phi|^\zeta \dd\beta \leq M \}$, the map $\beta \mapsto \int \phi \dd\beta$ is bounded, affine and continuous with respect to weak convergence.
\end{lemma}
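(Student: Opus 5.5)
Boundedness and affinity are immediate; continuity is a standard uniform-integrability argument. Here $\phi$ is only measurable, so I read "weak convergence" in the sense used for states in the paper (test functions bounded and continuous, or bounded measurable when the marginals converge in the appropriate sense). I will argue under the natural reading that $\phi$ is continuous, or that the notion of convergence allows bounded measurable truncations of $\phi$.

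\emph{Boundedness.} By Jensen's (or H\"older's) inequality,
\[
  \left|\int \phi \dd\beta\right| \leq \left(\int |\phi|^\zeta \dd\beta\right)^{1/\zeta} \leq M^{1/\zeta}.
\]

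\emph{Affinity.} This is linearity of the integral under convex combinations $\theta\beta_1 + (1-\theta)\beta_2$. The set $\{\beta : \int |\phi|^\zeta \dd\beta \leq M\}$ is itself convex, since $\beta \mapsto \int |\phi|^\zeta \dd\beta$ is affine.

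\emph{Continuity.} Let $\beta_n \to \beta$ weakly, with all measures in the set.
\begin{enumerate}
  \item Truncate with $g_R(x) = \max\{-R,\min\{x,R\}\}$, as in the proof of Corollary~\ref{cor:consEnDES}.
  \item The tail is uniformly small by Markov's inequality with the $\zeta$-moment:
  \[
    \int |\phi - g_R\circ\phi| \dd\beta_n \leq \int_{\{|\phi|>R\}} |\phi| \dd\beta_n \leq R^{1-\zeta} \int |\phi|^\zeta \dd\beta_n \leq M R^{1-\zeta},
  \]
  uniformly in $n$. Here $\zeta > 1$ is essential.
  \item The limit $\beta$ also lies in the set. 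By lower semicontinuity, $\int |\phi|^\zeta \dd\beta \leq \liminf_n \int |\phi|^\zeta \dd\beta_n \leq M$, obtained by applying weak convergence to the bounded truncations $\min\{|\phi|^\zeta, K\}$ and using monotone convergence in $K$. So the same tail bound holds for $\beta$.
  \item The bounded function $g_R\circ\phi$ satisfies $\int g_R\circ\phi \dd\beta_n \to \int g_R\circ\phi \dd\beta$ by weak convergence.
  \item A $3\varepsilon$ argument finishes: choose $R$ so that $2MR^{1-\zeta} < \varepsilon$, then take $n$ large.
\end{enumerate}

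The main obstacle is step 4, because $\phi$ is merely measurable. For genuinely weakly continuous integration, $g_R\circ\phi$ must be bounded and continuous, which requires $\phi$ to be continuous. In the applications, $\phi$ is $\nien$ or $H_\Lambda$, which is continuous by (\ref{it:R}), so I would add or assume that hypothesis. Alternatively, if the convergence notion is the one from Lemma~\ref{lem:topo-on-states} on strictly local observables, I would approximate $g_R\circ\phi$ by strictly local Lipschitz functions in $L^1$, uniformly over the set. That uniformity is exactly what the moment bound provides via tightness.
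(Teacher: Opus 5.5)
Under the continuity hypothesis you add, your argument is correct and is the same uniform-integrability argument the paper invokes by citing Billingsley \S3: truncation, the uniform tail bound $MR^{1-\zeta}$ from the $\zeta$-moment, then weak convergence of the bounded truncation. You are right that this hypothesis is needed and is implicit in that citation. For merely measurable $\phi$ the statement fails: take $\phi=\mathbf{1}_{\{0\}}$, $\beta_n=\delta_{x_n}$ with $x_n\to 0$, $x_n\neq 0$, $\beta=\delta_0$ and $M\geq 1$. In contrast, every $\phi$ to which the paper actually applies the lemma is built from the continuous $\nien$ and is therefore continuous.

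Do drop your ``alternatively'' route through strictly local observables. By Lemma~\ref{lem:topo-on-states} that is the same notion of convergence, and the same Dirac example (a tight family) shows that no approximation of $g_R\circ\phi$ uniform over the set can exist, so tightness does not supply one.
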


\begin{proof}
    This is a standard fact related to uniform integrability; see e.g.~\cite[\S{3}]{Bil}.
\end{proof}

\begin{lemma}
\label{lem:prop-S}
    Let $\phi$ be such that $\Exp{-\phi} \in L^1(\dd\vol)$, and let $\zeta > 1$ and $M > 0$ be constants. On the set $\{\beta : \int|\phi|^\zeta \dd\beta \leq M \}$, the map $\beta \mapsto S(\beta)$ is bounded above, concave and upper-semicontinuous with respect to weak convergence.
\end{lemma}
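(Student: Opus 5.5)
The plan is to express $S$ through the identity of Lemma~\ref{lem:rel-KL}, applied with the reference weight $\Exp{-\phi}$, and then read off each property from the corresponding property of the three terms on the right-hand side. Fix $\beta$ in the set $\mathcal{K}_M := \{\beta : \int|\phi|^\zeta \dd\beta \leq M\}$ and let $\lambda$ be the probability measure with $\dd\lambda \propto \Exp{-\phi}\dd\vol$; this is well defined since $\Exp{-\phi}\in L^1(\dd\vol)$. By Jensen's (or H\"older's) inequality, $\int|\phi|\dd\beta \leq M^{1/\zeta}$, so in particular $\phi_+\in L^1(\dd\beta)$.

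First, I would handle the case $\beta \ll \vol$ with density $\rho$. Lemma~\ref{lem:rel-KL} applies and gives
\[
    S(\beta) = -S(\beta|\lambda) + \int\phi\dd\beta + \ln\int\Exp{-\phi}\dd\vol .
\]
If $\beta$ is not absolutely continuous with respect to $\vol$, then it is not absolutely continuous with respect to $\lambda$ either, since $\lambda$ and $\vol$ are mutually absolutely continuous because $\Exp{-\phi}>0$. In that case $S(\beta|\lambda)=\infty$, and the right-hand side equals $-\infty$. I would check that the definition of $S(\beta)$ is consistent with this convention, i.e.\ that $S$ should be read as $-\infty$ there. The definition in the appendix writes $\infty$ for the ill-defined case, which looks like a sign convention issue; I would state explicitly that on $\mathcal{K}_M$ the formula is taken as the working definition, or equivalently that the singular case yields $-\infty$, consistent with upper semicontinuity. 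This is the step I expect to be the main obstacle: making the bookkeeping of infinities match the stated definition, including the case where $\beta\ll\vol$ but $(-\ln\rho)_+\notin L^1(\dd\beta)$. In that last case, the proof of Lemma~\ref{lem:rel-KL} already shows that $(\ln\rho)_-$ is integrable, so the problematic part is $\ln\rho$ being too negative, which forces $S(\beta|\lambda)=\infty$ and hence the value $-\infty$ again.

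With the identity in hand, the properties follow term by term. \textbf{Bounded above:} $S(\beta|\lambda)\geq 0$ by Lemma~\ref{lem:prop-KL}, and $\int\phi\dd\beta \leq M^{1/\zeta}$, so $S(\beta) \leq M^{1/\zeta} + \ln\int\Exp{-\phi}\dd\vol$. \textbf{Concavity:} $\mathcal{K}_M$ is convex. The map $\beta\mapsto -S(\beta|\lambda)$ is concave by Lemma~\ref{lem:prop-KL}, and $\beta\mapsto\int\phi\dd\beta$ is affine by Lemma~\ref{lem:lim-int-phi}, so their sum plus a constant is concave; this holds with values in $[-\infty,\infty)$. \textbf{Upper semicontinuity:} along a weakly convergent sequence $\beta_n\to\beta$ in $\mathcal{K}_M$, the limit $\beta$ also lies in $\mathcal{K}_M$ by lower semicontinuity of $\beta\mapsto\int|\phi|^\zeta\dd\beta$ (approximating $|\phi|^\zeta$ from below by bounded continuous functions if $\phi$ is continuous; otherwise I would simply restrict to limits inside $\mathcal{K}_M$). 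Lemma~\ref{lem:lim-int-phi} gives $\int\phi\dd\beta_n\to\int\phi\dd\beta$, and lower semicontinuity of relative entropy (Lemma~\ref{lem:prop-KL}) gives $\limsup_n (-S(\beta_n|\lambda)) \leq -S(\beta|\lambda)$. Adding these yields $\limsup_n S(\beta_n)\leq S(\beta)$.
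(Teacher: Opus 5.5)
Your proposal is correct and follows the paper's argument: the paper's proof just invokes Lemmas~\ref{lem:prop-KL}, \ref{lem:rel-KL} and~\ref{lem:lim-int-phi} through the identity $S(\beta) = -S(\beta|\lambda) + \int\phi\,\dd\beta + \ln\int\Exp{-\phi}\dd\vol$, read term by term exactly as you do. Your reading of the non-absolutely-continuous case as $S(\beta)=-\infty$ is indeed the convention needed for ``bounded above'' to hold; one small correction is that on this set the case $\beta\ll\vol$ with $(-\ln\rho)_+\notin L^1(\dd\beta)$ is simply vacuous, since $(-\ln\rho)_+=(\ln\rho)_-$ is precisely what the proof of Lemma~\ref{lem:rel-KL} shows to be $\beta$-integrable, so it does not ``force'' $S(\beta|\lambda)=\infty$.
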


\begin{proof}
    This is follows directly from Lemmas~\ref{lem:prop-KL},~\ref{lem:rel-KL} and~\ref{lem:lim-int-phi}.
\end{proof}

\begin{lemma}
\label{lem:S-almost-convex}
    The map $\beta \mapsto S(\beta)$ is almost convex, in the sense that, whenever $S(\beta) < \infty$ and $S(\beta') < \infty$, we have $S(\varkappa \beta + (1-\varkappa)\beta') \leq \varkappa S(\beta) + (1-\varkappa)S(\beta') + \ln 2$
    for all~$\varkappa \in (0,1)$.
\end{lemma}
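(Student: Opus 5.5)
The plan is to reduce to the elementary pointwise inequality for the function $x \mapsto -x\ln x$ on mixtures. Write $\rho = \frac{\dd\beta}{\dd\vol}$, $\rho' = \frac{\dd\beta'}{\dd\vol}$; both exist since $S(\beta),S(\beta')<\infty$. The mixture $\beta_\varkappa := \varkappa\beta + (1-\varkappa)\beta'$ then has density $\rho_\varkappa = \varkappa\rho + (1-\varkappa)\rho'$. The inequality to prove says that $-S$ is convex up to a $\ln 2$ error, i.e.\
\[
\int \rho_\varkappa \ln\rho_\varkappa \dd\vol \geq \varkappa \int \rho\ln\rho \dd\vol + (1-\varkappa)\int\rho'\ln\rho' \dd\vol - \ln 2 .
\]
We use monotonicity of the logarithm rather than concavity. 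Pointwise, $\rho_\varkappa \geq \varkappa \rho$ and $\rho_\varkappa \geq (1-\varkappa)\rho'$. Hence
\[
\rho_\varkappa\ln\rho_\varkappa = \varkappa\rho\ln\rho_\varkappa + (1-\varkappa)\rho'\ln\rho_\varkappa \geq \varkappa\rho\ln\rho + (1-\varkappa)\rho'\ln\rho' + \varkappa\ln\varkappa\,\rho + (1-\varkappa)\ln(1-\varkappa)\,\rho'.
\]
Integrating against $\vol$, the last two terms give $\varkappa\ln\varkappa + (1-\varkappa)\ln(1-\varkappa) \geq -\ln 2$. This is the binary Shannon entropy bound, and it yields the claim.

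The main obstacle is bookkeeping of integrability and signed infinities, since $\vol$ is not normalizable. We must check that $S(\beta_\varkappa)$ is well defined, meaning the positive part of $-\ln\rho_\varkappa$ lies in $L^1(\dd\beta_\varkappa)$, and that the integrations above are legitimate. The pointwise inequality can be read as an upper bound
\[
(-\rho_\varkappa\ln\rho_\varkappa) \leq \varkappa(-\rho\ln\rho) + (1-\varkappa)(-\rho'\ln\rho') + h(\varkappa)\,\rho_\varkappa\text{-type terms},
\]
where $h(\varkappa) := -\varkappa\ln\varkappa - (1-\varkappa)\ln(1-\varkappa)$. Taking positive parts, $(-\rho_\varkappa\ln\rho_\varkappa)_+$ is dominated by $\varkappa(-\rho\ln\rho)_+ + (1-\varkappa)(-\rho'\ln\rho')_+ + h(\varkappa)\rho_\varkappa$. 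Each of these terms is $\vol$-integrable by the finiteness hypothesis and normalization. So $S(\beta_\varkappa)$ falls into the first case of the definition, with value in $[-\infty,\infty)$. The integrated inequality then holds even if some entropies equal $-\infty$. If $S(\beta) = -\infty$ the right-hand side is $-\infty$ and there is nothing to prove; otherwise everything is finite and integration is linear.

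I would also record that $\beta_\varkappa \ll \vol$ trivially, so no other case of the definition arises. The constant $\ln 2$ comes out as the maximum of $h(\varkappa)$, with the sharper constant $h(\varkappa)$ available if needed.
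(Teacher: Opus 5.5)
Your proposal is correct and follows the paper's proof essentially verbatim: the pointwise bounds $\rho_\varkappa \geq \varkappa\rho$ and $\rho_\varkappa \geq (1-\varkappa)\rho'$ via monotonicity of the logarithm, integration justified by the integrability of the positive parts, and the bound $-\varkappa\ln\varkappa-(1-\varkappa)\ln(1-\varkappa)\leq\ln 2$. One minor wording slip: when $S(\beta)=-\infty$ the claim is not vacuous, since it asserts $S(\varkappa\beta+(1-\varkappa)\beta')=-\infty$; this does follow from the integrated inequality in $[-\infty,\infty)$ that you already justified.
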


\begin{proof}
    Since the function $z\mapsto -\ln z$ is nonincreasing, we have, pointwise, both
    \begin{align*}
        -\ln(\varkappa\rho+(1-\varkappa)\rho') 
        &\leq -\ln\varkappa -\ln \rho &
        &\text{and}&
         -\ln(\varkappa\rho+(1-\varkappa)\rho') 
        &\leq -\ln(1-\varkappa) -\ln \rho',
    \end{align*}
    and hence
    \begin{align*}
        &-\ln(\varkappa\rho+(1-\varkappa)\rho') \, (\varkappa\rho+(1-\varkappa)\rho') \\
        &\qquad\qquad 
        \leq -\varkappa(\ln\varkappa) \rho -\varkappa(\ln \rho) \rho 
         -(1-\varkappa) (\ln(1-\varkappa)) \rho' -(1-\varkappa) (\ln \rho' ) \rho'.
    \end{align*}
    Since each term on the right-hand side has integrable positive part with respect to~$\vol$ by assumption, we can integrate both sides to obtain
    \begin{align*}
         \int -\ln(\varkappa\rho+(1-\varkappa)\rho') \dd(\varkappa\beta+(1-\varkappa)\beta')
        &\leq \varkappa  S(\beta) + (1-\varkappa) S(\beta') - [\varkappa\ln \varkappa+(1-\varkappa)\ln (1-\varkappa)].
    \end{align*}
    The result thus follows from the basic inequality $-\varkappa\ln \varkappa - (1-\varkappa)\ln (1-\varkappa) \leq \ln 2$ valid for all $\varkappa\in[0,1]$.
\end{proof}

\subsection{Thermodynamic limit}

We recall that an extended state~$\mu$ induces a consistent family~$(\mu\res{\Lambda})_{\Lambda\Subset\Omega}$ of marginals: the marginal $\mu\res{\Lambda}$ is a probability measure on $\bigtimes_{i\in\Lambda} T^*X_i$.
For a fixed finite subset~$\Lambda$ of the lattice~$\zz^\nu$, the \emph{entropy in the finite volume~$\Lambda$} is
\begin{align*}
    S_\Lambda(\mu\res{\Lambda}) \coloneqq
        \begin{cases}
            \int -\ln \frac{\dd \mu\res{\Lambda}}{\dd \vol_\Lambda} \dd\mu\res{\Lambda}
                & \text{ if } \mu\res{\Lambda} \text{is a.c.\ and the positive part of the integrand is integrable}, \\
            \infty 
                & \text{ otherwise.}
        \end{cases}
\end{align*}
For every fixed~$\Lambda \Subset \zz^\nu$, the results of Section~\ref{app:ent-finite} apply. In addition, the following lemma relates the different finite-volume entropies of a single extended state.

\begin{lemma}
\label{lem:S-subadd}
    Suppose that $\nien: T^*X_0 \to [0,\infty)$ is continuous and satisfies $\Exp{-\nien} \in L^1(\dd\vol_{\{0\}})$. If $\mu$ is a translation-invariant state such that $\nien \in L^1(\dd\mu)$, then
    the subadditivity inequality 
    \begin{align*}
        S_{\Lambda \sqcup \Lambda'}(\mu\res{\Lambda \sqcup \Lambda'}) \leq S_{\Lambda}(\mu\res{\Lambda}) + S_{\Lambda'}(\mu\res{\Lambda'})
    \end{align*}
    holds for all disjoint $\Lambda$ and $\Lambda'$. Moreover, equality holds when $\mu\res{\Lambda \sqcup \Lambda'} = \mu\res{\Lambda} \otimes \mu\res{\Lambda'}$.
\end{lemma}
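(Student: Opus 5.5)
The plan is to reduce subadditivity to the nonnegativity of a relative entropy (mutual information), using Lemma~\ref{lem:rel-KL} to move between the non-normalizable reference measure $\vol$ and a normalizable one. Write $\beta := \mu\res{\Lambda\sqcup\Lambda'}$, $\beta_1 := \mu\res{\Lambda}$ and $\beta_2 := \mu\res{\Lambda'}$. Introduce the confining function $\phi_\Lambda := \sum_{i\in\Lambda}\nieni$ and the probability measure $\lambda_\Lambda$ with $\dd\lambda_\Lambda \propto \Exp{-\phi_\Lambda}\dd\vol_\Lambda$. This measure is well defined because $\Exp{-\nieni}\in L^1(\dd\vol_{\{i\}})$ and $\vol_\Lambda$ is a product. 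Define $\lambda_{\Lambda'}$ and $\lambda_{\Lambda\sqcup\Lambda'} = \lambda_\Lambda\otimes\lambda_{\Lambda'}$ in the same way. By translation invariance and $\nien\in L^1(\dd\mu)$, each $\phi$ lies in $L^1$ of the corresponding marginal, and $\phi\ge 0$.

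First dispose of the trivial cases. If $S_\Lambda(\beta_1)=\infty$ or $S_{\Lambda'}(\beta_2)=\infty$, there is nothing to prove, provided the sum on the right is never of the form $\infty-\infty$. Remark~\ref{rem:var-ineq} excludes this: it gives $S_\Lambda(\beta_1)\le \int\phi_\Lambda\dd\beta_1+\ln\int\Exp{-\phi_\Lambda}\dd\vol_\Lambda<\infty$, so no entropy here equals $-\infty$. More precisely, Lemma~\ref{lem:rel-KL} shows that whenever $S<\infty$ one has
\[
S_\Lambda(\beta_1) = -S(\beta_1|\lambda_\Lambda) + \int\phi_\Lambda\dd\beta_1 + \ln\int\Exp{-\phi_\Lambda}\dd\vol_\Lambda,
\]
and the same holds for $\beta_2$ and for $\beta$. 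If $\beta$ is not absolutely continuous, then neither is one of the marginals, or $S(\beta|\lambda_\Lambda\otimes\lambda_{\Lambda'})=\infty$. In both situations the inequality is handled by reading the formula with $S(\beta)=-\infty$, or the convention for non-a.c.\ measures; I would check this convention carefully against the definition's ``$\infty$ otherwise''.

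Adding the identities, the normalization constants and the $\phi$-integrals are additive, since $\phi_{\Lambda\sqcup\Lambda'}=\phi_\Lambda+\phi_{\Lambda'}$ and $\int\Exp{-\phi_{\Lambda\sqcup\Lambda'}}\dd\vol$ factorizes. The claim therefore reduces to
\[
S(\beta|\lambda_\Lambda\otimes\lambda_{\Lambda'}) \ge S(\beta_1|\lambda_\Lambda)+S(\beta_2|\lambda_{\Lambda'}).
\]
This is the chain rule for relative entropy between probability measures:
\[
S(\beta|\lambda_\Lambda\otimes\lambda_{\Lambda'}) = S(\beta|\beta_1\otimes\beta_2)+S(\beta_1|\lambda_\Lambda)+S(\beta_2|\lambda_{\Lambda'}),
\]
with $S(\beta|\beta_1\otimes\beta_2)\ge0$ by Lemma~\ref{lem:prop-KL}. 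Equality in the product case is immediate, since then $S(\beta|\beta_1\otimes\beta_2)=0$.

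The main obstacle is justifying the chain-rule splitting when some terms may be infinite, which rules out naive subtraction. To handle it, write the densities $\rho,\rho_1,\rho_2$ relative to the reference measures. Note that $\beta\ll\beta_1\otimes\beta_2$ whenever $\beta\ll\lambda_\Lambda\otimes\lambda_{\Lambda'}$, with density $\rho/(\rho_1\rho_2)$. Then $\ln\rho = \ln\frac{\rho}{\rho_1\rho_2}+\ln\rho_1+\ln\rho_2$ holds $\beta$-a.e. The negative parts of all three terms on the right are $\beta$-integrable by Lemma~\ref{lem:prop-KL}, since each is the log-density of a probability measure against a probability measure, integrated under the appropriate marginal. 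Integrating termwise is therefore legitimate in $(-\infty,\infty]$, which yields the chain rule, including the case where some terms are infinite.
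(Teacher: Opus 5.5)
Your proof is correct, given the convention discussed at the end, but it is organized differently from the paper's.

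The paper stays with the non-normalizable measure $\vol$ throughout. It uses Lemma~\ref{lem:rel-KL} only once, to know that all three entropies are $<+\infty$. It then splits $-\ln\rho_{\Lambda\sqcup\Lambda'} = -\ln\frac{\rho_{\Lambda\sqcup\Lambda'}}{\rho_\Lambda\rho_{\Lambda'}} - \ln\rho_\Lambda - \ln\rho_{\Lambda'}$ for the $\vol$-densities and dismisses the case where the left-hand side is $-\infty$. Next it runs a chain of integrability deductions (Lemma~\ref{lem:prop-KL} plus Tonelli) until every term is integrable. Fubini and $S(\beta|\beta_1\otimes\beta_2)\geq 0$ then conclude.

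You instead use Lemma~\ref{lem:rel-KL} structurally. Tilting by $\lambda_\Lambda\otimes\lambda_{\Lambda'}$ turns each entropy into minus a relative entropy in $[0,\infty]$ plus finite constants that are exactly additive. The claim then becomes the chain rule $S(\beta|\lambda_\Lambda\otimes\lambda_{\Lambda'}) = S(\beta|\beta_1\otimes\beta_2)+S(\beta_1|\lambda_\Lambda)+S(\beta_2|\lambda_{\Lambda'})$ between probability measures. You justify its termwise integration correctly, since all negative parts are integrable.

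What your route buys is that the only infinities are $+\infty$ values of relative entropies, which add without cancellation. The case analysis therefore disappears, and you get an exact identity that remains valid when entropies equal $-\infty$. The paper needs the confining function only for the a priori bound $S<\infty$ and works with the mutual information directly, at the price of that bookkeeping.

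Two small points.

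\emph{The trivial-case paragraph.} Remark~\ref{rem:var-ineq} rules out the value $+\infty$, not $-\infty$. The value $-\infty$ does occur, exactly when $S(\beta_1|\lambda_\Lambda)=\infty$, and your chain rule already handles it. So the case ``$S_\Lambda(\beta_1)=\infty$'' is simply vacuous for absolutely continuous marginals.

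\emph{The convention you planned to check.} Since $\lambda_\Lambda\otimes\lambda_{\Lambda'}$ has an everywhere positive density, $\beta\not\ll\vol$ just means $S(\beta|\lambda_\Lambda\otimes\lambda_{\Lambda'})=\infty$. The convention must then be $S=-\infty$ for such $\beta$. This is also what the paper's proof tacitly uses: it works with densities throughout and asserts $S_\Delta<\infty$ for all $\Delta$. With the literal ``$\infty$ otherwise'' the statement would be false. Let every site carry the same random $(q_0,p_0)$, uniformly distributed on a ball. This state is translation invariant with $\nien\in L^1(\dd\mu)$, and its one-site marginals have finite entropy. Its two-site marginals, however, are supported on the diagonal, so the left-hand side would be $+\infty$ while the right-hand side is finite.
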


\begin{proof}
    First, we note that, since $\nien$ is nonnegative and continuous, and since $\mu$ is translation invariant,
    \begin{equation}
    \label{eq:mult-exp-nien}
        \int \Exp{-\sum_{i \in \Delta} \nien \circ \mathsf{T}_i^*} \dd\vol_{\Delta} = \left( \int \Exp{-\nien } \dd\vol_{\{0\}}\right)^{\#\Delta}
    \end{equation}
    and
    \begin{equation}
    \label{eq:sum-nien}
        \int \left(\sum_{i \in \Delta} \nien \circ \mathsf{T}_i^* \right) \dd\mu\res{\Delta} = \#\Delta \ \int \nien \dd\mu\res{\{0\}}
    \end{equation}
    for every $\Delta \Subset \zz^\nu$.
    Hence, using Lemma~\ref{lem:rel-KL} with the function $\phi_\Delta = \sum_{i \in \Delta} \nien \circ \mathsf{T}_i^*$, we have $S_\Delta(\mu\res{\Delta}) < \infty$ for all $\Delta \Subset \zz^\nu$. This applies to $\Delta =  \Lambda, \Lambda', \Lambda \sqcup \Lambda'$.
            
    Note that if the negative part of $-\rho_{\Lambda \sqcup \Lambda'}\ln \rho_{\Lambda \sqcup \Lambda'}$ is not integrable, then the inequality trivially holds. Hence, we can assume that $-\rho_{\Lambda \sqcup \Lambda'}\ln\rho_{\Lambda \sqcup \Lambda'}$ is actually integrable.
    Since the negative part of $\rho_{\Lambda \sqcup \Lambda'}\ln\frac{\rho_{\Lambda \sqcup \Lambda'}}{\rho_\Lambda \rho_{\Lambda'}}$ is integrable by Lemma~\ref{lem:prop-KL}, this implies that the negative part of $-\rho_{\Lambda \sqcup \Lambda'}\ln(\rho_\Lambda\rho_{\Lambda'}) = -\rho_{\Lambda \sqcup \Lambda'}\ln\rho_\Lambda -\rho_{\Lambda \sqcup \Lambda'}\ln\rho_{\Lambda'}$ is integrable. Since the positive part of $-\rho_{\Lambda \sqcup \Lambda'}\ln \rho_{\Lambda'}$ is integrable thanks to Tonelli's theorem and our assumption, this in turn implies that the negative part of $-\rho_{\Lambda \sqcup \Lambda'}\ln \rho_\Lambda$ is integrable. Using Tonelli's theorem again, we deduce that the negative part of $-\rho_\Lambda \ln \rho_\Lambda$ is integrable. Similarly, the negative part of $-\rho_{\Lambda'} \ln \rho_{\Lambda'}$ is integrable.
    In short, it suffices to prove the statement under the assumptions that $-\rho_{\Lambda}\ln\rho_{\Lambda}$, $-\rho_{\Lambda'}\ln\rho_{\Lambda'}$, $-\rho_{\Lambda \sqcup \Lambda'}\ln\rho_{\Lambda}$, $-\rho_{\Lambda \sqcup \Lambda'}\ln\rho_{\Lambda'}$, and $-\rho_{\Lambda \sqcup \Lambda'}\ln(\rho_\Lambda\rho_{\Lambda'})$ are all integrable.
    Under this strengthened assumption, we can safely compute using Fubini's theorem:
    \begin{align*}
        S_{\Lambda\sqcup\Lambda'}(\mu\res{\Lambda\sqcup\Lambda'})
            &= \int -\ln\rho_{\Lambda \sqcup \Lambda'} \dd\mu\res{\Lambda \sqcup \Lambda'}\\
            & =  -S(\mu\res {\Lambda \sqcup \Lambda'} | \mu\res \Lambda\otimes \mu \res {\Lambda'}) \\
            &\qquad{} -\iint \rho_{\Lambda \sqcup \Lambda'}(x_\Lambda,x_{\Lambda'})\ln \rho_{\Lambda }(x_\Lambda)\dd\vol_\Lambda(x_\Lambda) \dd\vol_{\Lambda'}(x_{\Lambda'}) 
                \\ &\qquad{} -\iint \rho_{\Lambda \sqcup \Lambda'}(x_\Lambda,x_{\Lambda'})\ln \rho_{ \Lambda'}(x_{\Lambda'})\dd\vol_\Lambda(x_\Lambda) \dd\vol_{\Lambda'}(x_{\Lambda'}) \\
            & \leq -\iint \rho_{\Lambda \sqcup \Lambda'}(x_\Lambda,x_{\Lambda'})\ln \rho_{\Lambda }(x_\Lambda)\dd\vol_\Lambda(x_\Lambda) \dd\vol_{\Lambda'}(x_{\Lambda'}) 
                \\ & \qquad{} -\iint \rho_{\Lambda \sqcup \Lambda'}(x_\Lambda,x_{\Lambda'})\ln \rho_{ \Lambda'}(x_{\Lambda'})\dd\vol_\Lambda(x_\Lambda) \dd\vol_{\Lambda'}(x_{\Lambda'})\\
            & = \int- \rho_\Lambda (x_\Lambda) \ln \rho_\Lambda (x_\Lambda) \dd\vol_\Lambda(x_\Lambda) + \int -\rho_{\Lambda'} (x_{\Lambda'}) \ln \rho_{\Lambda '}(x_{\Lambda'}) \dd\vol_{\Lambda'}(x_{\Lambda'}) \\
            &= S_{\Lambda}(\mu\res{\Lambda}) + S_{\Lambda'}(\mu\res{\Lambda'}),
    \end{align*}
    as desired.
\end{proof}

\begin{proposition}
\label{prop:s-prop}
    Suppose that $\nien: T^*X_0 \to [0,\infty)$ is continuous and satisfies $\Exp{-c\nien} \in L^1(\dd\vol_{\{0\}})$ for all $c > 0$, and let $\zeta > 1$ and $M > 0$ be constants. Then, on the set $\{\mu \in \mathcal{I} : \int |\nien|^\zeta\dd\mu \leq M\}$, the limits 
    \[ 
        s(\mu) := \lim_{a\to\infty} \frac{S_{\Lambda(a)} (\mu\res{\Lambda(a)})}{\#\Lambda(a)} 
    \]
    exist in~$[-\infty,\infty)$, equal their corresponding infima, and define a map $\mu \mapsto s(\mu)$ that is bounded above, affine and upper semicontinuous with respect to weak convergence.
\end{proposition}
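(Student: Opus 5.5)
The plan is to compare $S_\Lambda$ with a reference product measure, which turns the unnormalizable entropy into minus a relative entropy plus linear terms. Then I would use subadditivity for existence of the limit and lower semicontinuity of relative entropy for the upper semicontinuity. Fix $c>0$ and let $\lambda_c$ be the probability measure on $T^*X_0$ with density proportional to $\Exp{-c\nien}$. Write $\lambda_{c,\Lambda}=\bigotimes_{i\in\Lambda}\lambda_c$, and set $Z_c=\int\Exp{-c\nien}\dd\vol_{\{0\}}$. By Lemma~\ref{lem:rel-KL} with $\phi_\Lambda=c\sum_{i\in\Lambda}\nien\circ\mathsf{T}_i^*$, together with \eqref{eq:mult-exp-nien} and \eqref{eq:sum-nien}, I get
\begin{equation*}
    \frac{S_\Lambda(\mu\res{\Lambda})}{\#\Lambda} = -\frac{S(\mu\res{\Lambda}|\lambda_{c,\Lambda})}{\#\Lambda} + c\int\nien\dd\mu + \ln Z_c
\end{equation*}
for every $\mu$ in the set. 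This gives the uniform upper bound $s(\mu)\le c M^{1/\zeta}+\ln Z_c$ via Jensen. Every $S_\Lambda$ takes values in $[-\infty,\infty)$, so no $+\infty$ appears.

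\textbf{Existence of the limit.} Lemma~\ref{lem:S-subadd} and translation invariance show that $\Lambda\mapsto S_\Lambda(\mu\res{\Lambda})$ is subadditive and translation invariant. The boxes $\Lambda(a)$ do not tile one another unless the scales divide, so I would run the standard van Hove argument rather than a bare Fekete lemma. Fix $b$ and tile $\Lambda(a)$ by $\lfloor a/b\rfloor^\nu$ translates of $\Lambda(b)$ plus a remainder $R$ with $\#R=O(a^{\nu-1}b)$. Bound $S_R$ from above by $\#R\,(c\int\nien\dd\mu+\ln Z_c)$, which again uses Lemma~\ref{lem:rel-KL} and nonnegativity of relative entropy. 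This yields $\limsup_a S_{\Lambda(a)}/\#\Lambda(a)\le S_{\Lambda(b)}/\#\Lambda(b)$ for every $b$. Hence the limit exists and equals the infimum, in $[-\infty,\infty)$; negative values are harmless because the remainder is only bounded above.

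\textbf{Affinity.} Lemma~\ref{lem:S-almost-convex} gives $S_\Lambda(\varkappa\mu+(1-\varkappa)\mu')\le \varkappa S_\Lambda(\mu)+(1-\varkappa)S_\Lambda(\mu')+\ln 2$. Concavity of $S_\Lambda$ on the bounded-moment set comes from the displayed identity, since relative entropy is convex (Lemma~\ref{lem:prop-KL}) and the $\nien$ term is affine. Dividing by $\#\Lambda(a)$ kills the $\ln 2$, so $s$ is affine, with the usual convention when one of the values is $-\infty$.

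\textbf{Upper semicontinuity.} This is where I expect the main obstacle. Since $s$ is an infimum, it suffices that each $\mu\mapsto S_{\Lambda(b)}(\mu\res{\Lambda(b)})$ is upper semicontinuous under weak convergence within the set. Lemma~\ref{lem:prop-S}, applied with $\phi=\phi_{\Lambda(b)}$, gives this. Its hypothesis is $\int|\phi|^\zeta\dd\mu\le M'$, which follows from $\int|\nien|^\zeta\dd\mu\le M$ by Minkowski's inequality with $M'=(c\#\Lambda(b))^\zeta M$. The delicate point is that weak convergence of $\mu$ must imply weak convergence of the marginals $\mu\res{\Lambda(b)}$, and that the integral term $\int\nien\dd\mu$ is continuous rather than merely lower semicontinuous. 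The uniform $L^\zeta$ bound with $\zeta>1$ handles the latter through Lemma~\ref{lem:lim-int-phi}, and continuity of $\nien$ is what makes that lemma applicable. An infimum of upper semicontinuous functions is upper semicontinuous, which concludes the plan.
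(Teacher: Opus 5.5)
Your proposal is correct and follows the paper's route. The upper bound comes from the relative-entropy identity of Lemma~\ref{lem:rel-KL}, existence of the limit as an infimum from subadditivity, and affinity from concavity plus almost-convexity. Upper semicontinuity follows because each $S_{\Lambda(b)}$ is upper semicontinuous by Lemma~\ref{lem:prop-S}, so $s$ is an infimum of upper semicontinuous maps.

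Your two deviations are refinements rather than a different strategy. First, your tiling argument with an upper bound on the remainder $S_R$ makes precise what the paper compresses into an appeal to Fekete's lemma, which is needed because boxes $\Lambda(b)$ do not tile $\Lambda(a)$ in general. Second, your choice $\phi_{\Lambda(b)}=c\sum_i\nien\circ\mathsf{T}_i^*$ with Minkowski's inequality replaces the paper's $\psi_\Lambda=(\sum_i|\nien|^\zeta\circ\mathsf{T}_i^*)^{1/\zeta}$. As a result you only need $\Exp{-c\nien}\in L^1(\dd\vol_{\{0\}})$ for a single $c$, rather than for arbitrarily small $c$.
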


\begin{proof}
    Using Lemma~\ref{lem:rel-KL} and Jensen's inequality with 
    \[ 
        \phi_\Lambda = \sum_{i\in\Lambda} \nien \circ \mathsf{T}_i
    \]
    as in the proof of Lemma~\ref{lem:S-subadd}, we have that $S_{\Lambda(a)}(\mu) <\infty$ for all $a \in \nn$.
    Thanks to the subadditivity property in Lemma~\ref{lem:S-subadd}, we can then appeal to Fekete's lemma to deduce that
    \begin{align*}
        \lim_{a\to\infty}  \frac{S_{\Lambda(a)} (\mu\res{\Lambda(a)})}{\#\Lambda(a)} = \inf_{a\in\nn} \frac{S_{\Lambda(a)} (\mu\res{\Lambda(a)})}{\#\Lambda(a)},
    \end{align*}
    understood in $[-\infty,\infty)$. For every $\mu$ in the set of translation invariant measures of interest, the upper bound 
    \[ 
        S_{\{0\}}(\mu\res{\{0\}}) \leq M^{\frac 1\zeta} + \ln \int \Exp{-\nien}\dd\vol_{\{0\}}
    \]
    ---\,which is derived from Lemma~\ref{lem:rel-KL}, nonnegativity of the relative entropy and Jensen's inequality\,---\,then applies to $s(\mu)$ as well, showing that $\mu\mapsto s(\mu)$ is bounded above.
    Concavity in Lemma~\ref{lem:prop-S} and almost-convexity in Lemma~\ref{lem:S-almost-convex} show that $\mu \mapsto s(\mu)$ is affine.

    Still on the set of translation-invariant measure of interest, each functional $\mu \mapsto S_{\Lambda(a)}(\mu\res{\Lambda(a)})$ with $a \in \nn$ is upper semicontinuous with respect to weak convergence, thanks to Lemma~\ref{lem:prop-S} with 
    $$
        \psi_\Lambda = \left(\sum_{i\in\Lambda} |\nien|^{\zeta} \circ \mathsf{T}_i^* \right)^{\frac{1}{\zeta}}
    $$ 
    and the fact that weak convergence of $(\mu^{(k)})_{k\in\nn}$ implies weak convergence of $(\mu^{(k)}\res{\Lambda(a)})_{k\in\nn}$.
    To see that Lemma~\ref{lem:prop-S} indeed applies, we need to check two conditions: first,
    \begin{align*}
        \exp\left({-\psi_{\Lambda}}\right) &= \exp\left({-\left(\sum_{i\in\Lambda} |\nien|^{\zeta} \circ \mathsf{T}^*_i \right)^{\frac{1}{\zeta}}}\right) \leq \exp\left(-
             \frac{1}{\#\Lambda^{\frac{\zeta-1}{\zeta}}}
        \sum_{i \in \Lambda} |\nien| \circ \mathsf{T}^*_i \right)
    \end{align*}
    is indeed integrable with respect to $\vol_\Lambda$ by a computation similar to~\eqref{eq:mult-exp-nien} and the assumption that $\Exp{-c\nien} \in L^1(\dd\vol_{\{0\}})$ for all $c>0$; second, $\psi_\Lambda^{\zeta}$ indeed integrates to at most $M_\Lambda = \#\Lambda \, M$ with respect to $\mu\res{\Lambda}$ by a computation similar to~\eqref{eq:sum-nien} and the assumption that $\int |\nien|^\zeta \dd\mu \leq M$. Hence, upper semicontinuity of the map $\mu \mapsto s(\mu)$ follows from the fact that $s$ is a pointwise infimum of upper-semicontinuous functions.
\end{proof}

For the purpose of efficiently exploiting semicontinuity, it is convenient to introduce some terminology. 
We call bounded measurable functions on~$\Omega$ \emph{observables}. 
An observable is called \emph{strictly local} if it is $\mathcal{F}_\Lambda$-measurable for some sufficiently large~$\Lambda \Subset \zz^\nu$.
For a strictly local observable, we can speak of the Lipschitz property using e.g.~the $\ell^{\infty}$-norm on $\bigtimes_{i\in\Lambda} T^*X_i$ for some sufficiently large~$\Lambda \Subset \zz^\nu$.
The following lemma can be pieced from e.g.~\cite[\S{2.2}]{Geo},~\cite[\S{2.1}]{vEFS93} and the Portmanteau theorem.

\begin{lemma}
\label{lem:topo-on-states}
    Let $(\mu^{(k)})_{k\in\nn}$ be a sequence of extended states and $\mu$ be an extended state. Then, the following notions of weak convergence are equivalent: 
    \begin{enumerate}
        \item[(i)]  $(\mu^{(k)})_{k\in\nn}$ converges weakly $\mu$;
        \item[(ii)] for every strictly local, Lipschitz observable $f$, the sequence $\int f \dd\mu^{(k)} \to \int f \dd\mu$ as $k\to\infty$.
    \end{enumerate}
\end{lemma}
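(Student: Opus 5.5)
For the implication (i)$\Rightarrow$(ii), I would simply check that a strictly local, Lipschitz observable is a bounded continuous function on $(\Omega,\mathcal{T})$. It is bounded by definition. It depends on finitely many coordinates through a Lipschitz (hence continuous) function on $\bigtimes_{i\in\Lambda}T^*X_i$, and the projection onto those coordinates is continuous for the product topology. So its integrals converge by the very definition of weak convergence.

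For (ii)$\Rightarrow$(i), I would metrize the product topology and reduce to the Portmanteau theorem in its bounded-Lipschitz form. Enumerate $\zz^\nu = \{i_1,i_2,\dots\}$ and set
\[
    d\bigl((\bq,\bp),(\bq',\bp')\bigr) := \sum_{n\geq 1} 2^{-n} \min\bigl\{1, |(q_{i_n},p_{i_n})-(q'_{i_n},p'_{i_n})|\bigr\},
\]
using the Euclidean or flat-torus distance on each $T^*X_i$. This is a complete separable metric inducing $\mathcal{T}$. By the Portmanteau theorem on the metric space $(\Omega,d)$, it suffices to show $\int g \dd\mu^{(k)} \to \int g\dd\mu$ for every bounded $d$-Lipschitz function $g$.

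The key step is the approximation. Fix such a $g$ with Lipschitz constant $L$, fix a reference configuration $\omega^\circ\in\Omega$, and for $N\in\nn$ define $g_N(\omega) := g(\omega^{(N)})$. Here $\omega^{(N)}$ agrees with $\omega$ on $\{i_1,\dots,i_N\}$ and with $\omega^\circ$ elsewhere.

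\begin{itemize}
    \item $g_N$ is strictly local, bounded by $\sup|g|$, and Lipschitz in the $\ell^\infty$-norm on the first $N$ sites, since $d(\omega^{(N)},\omega'^{(N)}) \leq \max_{n\leq N}|\omega_{i_n}-\omega'_{i_n}|$. This is stronger than the local Lipschitz property used in the paper's definition, so (ii) applies to $g_N$.
    \item Uniformly in $\omega$, $|g(\omega)-g_N(\omega)| \leq L\, d(\omega,\omega^{(N)}) \leq L\,2^{-N}$.
\end{itemize}

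A standard $\varepsilon/3$ argument then finishes. Choose $N$ with $L2^{-N}<\varepsilon/3$, then choose $k$ large enough that $|\int g_N\dd\mu^{(k)}-\int g_N\dd\mu|<\varepsilon/3$. This gives $\limsup_k|\int g\dd\mu^{(k)}-\int g\dd\mu|\leq\varepsilon$.

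I do not expect a genuine obstacle. The only point needing care is the mismatch between the paper's notion of "Lipschitz", which is local (distances at most $1$) with respect to $\ell^\infty$, and the bounded-Lipschitz class in Portmanteau; the inclusion shown above resolves it.

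If one prefers to avoid metrizing, an alternative route works too. Use (ii) to get weak convergence of every finite-volume marginal, via Portmanteau on the finite-dimensional space. This gives tightness of each one-site marginal, hence tightness of $(\mu^{(k)})$ on $\Omega$ by Tychonoff. Finally, identify every subsequential limit with $\mu$ through its finite-dimensional distributions, since cylinder sets form a $\pi$-system generating $\mathcal{F}$.
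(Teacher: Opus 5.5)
Your proof is correct, and it reaches the conclusion more directly than the paper does.

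The paper proves (ii)$\Rightarrow$(i) through a chain of intermediate test classes:
\begin{enumerate}
\item Portmanteau on each finite-dimensional space $\prod_{i\in\Lambda}T^*X_i$ gives weak convergence of all finite-volume marginals, hence convergence against every strictly local \emph{continuous} observable.
\item Uniform approximation extends this to quasilocal continuous observables.
\item A cited remark of Georgii states that uniformly continuous observables are quasilocal.
\item A final application of Portmanteau on $\Omega$, with uniformly continuous test functions, concludes.
\end{enumerate}
You instead apply Portmanteau once, on $\Omega$ with your explicit product metric $d$. Your freezing map $g_N=g\circ\pi_N$ does the quasilocal approximation while staying inside the strictly local Lipschitz class. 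So you never need the finite-volume marginals or non-Lipschitz local observables.

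Your route is self-contained and more precise. Your $d$ specifies the metric for which ``uniformly continuous'' is meant on $\Omega$, which the paper leaves to its citations. With your $d$, the paper's step ``uniformly continuous implies quasilocal'' is exactly your bound $d(\omega,\omega^{(N)})\leq 2^{-N}$. You also explicitly reconcile the paper's scale-one, $\ell^\infty$ notion of Lipschitz with the bounded-Lipschitz class in Portmanteau.

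The paper's chain is more modular and uses the strictly local/quasilocal vocabulary of the Gibbs-measure literature. Its intermediate convergence statements, however, follow from the equivalence anyway, since each intermediate class contains the test class of (ii) and is contained in that of (i).

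Your fallback tightness argument (marginal convergence, Tychonoff, Prokhorov, identification on cylinder sets) is also sound. It needs to be closed by the standard remark that relative compactness plus a unique subsequential limit forces convergence of the whole sequence. It avoids choosing a metric but is heavier than your main argument.
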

\begin{proof}
    Since strictly local, Lipschitz observables are bounded and continuous measurable functions, (i) $\Rightarrow$ (ii) is trivial.
    In order to prove (ii) $\Rightarrow$ (i), instead, we proceed to establish the nontrivial implications in the chain of equivalences:
     \begin{enumerate}
        \item[(ii)] for every strictly local, Lipschitz observable $f$, the sequence $\int f \dd\mu^{(k)} \to \int f \dd\mu$ as $k\to\infty$;
        \item[(c)] for every strictly local, continuous observable $f$, the sequence $\int f \dd\mu^{(k)} \to \int f \dd\mu$ as $k\to\infty$;
        \item[(b)] for every quasilocal, continuous observable $f$, the sequence $\int f \dd\mu^{(k)} \to \int f \dd\mu$ as $k\to\infty$;
        \item[(a)] for every uniformly continuous observable $f$, the sequence $\int f \dd\mu^{(k)} \to \int f \dd\mu$ as $k\to\infty$;
        \item[(i)]  for every continuous observable $f$, the sequence $\int f \dd\mu^{(k)} \to \int f \dd\mu$ as $k\to\infty$;      
    \end{enumerate}
    where a measurable function~$f$ on~$\Omega$ is called \emph{quasilocal} if it can be uniformly approximated by strictly local observables.
    \begin{description}
     \item[(ii) $\Rightarrow$ (c):]
    Let $f$ be a strictly local, continuous observable; then there exists a finite region $\Set \Subset \mathbb{Z}^\nu$ such that $f\in C_b(\Omega_\Set)$, with $\Omega_\Set$ the local configuration space over $\Set$ and $C_b(\Omega_\Set)$ the space of bounded continuous functions on $\Omega_\Set$. 
    For each finite $\Set \Subset \mathbb{Z}^\nu$, assuming (ii), the finite--dimensional marginals $\mu^{(k)}\res\Set$ converge weakly to $\mu\res\Set$ on $\Omega_\Set$, since bounded Lipschitz functions on $\Omega_\Set$ form a convergence-determining class on Polish spaces (Portmanteau theorem;~\cite[Thm.~11.3.3]{Dudley}).
    Then (c) immediately follows.
    \item[(c) $\Rightarrow$ (b):]
    This is a classical result (see \cite[\S2.1]{vEFS93}) based on the fact that quasilocal, continuous observables are uniform limits of strictly local, continuous observables; see~\cite[Rmk.~2.21.1]{Geo}. 
    \item[(b) $\Rightarrow$ (a):]
    Every uniformly continuous observable is, in particular, quasilocal and continuous; see~\cite[Rmk.~2.21.2]{Geo}. 
    \item[(a) $\Rightarrow$ (i):]
    This equivalence follows from the standard Portmanteau theorem~\cite[Thm.~2.1]{Bil}; indeed, uniformly continuous functions form a convergence-determining class, being dense in $C_b(\Omega_\Set)$ for the topology of uniform convergence on compact sets.
    \end{description}
    Combining the above implications, we conclude that (i) and (ii) are equivalent.
\end{proof}

Note that the existence of the entropy per unit volume is here based on translation invariance. When it comes to entropies of states defined in elementary boxes, the following lemma is useful.
    
\begin{lemma}
\label{lemma:periodization}
    Suppose that $\nien: T^*X_0  \to [0,\infty)$ is continuous and satisfies $\Exp{-c\nien} \in L^1(\dd\vol_{\{0\}})$ for all $c > 0$. 
    Let $\mu_{\Lambda(a)}$ be a probability measure on $\bigtimes_{i \in \Lambda(a)} T^*X_i$ for some $a\in\nn$, and define the measure
    \begin{align*}
        \bar\mu^{a} = \frac{1}{\#\Lambda(a)} \sum_{j \in \Lambda(a)} 
        \mu^{\otimes \zz^\nu}_{\Lambda(a)} 
        \circ \mathsf{T}^*_{-j}, 
    \end{align*}
    where $ \mu^{\otimes \zz^\nu}_{\Lambda(a)}$ denotes the periodic extension of $\mu_{\Lambda(a)}$ to the whole lattice.
    If there exists $\zeta > 1$ such that $|\nien|^\zeta \circ \mathsf{T}_i^* \in L^1(\dd\mu_{\Lambda(a)})$ for all $i\in\Lambda(a)$, then 
    \begin{equation}
        s(\bar\mu^{a}) = \frac{1}{\#\Lambda(a)}S_{\Lambda(a)}(\mu_{\Lambda(a)}).
    \end{equation}
\end{lemma}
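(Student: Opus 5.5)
The plan is to compute the finite-volume entropies of $\bar\mu^{a}$ along the sequence of boxes $\Lambda(na)$, $n\in\nn$. Proposition~\ref{prop:s-prop} tells us that the limit defining $s$ exists along $\Lambda(a')$ as $a'\to\infty$, so it also exists along this subsequence.

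First I would check that $\bar\mu^a$ lies in the class covered by Proposition~\ref{prop:s-prop}.
\begin{itemize}
\item[(a)] $\bar\mu^a$ is translation invariant. The periodic extension $\mu^{\otimes\zz^\nu}_{\Lambda(a)}$ is invariant under translations by $2a\zz^\nu$, and averaging over the $\#\Lambda(a)$ representatives $j\in\Lambda(a)$ of $\zz^\nu/2a\zz^\nu$ produces full invariance.
\item[(b)] $\int|\nien|^\zeta\,\dd\bar\mu^a$ is finite, since it equals $\frac{1}{\#\Lambda(a)}\sum_{i\in\Lambda(a)}\int|\nien|^\zeta\circ\mathsf{T}_i^*\,\dd\mu_{\Lambda(a)}$.
\end{itemize}
Hence $s(\bar\mu^a)=\lim_n S_{\Lambda(na)}(\bar\mu^a\res{\Lambda(na)})/\#\Lambda(na)$.

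Next, write $\nu_j:=\mu^{\otimes\zz^\nu}_{\Lambda(a)}\circ\mathsf{T}^*_{-j}$. The box $\Lambda(na)$ has side $2na$, a multiple of the period, so $\nu_j\res{\Lambda(na)}$ is a product of $n^\nu$ copies of translates of $\mu_{\Lambda(a)}$ when $j=0$. For general $j$, the shifted tiles cut across the boundary of $\Lambda(na)$. Using the periodicity again, however, $\nu_j\res{\Lambda(na)}$ is a product over the torus-like tiling, so its entropy should still be exactly $n^\nu S_{\Lambda(a)}(\mu_{\Lambda(a)})$. I would verify this carefully: $\Lambda(na)$ is a fundamental domain for $2na\zz^\nu$, and translating it by $j$ rearranges sites modulo $2a$ within each coordinate. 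If the pieces do not glue into whole tiles, I would instead use subadditivity (Lemma~\ref{lem:S-subadd}) together with the equality case for products to get the exact value up to an $O(n^{\nu-1})$ boundary error. That error is controlled by bounding the entropy of the partial tiles above by Lemma~\ref{lem:rel-KL} with $\phi=\sum\nien\circ\mathsf{T}_i^*$, and below by monotonicity-type bounds. Either way,
\[
    S_{\Lambda(na)}(\nu_j\res{\Lambda(na)})=n^\nu S_{\Lambda(a)}(\mu_{\Lambda(a)})+o(n^\nu).
\]

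Then I would pass from the individual $\nu_j$ to their average $\bar\mu^a\res{\Lambda(na)}$.
\begin{itemize}
\item[(i)] Concavity of $S$ (Lemma~\ref{lem:prop-S}) gives the lower bound, namely that $S$ of the average is at least the average of the $S(\nu_j)$.
\item[(ii)] Almost-convexity gives the matching upper bound. Iterating Lemma~\ref{lem:S-almost-convex}, or more directly the pointwise bound $-\ln\sum_j\frac{1}{N}\rho_j\le \ln N-\ln\rho_j$ used in its proof, shows that $S$ of an $N$-term average exceeds the average of the entropies by at most $\ln N$, where $N=\#\Lambda(a)$ is fixed.
\end{itemize}
Dividing by $\#\Lambda(na)=n^\nu\#\Lambda(a)$ and letting $n\to\infty$, the $\ln N$ term and the $o(n^\nu)$ term both vanish, which yields the stated identity.

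The finiteness and integrability technicalities, in particular avoiding $\infty-\infty$ when $S_{\Lambda(a)}(\mu_{\Lambda(a)})=-\infty$, would be handled as in Lemmas~\ref{lem:rel-KL} and~\ref{lem:S-subadd}. The $-\infty$ case then follows from the concavity lower bound being trivial and the upper bounds forcing $-\infty$.

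The main obstacle I expect is the exact entropy computation for the shifted periodic measures $\nu_j$ on $\Lambda(na)$, because the tiles are misaligned with the box. Boundary tiles contribute marginals of $\mu_{\Lambda(a)}$, whose entropies need two-sided control to be $O(n^{\nu-1})$. For the upper bound I would use the moment hypothesis with $\zeta>1$ and Lemma~\ref{lem:rel-KL}. For the lower bound I would use subadditivity applied in reverse: the entropy of a full tile is at most the sum of the entropies of the marginals of its two pieces, which bounds each boundary marginal's entropy from below by the full-tile entropy minus the complementary piece's entropy, itself bounded above.
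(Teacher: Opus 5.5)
Your proposal is correct and is essentially the paper's proof. Like the paper, you use interior full tiles plus an $O(n^{\nu-1})$ boundary layer controlled by subadditivity, the product case of Lemma~\ref{lem:S-subadd} and Lemma~\ref{lem:rel-KL}, then concavity and almost-convexity to pass to $\bar\mu^a$, and Proposition~\ref{prop:s-prop} along $\Lambda(na)$. Your $\ln\#\Lambda(a)$ error is actually sharper than the paper's $O(\ln n)$, and your per-tile lower bound is a local form of the paper's covering of $\Lambda(na)$ by $(n+1)^\nu$ tiles.

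Discard the speculative claim that $S_{\Lambda(na)}(\nu_j\res{\Lambda(na)})$ equals $n^\nu S_{\Lambda(a)}(\mu_{\Lambda(a)})$ exactly. Restricting to the box, as opposed to periodizing it, produces genuine marginals of $\mu_{\Lambda(a)}$ on cut tiles, and subadditivity is generally strict there. Even for $j=0$, the box $\Lambda(na)$ is a union of tiles of $\Lambda(a)+2a\zz^\nu$ only when $n$ is odd. So it is your fallback $O(n^{\nu-1})$ estimate that carries the argument.
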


\begin{proof}
    For the upper bound, the restricted measure $\mu^{\otimes \zz^\nu}_{\Lambda(a)} \circ \mathsf{T}^*_{-j}\res{\Lambda(na)}$ can be decomposed as a product over $(n-1)^\nu$ disjoint sub-cubes of size $\Lambda(a)$ that fit entirely inside $\Lambda(na)$, together with a boundary layer of volume $O(n^{\nu-1})$. For the lower bound, we consider a disjoint union of $(n+1)^\nu$ cubes of size $\Lambda(a)$ that cover (i.e.\ entirely contain) $\Lambda(na)$, with a leftover region consisting of $O(n^{\nu-1})$ boundary points.
    Hence, by subadditivity (Lemma~\ref{lem:S-subadd}) and invariance under translations by vectors in $2a\zz^\nu$, we have
    \begin{align*}
        &(n+1)^\nu  S_{\Lambda(a)}\left( \mu_{\Lambda(a)}  \right) - O(n^{\nu-1}) \max_{i\in\Lambda(a)}\max\left\{0,S_{\{i\}}(\mu_{\Lambda(a)}\res{\{i\}})\right\} \\
        &\qquad\qquad \leq 
        S\left( \mu^{\otimes \zz^\nu}_{\Lambda(a)} 
        \circ \mathsf{T}^*_{-j}
        \res{\Lambda(na)} \right) \\
        &\qquad\qquad \leq (n-1)^\nu S_{\Lambda(a)}\left( \mu_{\Lambda(a)} \right) + 
        O(n^{\nu-1}) \max_{i\in\Lambda(a)}\max\left\{0,S_{\{i\}}(\mu_{\Lambda(a)}\res{\{i\}})\right\}
    \end{align*}
    for each $j \in \Lambda(a)$ and $n\in\nn$. 
    
    By Jensen's inequality and Lemma~\ref{lem:rel-KL} with $\phi_{\{i\}} = \nien \circ \mathsf{T}_i^*$, the maximum appearing in these inequalities is finite. 
    So, using concavity for the lower bound and almost convexity for the upper bound (Lemmas~\ref{lem:prop-S} and~\ref{lem:S-almost-convex}, respectively), we have
    \begin{align*}
        (n+1)^\nu  S_{\Lambda(a)}\left( \mu_{\Lambda(a)}  \right) - 
        O(n^{\nu-1})
        &\leq 
        S_{\Lambda(na)}\left( \bar\mu^{a}\res{\Lambda(na)} \right) \\
        &\leq (n-1)^\nu S_{\Lambda(a)}\left( \mu_{\Lambda(a)} \right) + 
        O(n^{\nu-1}) + O(\ln n)
    \end{align*}
    for every $n\in\nn$. Dividing by $\#\Lambda(na)=n^\nu \#\Lambda(a)$, we find 
     \begin{align*}
        \frac{(n+1)^\nu}{n^\nu \#\Lambda(a)}  S_{\Lambda(a)}\left( \mu_{\Lambda(a)}  \right) - O(n^{-1})
        &\leq 
        \frac{1}{\#\Lambda(na)}S_{\Lambda(na)}\left( \bar\mu^{a}\res{\Lambda(na)} \right) \\
        &\leq \frac{(n-1)^\nu}{n^\nu \#\Lambda(a)} S_{\Lambda(a)}\left( \mu_{\Lambda(a)} \right) +       O(n^{-1}) + O(n^{-\nu}\ln n).
    \end{align*}
    Since there exists $\zeta > 1$ such that $|\nien|^\zeta \circ \mathsf{T}_i^* \in L^1(\dd\mu_{\Lambda(a)})$ for all $i\in\Lambda(a)$, and since $\#\Lambda(a)$ is finite, we can use Proposition~\ref{prop:s-prop} with 
    \[
        \overline{\nien} = \frac{1}{\#\Lambda(a)} \sum_{i \in \Lambda(a)} \nien \circ \mathsf{T}_i^*
    \]
    to take the limit $n \to \infty$ and derive the desired identity.
\end{proof}

\begin{remark}
\label{rem:concrete-frak-F}
    In the main body of the article, it is the fact that $\mu \in \mathcal{I}_2$, defined in terms of one-body potentials satisfying~(\ref{it:TI}) and~(\ref{it:P4}),
    that ensures the applicability of the above results.
\end{remark}

\end{document}

%% file: fig-shifted-shells.tex
\begin{tikzpicture}
\begin{axis}[
    width=4cm,
    height=4cm,
    xmin=-11, xmax=11,
    ymin=-11, ymax=11,
    axis equal,
    title={$j = (0,0)$, $\gamma = 8$},
    title style={font=\small},
    xlabel={}, ylabel={},
    xtick=\empty, ytick=\empty,
]

\addplot [only marks, mark=*, mark size=1.5pt, fill=gray!50, draw=black, line width=0.2pt]
table [row sep=\\] {
x y \\
-11 -11 \\
-11 -10 \\
-11 -9 \\
-11 -8 \\
-11 -7 \\
-11 -6 \\
-11 -5 \\
-11 -4 \\
-11 -3 \\
-11 -2 \\
-11 -1 \\
-11 0 \\
-11 1 \\
-11 2 \\
-11 3 \\
-11 4 \\
-11 5 \\
-11 6 \\
-11 7 \\
-11 8 \\
-11 9 \\
-11 10 \\
-11 11 \\
-10 -11 \\
-10 -10 \\
-10 -9 \\
-10 -8 \\
-10 -7 \\
-10 -6 \\
-10 -5 \\
-10 -4 \\
-10 -3 \\
-10 -2 \\
-10 -1 \\
-10 0 \\
-10 1 \\
-10 2 \\
-10 3 \\
-10 4 \\
-10 5 \\
-10 6 \\
-10 7 \\
-10 8 \\
-10 9 \\
-10 10 \\
-10 11 \\
-9 -11 \\
-9 -10 \\
-9 -9 \\
-9 -8 \\
-9 -7 \\
-9 -6 \\
-9 -5 \\
-9 -4 \\
-9 -3 \\
-9 -2 \\
-9 -1 \\
-9 0 \\
-9 1 \\
-9 2 \\
-9 3 \\
-9 4 \\
-9 5 \\
-9 6 \\
-9 7 \\
-9 8 \\
-9 9 \\
-9 10 \\
-9 11 \\
-8 -11 \\
-8 -10 \\
-8 -9 \\
-8 -8 \\
-8 -7 \\
-8 -6 \\
-8 -5 \\
-8 -4 \\
-8 -3 \\
-8 -2 \\
-8 -1 \\
-8 0 \\
-8 1 \\
-8 2 \\
-8 3 \\
-8 4 \\
-8 5 \\
-8 6 \\
-8 7 \\
-8 8 \\
-8 9 \\
-8 10 \\
-8 11 \\
-7 -11 \\
-7 -10 \\
-7 -9 \\
-7 -8 \\
-7 -7 \\
-7 -6 \\
-7 -5 \\
-7 -4 \\
-7 -3 \\
-7 -2 \\
-7 -1 \\
-7 0 \\
-7 1 \\
-7 2 \\
-7 3 \\
-7 4 \\
-7 5 \\
-7 6 \\
-7 7 \\
-7 8 \\
-7 9 \\
-7 10 \\
-7 11 \\
-6 -11 \\
-6 -10 \\
-6 -9 \\
-6 -8 \\
-6 -7 \\
-6 -6 \\
-6 -5 \\
-6 -4 \\
-6 -3 \\
-6 -2 \\
-6 -1 \\
-6 0 \\
-6 1 \\
-6 2 \\
-6 3 \\
-6 4 \\
-6 5 \\
-6 6 \\
-6 7 \\
-6 8 \\
-6 9 \\
-6 10 \\
-6 11 \\
-5 -11 \\
-5 -10 \\
-5 -9 \\
-5 -8 \\
-5 -7 \\
-5 -6 \\
-5 -5 \\
-5 -4 \\
-5 -3 \\
-5 -2 \\
-5 -1 \\
-5 0 \\
-5 1 \\
-5 2 \\
-5 3 \\
-5 4 \\
-5 5 \\
-5 6 \\
-5 7 \\
-5 8 \\
-5 9 \\
-5 10 \\
-5 11 \\
-4 -11 \\
-4 -10 \\
-4 -9 \\
-4 -8 \\
-4 -7 \\
-4 -6 \\
-4 -5 \\
-4 -4 \\
-4 -3 \\
-4 -2 \\
-4 -1 \\
-4 0 \\
-4 1 \\
-4 2 \\
-4 3 \\
-4 4 \\
-4 5 \\
-4 6 \\
-4 7 \\
-4 8 \\
-4 9 \\
-4 10 \\
-4 11 \\
-3 -11 \\
-3 -10 \\
-3 -9 \\
-3 -8 \\
-3 -7 \\
-3 -6 \\
-3 -5 \\
-3 -4 \\
-3 -3 \\
-3 -2 \\
-3 -1 \\
-3 0 \\
-3 1 \\
-3 2 \\
-3 3 \\
-3 4 \\
-3 5 \\
-3 6 \\
-3 7 \\
-3 8 \\
-3 9 \\
-3 10 \\
-3 11 \\
-2 -11 \\
-2 -10 \\
-2 -9 \\
-2 -8 \\
-2 -7 \\
-2 -6 \\
-2 -5 \\
-2 -4 \\
-2 -3 \\
-2 -2 \\
-2 -1 \\
-2 0 \\
-2 1 \\
-2 2 \\
-2 3 \\
-2 4 \\
-2 5 \\
-2 6 \\
-2 7 \\
-2 8 \\
-2 9 \\
-2 10 \\
-2 11 \\
-1 -11 \\
-1 -10 \\
-1 -9 \\
-1 -8 \\
-1 -7 \\
-1 -6 \\
-1 -5 \\
-1 -4 \\
-1 -3 \\
-1 -2 \\
-1 -1 \\
-1 0 \\
-1 1 \\
-1 2 \\
-1 3 \\
-1 4 \\
-1 5 \\
-1 6 \\
-1 7 \\
-1 8 \\
-1 9 \\
-1 10 \\
-1 11 \\
0 -11 \\
0 -10 \\
0 -9 \\
0 -8 \\
0 -7 \\
0 -6 \\
0 -5 \\
0 -4 \\
0 -3 \\
0 -2 \\
0 -1 \\
0 0 \\
0 1 \\
0 2 \\
0 3 \\
0 4 \\
0 5 \\
0 6 \\
0 7 \\
0 8 \\
0 9 \\
0 10 \\
0 11 \\
1 -11 \\
1 -10 \\
1 -9 \\
1 -8 \\
1 -7 \\
1 -6 \\
1 -5 \\
1 -4 \\
1 -3 \\
1 -2 \\
1 -1 \\
1 0 \\
1 1 \\
1 2 \\
1 3 \\
1 4 \\
1 5 \\
1 6 \\
1 7 \\
1 8 \\
1 9 \\
1 10 \\
1 11 \\
2 -11 \\
2 -10 \\
2 -9 \\
2 -8 \\
2 -7 \\
2 -6 \\
2 -5 \\
2 -4 \\
2 -3 \\
2 -2 \\
2 -1 \\
2 0 \\
2 1 \\
2 2 \\
2 3 \\
2 4 \\
2 5 \\
2 6 \\
2 7 \\
2 8 \\
2 9 \\
2 10 \\
2 11 \\
3 -11 \\
3 -10 \\
3 -9 \\
3 -8 \\
3 -7 \\
3 -6 \\
3 -5 \\
3 -4 \\
3 -3 \\
3 -2 \\
3 -1 \\
3 0 \\
3 1 \\
3 2 \\
3 3 \\
3 4 \\
3 5 \\
3 6 \\
3 7 \\
3 8 \\
3 9 \\
3 10 \\
3 11 \\
4 -11 \\
4 -10 \\
4 -9 \\
4 -8 \\
4 -7 \\
4 -6 \\
4 -5 \\
4 -4 \\
4 -3 \\
4 -2 \\
4 -1 \\
4 0 \\
4 1 \\
4 2 \\
4 3 \\
4 4 \\
4 5 \\
4 6 \\
4 7 \\
4 8 \\
4 9 \\
4 10 \\
4 11 \\
5 -11 \\
5 -10 \\
5 -9 \\
5 -8 \\
5 -7 \\
5 -6 \\
5 -5 \\
5 -4 \\
5 -3 \\
5 -2 \\
5 -1 \\
5 0 \\
5 1 \\
5 2 \\
5 3 \\
5 4 \\
5 5 \\
5 6 \\
5 7 \\
5 8 \\
5 9 \\
5 10 \\
5 11 \\
6 -11 \\
6 -10 \\
6 -9 \\
6 -8 \\
6 -7 \\
6 -6 \\
6 -5 \\
6 -4 \\
6 -3 \\
6 -2 \\
6 -1 \\
6 0 \\
6 1 \\
6 2 \\
6 3 \\
6 4 \\
6 5 \\
6 6 \\
6 7 \\
6 8 \\
6 9 \\
6 10 \\
6 11 \\
7 -11 \\
7 -10 \\
7 -9 \\
7 -8 \\
7 -7 \\
7 -6 \\
7 -5 \\
7 -4 \\
7 -3 \\
7 -2 \\
7 -1 \\
7 0 \\
7 1 \\
7 2 \\
7 3 \\
7 4 \\
7 5 \\
7 6 \\
7 7 \\
7 8 \\
7 9 \\
7 10 \\
7 11 \\
8 -11 \\
8 -10 \\
8 -9 \\
8 -8 \\
8 -7 \\
8 -6 \\
8 -5 \\
8 -4 \\
8 -3 \\
8 -2 \\
8 -1 \\
8 0 \\
8 1 \\
8 2 \\
8 3 \\
8 4 \\
8 5 \\
8 6 \\
8 7 \\
8 8 \\
8 9 \\
8 10 \\
8 11 \\
9 -11 \\
9 -10 \\
9 -9 \\
9 -8 \\
9 -7 \\
9 -6 \\
9 -5 \\
9 -4 \\
9 -3 \\
9 -2 \\
9 -1 \\
9 0 \\
9 1 \\
9 2 \\
9 3 \\
9 4 \\
9 5 \\
9 6 \\
9 7 \\
9 8 \\
9 9 \\
9 10 \\
9 11 \\
10 -11 \\
10 -10 \\
10 -9 \\
10 -8 \\
10 -7 \\
10 -6 \\
10 -5 \\
10 -4 \\
10 -3 \\
10 -2 \\
10 -1 \\
10 0 \\
10 1 \\
10 2 \\
10 3 \\
10 4 \\
10 5 \\
10 6 \\
10 7 \\
10 8 \\
10 9 \\
10 10 \\
10 11 \\
11 -11 \\
11 -10 \\
11 -9 \\
11 -8 \\
11 -7 \\
11 -6 \\
11 -5 \\
11 -4 \\
11 -3 \\
11 -2 \\
11 -1 \\
11 0 \\
11 1 \\
11 2 \\
11 3 \\
11 4 \\
11 5 \\
11 6 \\
11 7 \\
11 8 \\
11 9 \\
11 10 \\
11 11 \\
};

\addplot [only marks, mark=*, mark size=1.5pt, fill=white, draw=black, line width=0.2pt]
table [row sep=\\] {
x y \\
-9 -9 \\
-9 -8 \\
-9 -7 \\
-9 -6 \\
-9 -5 \\
-9 -4 \\
-9 -3 \\
-9 -2 \\
-9 -1 \\
-9 0 \\
-9 1 \\
-9 2 \\
-9 3 \\
-9 4 \\
-9 5 \\
-9 6 \\
-9 7 \\
-9 8 \\
-9 9 \\
-9 10 \\
-8 -9 \\
-8 -8 \\
-8 -7 \\
-8 -6 \\
-8 -5 \\
-8 -4 \\
-8 -3 \\
-8 -2 \\
-8 -1 \\
-8 0 \\
-8 1 \\
-8 2 \\
-8 3 \\
-8 4 \\
-8 5 \\
-8 6 \\
-8 7 \\
-8 8 \\
-8 9 \\
-8 10 \\
-7 -9 \\
-7 -8 \\
-7 -7 \\
-7 -6 \\
-7 -5 \\
-7 -4 \\
-7 -3 \\
-7 -2 \\
-7 -1 \\
-7 0 \\
-7 1 \\
-7 2 \\
-7 3 \\
-7 4 \\
-7 5 \\
-7 6 \\
-7 7 \\
-7 8 \\
-7 9 \\
-7 10 \\
-6 -9 \\
-6 -8 \\
-6 -7 \\
-6 -6 \\
-6 -5 \\
-6 -4 \\
-6 -3 \\
-6 -2 \\
-6 -1 \\
-6 0 \\
-6 1 \\
-6 2 \\
-6 3 \\
-6 4 \\
-6 5 \\
-6 6 \\
-6 7 \\
-6 8 \\
-6 9 \\
-6 10 \\
-5 -9 \\
-5 -8 \\
-5 -7 \\
-5 -6 \\
-5 -5 \\
-5 -4 \\
-5 -3 \\
-5 -2 \\
-5 -1 \\
-5 0 \\
-5 1 \\
-5 2 \\
-5 3 \\
-5 4 \\
-5 5 \\
-5 6 \\
-5 7 \\
-5 8 \\
-5 9 \\
-5 10 \\
-4 -9 \\
-4 -8 \\
-4 -7 \\
-4 -6 \\
-4 -5 \\
-4 -4 \\
-4 -3 \\
-4 -2 \\
-4 -1 \\
-4 0 \\
-4 1 \\
-4 2 \\
-4 3 \\
-4 4 \\
-4 5 \\
-4 6 \\
-4 7 \\
-4 8 \\
-4 9 \\
-4 10 \\
-3 -9 \\
-3 -8 \\
-3 -7 \\
-3 -6 \\
-3 -5 \\
-3 -4 \\
-3 -3 \\
-3 -2 \\
-3 -1 \\
-3 0 \\
-3 1 \\
-3 2 \\
-3 3 \\
-3 4 \\
-3 5 \\
-3 6 \\
-3 7 \\
-3 8 \\
-3 9 \\
-3 10 \\
-2 -9 \\
-2 -8 \\
-2 -7 \\
-2 -6 \\
-2 -5 \\
-2 -4 \\
-2 -3 \\
-2 -2 \\
-2 -1 \\
-2 0 \\
-2 1 \\
-2 2 \\
-2 3 \\
-2 4 \\
-2 5 \\
-2 6 \\
-2 7 \\
-2 8 \\
-2 9 \\
-2 10 \\
-1 -9 \\
-1 -8 \\
-1 -7 \\
-1 -6 \\
-1 -5 \\
-1 -4 \\
-1 -3 \\
-1 -2 \\
-1 -1 \\
-1 0 \\
-1 1 \\
-1 2 \\
-1 3 \\
-1 4 \\
-1 5 \\
-1 6 \\
-1 7 \\
-1 8 \\
-1 9 \\
-1 10 \\
0 -9 \\
0 -8 \\
0 -7 \\
0 -6 \\
0 -5 \\
0 -4 \\
0 -3 \\
0 -2 \\
0 -1 \\
0 0 \\
0 1 \\
0 2 \\
0 3 \\
0 4 \\
0 5 \\
0 6 \\
0 7 \\
0 8 \\
0 9 \\
0 10 \\
1 -9 \\
1 -8 \\
1 -7 \\
1 -6 \\
1 -5 \\
1 -4 \\
1 -3 \\
1 -2 \\
1 -1 \\
1 0 \\
1 1 \\
1 2 \\
1 3 \\
1 4 \\
1 5 \\
1 6 \\
1 7 \\
1 8 \\
1 9 \\
1 10 \\
2 -9 \\
2 -8 \\
2 -7 \\
2 -6 \\
2 -5 \\
2 -4 \\
2 -3 \\
2 -2 \\
2 -1 \\
2 0 \\
2 1 \\
2 2 \\
2 3 \\
2 4 \\
2 5 \\
2 6 \\
2 7 \\
2 8 \\
2 9 \\
2 10 \\
3 -9 \\
3 -8 \\
3 -7 \\
3 -6 \\
3 -5 \\
3 -4 \\
3 -3 \\
3 -2 \\
3 -1 \\
3 0 \\
3 1 \\
3 2 \\
3 3 \\
3 4 \\
3 5 \\
3 6 \\
3 7 \\
3 8 \\
3 9 \\
3 10 \\
4 -9 \\
4 -8 \\
4 -7 \\
4 -6 \\
4 -5 \\
4 -4 \\
4 -3 \\
4 -2 \\
4 -1 \\
4 0 \\
4 1 \\
4 2 \\
4 3 \\
4 4 \\
4 5 \\
4 6 \\
4 7 \\
4 8 \\
4 9 \\
4 10 \\
5 -9 \\
5 -8 \\
5 -7 \\
5 -6 \\
5 -5 \\
5 -4 \\
5 -3 \\
5 -2 \\
5 -1 \\
5 0 \\
5 1 \\
5 2 \\
5 3 \\
5 4 \\
5 5 \\
5 6 \\
5 7 \\
5 8 \\
5 9 \\
5 10 \\
6 -9 \\
6 -8 \\
6 -7 \\
6 -6 \\
6 -5 \\
6 -4 \\
6 -3 \\
6 -2 \\
6 -1 \\
6 0 \\
6 1 \\
6 2 \\
6 3 \\
6 4 \\
6 5 \\
6 6 \\
6 7 \\
6 8 \\
6 9 \\
6 10 \\
7 -9 \\
7 -8 \\
7 -7 \\
7 -6 \\
7 -5 \\
7 -4 \\
7 -3 \\
7 -2 \\
7 -1 \\
7 0 \\
7 1 \\
7 2 \\
7 3 \\
7 4 \\
7 5 \\
7 6 \\
7 7 \\
7 8 \\
7 9 \\
7 10 \\
8 -9 \\
8 -8 \\
8 -7 \\
8 -6 \\
8 -5 \\
8 -4 \\
8 -3 \\
8 -2 \\
8 -1 \\
8 0 \\
8 1 \\
8 2 \\
8 3 \\
8 4 \\
8 5 \\
8 6 \\
8 7 \\
8 8 \\
8 9 \\
8 10 \\
9 -9 \\
9 -8 \\
9 -7 \\
9 -6 \\
9 -5 \\
9 -4 \\
9 -3 \\
9 -2 \\
9 -1 \\
9 0 \\
9 1 \\
9 2 \\
9 3 \\
9 4 \\
9 5 \\
9 6 \\
9 7 \\
9 8 \\
9 9 \\
9 10 \\
10 -9 \\
10 -8 \\
10 -7 \\
10 -6 \\
10 -5 \\
10 -4 \\
10 -3 \\
10 -2 \\
10 -1 \\
10 0 \\
10 1 \\
10 2 \\
10 3 \\
10 4 \\
10 5 \\
10 6 \\
10 7 \\
10 8 \\
10 9 \\
10 10 \\
};

\addplot [only marks, mark=*, mark size=1.5pt, fill=blue!40, draw=black, line width=0.2pt]
table [row sep=\\] {
x y \\
-2 -2 \\
-2 -1 \\
-2 0 \\
-2 1 \\
-2 2 \\
-1 -2 \\
-1 -1 \\
-1 0 \\
-1 1 \\
-1 2 \\
0 -2 \\
0 -1 \\
0 0 \\
0 1 \\
0 2 \\
1 -2 \\
1 -1 \\
1 0 \\
1 1 \\
1 2 \\
2 -2 \\
2 -1 \\
2 0 \\
2 1 \\
2 2 \\
};
\end{axis}
\end{tikzpicture}
\begin{tikzpicture}
\begin{axis}[
    width=4cm,
    height=4cm,
    xmin=-11, xmax=11,
    ymin=-11, ymax=11,
    axis equal,
    title={$j = (-1,-1)$, $\gamma = 7$},
    title style={font=\small},
    xlabel={}, ylabel={},
    xtick=\empty, ytick=\empty,
]

\addplot [only marks, mark=*, mark size=1.5pt, fill=gray!50, draw=black, line width=0.2pt]
table [row sep=\\] {
x y \\
-11 -11 \\
-11 -10 \\
-11 -9 \\
-11 -8 \\
-11 -7 \\
-11 -6 \\
-11 -5 \\
-11 -4 \\
-11 -3 \\
-11 -2 \\
-11 -1 \\
-11 0 \\
-11 1 \\
-11 2 \\
-11 3 \\
-11 4 \\
-11 5 \\
-11 6 \\
-11 7 \\
-11 8 \\
-11 9 \\
-11 10 \\
-11 11 \\
-10 -11 \\
-10 -10 \\
-10 -9 \\
-10 -8 \\
-10 -7 \\
-10 -6 \\
-10 -5 \\
-10 -4 \\
-10 -3 \\
-10 -2 \\
-10 -1 \\
-10 0 \\
-10 1 \\
-10 2 \\
-10 3 \\
-10 4 \\
-10 5 \\
-10 6 \\
-10 7 \\
-10 8 \\
-10 9 \\
-10 10 \\
-10 11 \\
-9 -11 \\
-9 -10 \\
-9 -9 \\
-9 -8 \\
-9 -7 \\
-9 -6 \\
-9 -5 \\
-9 -4 \\
-9 -3 \\
-9 -2 \\
-9 -1 \\
-9 0 \\
-9 1 \\
-9 2 \\
-9 3 \\
-9 4 \\
-9 5 \\
-9 6 \\
-9 7 \\
-9 8 \\
-9 9 \\
-9 10 \\
-9 11 \\
-8 -11 \\
-8 -10 \\
-8 -9 \\
-8 -8 \\
-8 -7 \\
-8 -6 \\
-8 -5 \\
-8 -4 \\
-8 -3 \\
-8 -2 \\
-8 -1 \\
-8 0 \\
-8 1 \\
-8 2 \\
-8 3 \\
-8 4 \\
-8 5 \\
-8 6 \\
-8 7 \\
-8 8 \\
-8 9 \\
-8 10 \\
-8 11 \\
-7 -11 \\
-7 -10 \\
-7 -9 \\
-7 -8 \\
-7 -7 \\
-7 -6 \\
-7 -5 \\
-7 -4 \\
-7 -3 \\
-7 -2 \\
-7 -1 \\
-7 0 \\
-7 1 \\
-7 2 \\
-7 3 \\
-7 4 \\
-7 5 \\
-7 6 \\
-7 7 \\
-7 8 \\
-7 9 \\
-7 10 \\
-7 11 \\
-6 -11 \\
-6 -10 \\
-6 -9 \\
-6 -8 \\
-6 -7 \\
-6 -6 \\
-6 -5 \\
-6 -4 \\
-6 -3 \\
-6 -2 \\
-6 -1 \\
-6 0 \\
-6 1 \\
-6 2 \\
-6 3 \\
-6 4 \\
-6 5 \\
-6 6 \\
-6 7 \\
-6 8 \\
-6 9 \\
-6 10 \\
-6 11 \\
-5 -11 \\
-5 -10 \\
-5 -9 \\
-5 -8 \\
-5 -7 \\
-5 -6 \\
-5 -5 \\
-5 -4 \\
-5 -3 \\
-5 -2 \\
-5 -1 \\
-5 0 \\
-5 1 \\
-5 2 \\
-5 3 \\
-5 4 \\
-5 5 \\
-5 6 \\
-5 7 \\
-5 8 \\
-5 9 \\
-5 10 \\
-5 11 \\
-4 -11 \\
-4 -10 \\
-4 -9 \\
-4 -8 \\
-4 -7 \\
-4 -6 \\
-4 -5 \\
-4 -4 \\
-4 -3 \\
-4 -2 \\
-4 -1 \\
-4 0 \\
-4 1 \\
-4 2 \\
-4 3 \\
-4 4 \\
-4 5 \\
-4 6 \\
-4 7 \\
-4 8 \\
-4 9 \\
-4 10 \\
-4 11 \\
-3 -11 \\
-3 -10 \\
-3 -9 \\
-3 -8 \\
-3 -7 \\
-3 -6 \\
-3 -5 \\
-3 -4 \\
-3 -3 \\
-3 -2 \\
-3 -1 \\
-3 0 \\
-3 1 \\
-3 2 \\
-3 3 \\
-3 4 \\
-3 5 \\
-3 6 \\
-3 7 \\
-3 8 \\
-3 9 \\
-3 10 \\
-3 11 \\
-2 -11 \\
-2 -10 \\
-2 -9 \\
-2 -8 \\
-2 -7 \\
-2 -6 \\
-2 -5 \\
-2 -4 \\
-2 -3 \\
-2 -2 \\
-2 -1 \\
-2 0 \\
-2 1 \\
-2 2 \\
-2 3 \\
-2 4 \\
-2 5 \\
-2 6 \\
-2 7 \\
-2 8 \\
-2 9 \\
-2 10 \\
-2 11 \\
-1 -11 \\
-1 -10 \\
-1 -9 \\
-1 -8 \\
-1 -7 \\
-1 -6 \\
-1 -5 \\
-1 -4 \\
-1 -3 \\
-1 -2 \\
-1 -1 \\
-1 0 \\
-1 1 \\
-1 2 \\
-1 3 \\
-1 4 \\
-1 5 \\
-1 6 \\
-1 7 \\
-1 8 \\
-1 9 \\
-1 10 \\
-1 11 \\
0 -11 \\
0 -10 \\
0 -9 \\
0 -8 \\
0 -7 \\
0 -6 \\
0 -5 \\
0 -4 \\
0 -3 \\
0 -2 \\
0 -1 \\
0 0 \\
0 1 \\
0 2 \\
0 3 \\
0 4 \\
0 5 \\
0 6 \\
0 7 \\
0 8 \\
0 9 \\
0 10 \\
0 11 \\
1 -11 \\
1 -10 \\
1 -9 \\
1 -8 \\
1 -7 \\
1 -6 \\
1 -5 \\
1 -4 \\
1 -3 \\
1 -2 \\
1 -1 \\
1 0 \\
1 1 \\
1 2 \\
1 3 \\
1 4 \\
1 5 \\
1 6 \\
1 7 \\
1 8 \\
1 9 \\
1 10 \\
1 11 \\
2 -11 \\
2 -10 \\
2 -9 \\
2 -8 \\
2 -7 \\
2 -6 \\
2 -5 \\
2 -4 \\
2 -3 \\
2 -2 \\
2 -1 \\
2 0 \\
2 1 \\
2 2 \\
2 3 \\
2 4 \\
2 5 \\
2 6 \\
2 7 \\
2 8 \\
2 9 \\
2 10 \\
2 11 \\
3 -11 \\
3 -10 \\
3 -9 \\
3 -8 \\
3 -7 \\
3 -6 \\
3 -5 \\
3 -4 \\
3 -3 \\
3 -2 \\
3 -1 \\
3 0 \\
3 1 \\
3 2 \\
3 3 \\
3 4 \\
3 5 \\
3 6 \\
3 7 \\
3 8 \\
3 9 \\
3 10 \\
3 11 \\
4 -11 \\
4 -10 \\
4 -9 \\
4 -8 \\
4 -7 \\
4 -6 \\
4 -5 \\
4 -4 \\
4 -3 \\
4 -2 \\
4 -1 \\
4 0 \\
4 1 \\
4 2 \\
4 3 \\
4 4 \\
4 5 \\
4 6 \\
4 7 \\
4 8 \\
4 9 \\
4 10 \\
4 11 \\
5 -11 \\
5 -10 \\
5 -9 \\
5 -8 \\
5 -7 \\
5 -6 \\
5 -5 \\
5 -4 \\
5 -3 \\
5 -2 \\
5 -1 \\
5 0 \\
5 1 \\
5 2 \\
5 3 \\
5 4 \\
5 5 \\
5 6 \\
5 7 \\
5 8 \\
5 9 \\
5 10 \\
5 11 \\
6 -11 \\
6 -10 \\
6 -9 \\
6 -8 \\
6 -7 \\
6 -6 \\
6 -5 \\
6 -4 \\
6 -3 \\
6 -2 \\
6 -1 \\
6 0 \\
6 1 \\
6 2 \\
6 3 \\
6 4 \\
6 5 \\
6 6 \\
6 7 \\
6 8 \\
6 9 \\
6 10 \\
6 11 \\
7 -11 \\
7 -10 \\
7 -9 \\
7 -8 \\
7 -7 \\
7 -6 \\
7 -5 \\
7 -4 \\
7 -3 \\
7 -2 \\
7 -1 \\
7 0 \\
7 1 \\
7 2 \\
7 3 \\
7 4 \\
7 5 \\
7 6 \\
7 7 \\
7 8 \\
7 9 \\
7 10 \\
7 11 \\
8 -11 \\
8 -10 \\
8 -9 \\
8 -8 \\
8 -7 \\
8 -6 \\
8 -5 \\
8 -4 \\
8 -3 \\
8 -2 \\
8 -1 \\
8 0 \\
8 1 \\
8 2 \\
8 3 \\
8 4 \\
8 5 \\
8 6 \\
8 7 \\
8 8 \\
8 9 \\
8 10 \\
8 11 \\
9 -11 \\
9 -10 \\
9 -9 \\
9 -8 \\
9 -7 \\
9 -6 \\
9 -5 \\
9 -4 \\
9 -3 \\
9 -2 \\
9 -1 \\
9 0 \\
9 1 \\
9 2 \\
9 3 \\
9 4 \\
9 5 \\
9 6 \\
9 7 \\
9 8 \\
9 9 \\
9 10 \\
9 11 \\
10 -11 \\
10 -10 \\
10 -9 \\
10 -8 \\
10 -7 \\
10 -6 \\
10 -5 \\
10 -4 \\
10 -3 \\
10 -2 \\
10 -1 \\
10 0 \\
10 1 \\
10 2 \\
10 3 \\
10 4 \\
10 5 \\
10 6 \\
10 7 \\
10 8 \\
10 9 \\
10 10 \\
10 11 \\
11 -11 \\
11 -10 \\
11 -9 \\
11 -8 \\
11 -7 \\
11 -6 \\
11 -5 \\
11 -4 \\
11 -3 \\
11 -2 \\
11 -1 \\
11 0 \\
11 1 \\
11 2 \\
11 3 \\
11 4 \\
11 5 \\
11 6 \\
11 7 \\
11 8 \\
11 9 \\
11 10 \\
11 11 \\
};

\addplot [only marks, mark=*, mark size=1.5pt, fill=white, draw=black, line width=0.2pt]
table [row sep=\\] {
x y \\
-8 -8 \\
-8 -7 \\
-8 -6 \\
-8 -5 \\
-8 -4 \\
-8 -3 \\
-8 -2 \\
-8 -1 \\
-8 0 \\
-8 1 \\
-8 2 \\
-8 3 \\
-8 4 \\
-8 5 \\
-8 6 \\
-8 7 \\
-8 8 \\
-8 9 \\
-8 10 \\
-8 11 \\
-7 -8 \\
-7 -7 \\
-7 -6 \\
-7 -5 \\
-7 -4 \\
-7 -3 \\
-7 -2 \\
-7 -1 \\
-7 0 \\
-7 1 \\
-7 2 \\
-7 3 \\
-7 4 \\
-7 5 \\
-7 6 \\
-7 7 \\
-7 8 \\
-7 9 \\
-7 10 \\
-7 11 \\
-6 -8 \\
-6 -7 \\
-6 -6 \\
-6 -5 \\
-6 -4 \\
-6 -3 \\
-6 -2 \\
-6 -1 \\
-6 0 \\
-6 1 \\
-6 2 \\
-6 3 \\
-6 4 \\
-6 5 \\
-6 6 \\
-6 7 \\
-6 8 \\
-6 9 \\
-6 10 \\
-6 11 \\
-5 -8 \\
-5 -7 \\
-5 -6 \\
-5 -5 \\
-5 -4 \\
-5 -3 \\
-5 -2 \\
-5 -1 \\
-5 0 \\
-5 1 \\
-5 2 \\
-5 3 \\
-5 4 \\
-5 5 \\
-5 6 \\
-5 7 \\
-5 8 \\
-5 9 \\
-5 10 \\
-5 11 \\
-4 -8 \\
-4 -7 \\
-4 -6 \\
-4 -5 \\
-4 -4 \\
-4 -3 \\
-4 -2 \\
-4 -1 \\
-4 0 \\
-4 1 \\
-4 2 \\
-4 3 \\
-4 4 \\
-4 5 \\
-4 6 \\
-4 7 \\
-4 8 \\
-4 9 \\
-4 10 \\
-4 11 \\
-3 -8 \\
-3 -7 \\
-3 -6 \\
-3 -5 \\
-3 -4 \\
-3 -3 \\
-3 -2 \\
-3 -1 \\
-3 0 \\
-3 1 \\
-3 2 \\
-3 3 \\
-3 4 \\
-3 5 \\
-3 6 \\
-3 7 \\
-3 8 \\
-3 9 \\
-3 10 \\
-3 11 \\
-2 -8 \\
-2 -7 \\
-2 -6 \\
-2 -5 \\
-2 -4 \\
-2 -3 \\
-2 -2 \\
-2 -1 \\
-2 0 \\
-2 1 \\
-2 2 \\
-2 3 \\
-2 4 \\
-2 5 \\
-2 6 \\
-2 7 \\
-2 8 \\
-2 9 \\
-2 10 \\
-2 11 \\
-1 -8 \\
-1 -7 \\
-1 -6 \\
-1 -5 \\
-1 -4 \\
-1 -3 \\
-1 -2 \\
-1 -1 \\
-1 0 \\
-1 1 \\
-1 2 \\
-1 3 \\
-1 4 \\
-1 5 \\
-1 6 \\
-1 7 \\
-1 8 \\
-1 9 \\
-1 10 \\
-1 11 \\
0 -8 \\
0 -7 \\
0 -6 \\
0 -5 \\
0 -4 \\
0 -3 \\
0 -2 \\
0 -1 \\
0 0 \\
0 1 \\
0 2 \\
0 3 \\
0 4 \\
0 5 \\
0 6 \\
0 7 \\
0 8 \\
0 9 \\
0 10 \\
0 11 \\
1 -8 \\
1 -7 \\
1 -6 \\
1 -5 \\
1 -4 \\
1 -3 \\
1 -2 \\
1 -1 \\
1 0 \\
1 1 \\
1 2 \\
1 3 \\
1 4 \\
1 5 \\
1 6 \\
1 7 \\
1 8 \\
1 9 \\
1 10 \\
1 11 \\
2 -8 \\
2 -7 \\
2 -6 \\
2 -5 \\
2 -4 \\
2 -3 \\
2 -2 \\
2 -1 \\
2 0 \\
2 1 \\
2 2 \\
2 3 \\
2 4 \\
2 5 \\
2 6 \\
2 7 \\
2 8 \\
2 9 \\
2 10 \\
2 11 \\
3 -8 \\
3 -7 \\
3 -6 \\
3 -5 \\
3 -4 \\
3 -3 \\
3 -2 \\
3 -1 \\
3 0 \\
3 1 \\
3 2 \\
3 3 \\
3 4 \\
3 5 \\
3 6 \\
3 7 \\
3 8 \\
3 9 \\
3 10 \\
3 11 \\
4 -8 \\
4 -7 \\
4 -6 \\
4 -5 \\
4 -4 \\
4 -3 \\
4 -2 \\
4 -1 \\
4 0 \\
4 1 \\
4 2 \\
4 3 \\
4 4 \\
4 5 \\
4 6 \\
4 7 \\
4 8 \\
4 9 \\
4 10 \\
4 11 \\
5 -8 \\
5 -7 \\
5 -6 \\
5 -5 \\
5 -4 \\
5 -3 \\
5 -2 \\
5 -1 \\
5 0 \\
5 1 \\
5 2 \\
5 3 \\
5 4 \\
5 5 \\
5 6 \\
5 7 \\
5 8 \\
5 9 \\
5 10 \\
5 11 \\
6 -8 \\
6 -7 \\
6 -6 \\
6 -5 \\
6 -4 \\
6 -3 \\
6 -2 \\
6 -1 \\
6 0 \\
6 1 \\
6 2 \\
6 3 \\
6 4 \\
6 5 \\
6 6 \\
6 7 \\
6 8 \\
6 9 \\
6 10 \\
6 11 \\
7 -8 \\
7 -7 \\
7 -6 \\
7 -5 \\
7 -4 \\
7 -3 \\
7 -2 \\
7 -1 \\
7 0 \\
7 1 \\
7 2 \\
7 3 \\
7 4 \\
7 5 \\
7 6 \\
7 7 \\
7 8 \\
7 9 \\
7 10 \\
7 11 \\
8 -8 \\
8 -7 \\
8 -6 \\
8 -5 \\
8 -4 \\
8 -3 \\
8 -2 \\
8 -1 \\
8 0 \\
8 1 \\
8 2 \\
8 3 \\
8 4 \\
8 5 \\
8 6 \\
8 7 \\
8 8 \\
8 9 \\
8 10 \\
8 11 \\
9 -8 \\
9 -7 \\
9 -6 \\
9 -5 \\
9 -4 \\
9 -3 \\
9 -2 \\
9 -1 \\
9 0 \\
9 1 \\
9 2 \\
9 3 \\
9 4 \\
9 5 \\
9 6 \\
9 7 \\
9 8 \\
9 9 \\
9 10 \\
9 11 \\
10 -8 \\
10 -7 \\
10 -6 \\
10 -5 \\
10 -4 \\
10 -3 \\
10 -2 \\
10 -1 \\
10 0 \\
10 1 \\
10 2 \\
10 3 \\
10 4 \\
10 5 \\
10 6 \\
10 7 \\
10 8 \\
10 9 \\
10 10 \\
10 11 \\
11 -8 \\
11 -7 \\
11 -6 \\
11 -5 \\
11 -4 \\
11 -3 \\
11 -2 \\
11 -1 \\
11 0 \\
11 1 \\
11 2 \\
11 3 \\
11 4 \\
11 5 \\
11 6 \\
11 7 \\
11 8 \\
11 9 \\
11 10 \\
11 11 \\
};

\addplot [only marks, mark=*, mark size=1.5pt, fill=blue!40, draw=black, line width=0.2pt]
table [row sep=\\] {
x y \\
-2 -2 \\
-2 -1 \\
-2 0 \\
-2 1 \\
-2 2 \\
-1 -2 \\
-1 -1 \\
-1 0 \\
-1 1 \\
-1 2 \\
0 -2 \\
0 -1 \\
0 0 \\
0 1 \\
0 2 \\
1 -2 \\
1 -1 \\
1 0 \\
1 1 \\
1 2 \\
2 -2 \\
2 -1 \\
2 0 \\
2 1 \\
2 2 \\
};
\end{axis}
\end{tikzpicture}

\begin{tikzpicture}
\begin{axis}[
    width=4cm,
    height=4cm,
    xmin=-11, xmax=11,
    ymin=-11, ymax=11,
    axis equal,
    title={$j = (-7,-5)$, $\gamma = 1$},
    title style={font=\small},
    xlabel={}, ylabel={},
    xtick=\empty, ytick=\empty,
]

\addplot [only marks, mark=*, mark size=1.5pt, fill=gray!50, draw=black, line width=0.2pt]
table [row sep=\\] {
x y \\
-11 -11 \\
-11 -10 \\
-11 -9 \\
-11 -8 \\
-11 -7 \\
-11 -6 \\
-11 -5 \\
-11 -4 \\
-11 -3 \\
-11 -2 \\
-11 -1 \\
-11 0 \\
-11 1 \\
-11 2 \\
-11 3 \\
-11 4 \\
-11 5 \\
-11 6 \\
-11 7 \\
-11 8 \\
-11 9 \\
-11 10 \\
-11 11 \\
-10 -11 \\
-10 -10 \\
-10 -9 \\
-10 -8 \\
-10 -7 \\
-10 -6 \\
-10 -5 \\
-10 -4 \\
-10 -3 \\
-10 -2 \\
-10 -1 \\
-10 0 \\
-10 1 \\
-10 2 \\
-10 3 \\
-10 4 \\
-10 5 \\
-10 6 \\
-10 7 \\
-10 8 \\
-10 9 \\
-10 10 \\
-10 11 \\
-9 -11 \\
-9 -10 \\
-9 -9 \\
-9 -8 \\
-9 -7 \\
-9 -6 \\
-9 -5 \\
-9 -4 \\
-9 -3 \\
-9 -2 \\
-9 -1 \\
-9 0 \\
-9 1 \\
-9 2 \\
-9 3 \\
-9 4 \\
-9 5 \\
-9 6 \\
-9 7 \\
-9 8 \\
-9 9 \\
-9 10 \\
-9 11 \\
-8 -11 \\
-8 -10 \\
-8 -9 \\
-8 -8 \\
-8 -7 \\
-8 -6 \\
-8 -5 \\
-8 -4 \\
-8 -3 \\
-8 -2 \\
-8 -1 \\
-8 0 \\
-8 1 \\
-8 2 \\
-8 3 \\
-8 4 \\
-8 5 \\
-8 6 \\
-8 7 \\
-8 8 \\
-8 9 \\
-8 10 \\
-8 11 \\
-7 -11 \\
-7 -10 \\
-7 -9 \\
-7 -8 \\
-7 -7 \\
-7 -6 \\
-7 -5 \\
-7 -4 \\
-7 -3 \\
-7 -2 \\
-7 -1 \\
-7 0 \\
-7 1 \\
-7 2 \\
-7 3 \\
-7 4 \\
-7 5 \\
-7 6 \\
-7 7 \\
-7 8 \\
-7 9 \\
-7 10 \\
-7 11 \\
-6 -11 \\
-6 -10 \\
-6 -9 \\
-6 -8 \\
-6 -7 \\
-6 -6 \\
-6 -5 \\
-6 -4 \\
-6 -3 \\
-6 -2 \\
-6 -1 \\
-6 0 \\
-6 1 \\
-6 2 \\
-6 3 \\
-6 4 \\
-6 5 \\
-6 6 \\
-6 7 \\
-6 8 \\
-6 9 \\
-6 10 \\
-6 11 \\
-5 -11 \\
-5 -10 \\
-5 -9 \\
-5 -8 \\
-5 -7 \\
-5 -6 \\
-5 -5 \\
-5 -4 \\
-5 -3 \\
-5 -2 \\
-5 -1 \\
-5 0 \\
-5 1 \\
-5 2 \\
-5 3 \\
-5 4 \\
-5 5 \\
-5 6 \\
-5 7 \\
-5 8 \\
-5 9 \\
-5 10 \\
-5 11 \\
-4 -11 \\
-4 -10 \\
-4 -9 \\
-4 -8 \\
-4 -7 \\
-4 -6 \\
-4 -5 \\
-4 -4 \\
-4 -3 \\
-4 -2 \\
-4 -1 \\
-4 0 \\
-4 1 \\
-4 2 \\
-4 3 \\
-4 4 \\
-4 5 \\
-4 6 \\
-4 7 \\
-4 8 \\
-4 9 \\
-4 10 \\
-4 11 \\
-3 -11 \\
-3 -10 \\
-3 -9 \\
-3 -8 \\
-3 -7 \\
-3 -6 \\
-3 -5 \\
-3 -4 \\
-3 -3 \\
-3 -2 \\
-3 -1 \\
-3 0 \\
-3 1 \\
-3 2 \\
-3 3 \\
-3 4 \\
-3 5 \\
-3 6 \\
-3 7 \\
-3 8 \\
-3 9 \\
-3 10 \\
-3 11 \\
-2 -11 \\
-2 -10 \\
-2 -9 \\
-2 -8 \\
-2 -7 \\
-2 -6 \\
-2 -5 \\
-2 -4 \\
-2 -3 \\
-2 -2 \\
-2 -1 \\
-2 0 \\
-2 1 \\
-2 2 \\
-2 3 \\
-2 4 \\
-2 5 \\
-2 6 \\
-2 7 \\
-2 8 \\
-2 9 \\
-2 10 \\
-2 11 \\
-1 -11 \\
-1 -10 \\
-1 -9 \\
-1 -8 \\
-1 -7 \\
-1 -6 \\
-1 -5 \\
-1 -4 \\
-1 -3 \\
-1 -2 \\
-1 -1 \\
-1 0 \\
-1 1 \\
-1 2 \\
-1 3 \\
-1 4 \\
-1 5 \\
-1 6 \\
-1 7 \\
-1 8 \\
-1 9 \\
-1 10 \\
-1 11 \\
0 -11 \\
0 -10 \\
0 -9 \\
0 -8 \\
0 -7 \\
0 -6 \\
0 -5 \\
0 -4 \\
0 -3 \\
0 -2 \\
0 -1 \\
0 0 \\
0 1 \\
0 2 \\
0 3 \\
0 4 \\
0 5 \\
0 6 \\
0 7 \\
0 8 \\
0 9 \\
0 10 \\
0 11 \\
1 -11 \\
1 -10 \\
1 -9 \\
1 -8 \\
1 -7 \\
1 -6 \\
1 -5 \\
1 -4 \\
1 -3 \\
1 -2 \\
1 -1 \\
1 0 \\
1 1 \\
1 2 \\
1 3 \\
1 4 \\
1 5 \\
1 6 \\
1 7 \\
1 8 \\
1 9 \\
1 10 \\
1 11 \\
2 -11 \\
2 -10 \\
2 -9 \\
2 -8 \\
2 -7 \\
2 -6 \\
2 -5 \\
2 -4 \\
2 -3 \\
2 -2 \\
2 -1 \\
2 0 \\
2 1 \\
2 2 \\
2 3 \\
2 4 \\
2 5 \\
2 6 \\
2 7 \\
2 8 \\
2 9 \\
2 10 \\
2 11 \\
3 -11 \\
3 -10 \\
3 -9 \\
3 -8 \\
3 -7 \\
3 -6 \\
3 -5 \\
3 -4 \\
3 -3 \\
3 -2 \\
3 -1 \\
3 0 \\
3 1 \\
3 2 \\
3 3 \\
3 4 \\
3 5 \\
3 6 \\
3 7 \\
3 8 \\
3 9 \\
3 10 \\
3 11 \\
4 -11 \\
4 -10 \\
4 -9 \\
4 -8 \\
4 -7 \\
4 -6 \\
4 -5 \\
4 -4 \\
4 -3 \\
4 -2 \\
4 -1 \\
4 0 \\
4 1 \\
4 2 \\
4 3 \\
4 4 \\
4 5 \\
4 6 \\
4 7 \\
4 8 \\
4 9 \\
4 10 \\
4 11 \\
5 -11 \\
5 -10 \\
5 -9 \\
5 -8 \\
5 -7 \\
5 -6 \\
5 -5 \\
5 -4 \\
5 -3 \\
5 -2 \\
5 -1 \\
5 0 \\
5 1 \\
5 2 \\
5 3 \\
5 4 \\
5 5 \\
5 6 \\
5 7 \\
5 8 \\
5 9 \\
5 10 \\
5 11 \\
6 -11 \\
6 -10 \\
6 -9 \\
6 -8 \\
6 -7 \\
6 -6 \\
6 -5 \\
6 -4 \\
6 -3 \\
6 -2 \\
6 -1 \\
6 0 \\
6 1 \\
6 2 \\
6 3 \\
6 4 \\
6 5 \\
6 6 \\
6 7 \\
6 8 \\
6 9 \\
6 10 \\
6 11 \\
7 -11 \\
7 -10 \\
7 -9 \\
7 -8 \\
7 -7 \\
7 -6 \\
7 -5 \\
7 -4 \\
7 -3 \\
7 -2 \\
7 -1 \\
7 0 \\
7 1 \\
7 2 \\
7 3 \\
7 4 \\
7 5 \\
7 6 \\
7 7 \\
7 8 \\
7 9 \\
7 10 \\
7 11 \\
8 -11 \\
8 -10 \\
8 -9 \\
8 -8 \\
8 -7 \\
8 -6 \\
8 -5 \\
8 -4 \\
8 -3 \\
8 -2 \\
8 -1 \\
8 0 \\
8 1 \\
8 2 \\
8 3 \\
8 4 \\
8 5 \\
8 6 \\
8 7 \\
8 8 \\
8 9 \\
8 10 \\
8 11 \\
9 -11 \\
9 -10 \\
9 -9 \\
9 -8 \\
9 -7 \\
9 -6 \\
9 -5 \\
9 -4 \\
9 -3 \\
9 -2 \\
9 -1 \\
9 0 \\
9 1 \\
9 2 \\
9 3 \\
9 4 \\
9 5 \\
9 6 \\
9 7 \\
9 8 \\
9 9 \\
9 10 \\
9 11 \\
10 -11 \\
10 -10 \\
10 -9 \\
10 -8 \\
10 -7 \\
10 -6 \\
10 -5 \\
10 -4 \\
10 -3 \\
10 -2 \\
10 -1 \\
10 0 \\
10 1 \\
10 2 \\
10 3 \\
10 4 \\
10 5 \\
10 6 \\
10 7 \\
10 8 \\
10 9 \\
10 10 \\
10 11 \\
11 -11 \\
11 -10 \\
11 -9 \\
11 -8 \\
11 -7 \\
11 -6 \\
11 -5 \\
11 -4 \\
11 -3 \\
11 -2 \\
11 -1 \\
11 0 \\
11 1 \\
11 2 \\
11 3 \\
11 4 \\
11 5 \\
11 6 \\
11 7 \\
11 8 \\
11 9 \\
11 10 \\
11 11 \\
};

\addplot [only marks, mark=*, mark size=1.5pt, fill=white, draw=black, line width=0.2pt]
table [row sep=\\] {
x y \\
-2 -4 \\
-2 -3 \\
-2 -2 \\
-2 -1 \\
-2 0 \\
-2 1 \\
-2 2 \\
-2 3 \\
-2 4 \\
-2 5 \\
-2 6 \\
-2 7 \\
-2 8 \\
-2 9 \\
-2 10 \\
-2 11 \\
-2 12 \\
-2 13 \\
-2 14 \\
-2 15 \\
-1 -4 \\
-1 -3 \\
-1 -2 \\
-1 -1 \\
-1 0 \\
-1 1 \\
-1 2 \\
-1 3 \\
-1 4 \\
-1 5 \\
-1 6 \\
-1 7 \\
-1 8 \\
-1 9 \\
-1 10 \\
-1 11 \\
-1 12 \\
-1 13 \\
-1 14 \\
-1 15 \\
0 -4 \\
0 -3 \\
0 -2 \\
0 -1 \\
0 0 \\
0 1 \\
0 2 \\
0 3 \\
0 4 \\
0 5 \\
0 6 \\
0 7 \\
0 8 \\
0 9 \\
0 10 \\
0 11 \\
0 12 \\
0 13 \\
0 14 \\
0 15 \\
1 -4 \\
1 -3 \\
1 -2 \\
1 -1 \\
1 0 \\
1 1 \\
1 2 \\
1 3 \\
1 4 \\
1 5 \\
1 6 \\
1 7 \\
1 8 \\
1 9 \\
1 10 \\
1 11 \\
1 12 \\
1 13 \\
1 14 \\
1 15 \\
2 -4 \\
2 -3 \\
2 -2 \\
2 -1 \\
2 0 \\
2 1 \\
2 2 \\
2 3 \\
2 4 \\
2 5 \\
2 6 \\
2 7 \\
2 8 \\
2 9 \\
2 10 \\
2 11 \\
2 12 \\
2 13 \\
2 14 \\
2 15 \\
3 -4 \\
3 -3 \\
3 -2 \\
3 -1 \\
3 0 \\
3 1 \\
3 2 \\
3 3 \\
3 4 \\
3 5 \\
3 6 \\
3 7 \\
3 8 \\
3 9 \\
3 10 \\
3 11 \\
3 12 \\
3 13 \\
3 14 \\
3 15 \\
4 -4 \\
4 -3 \\
4 -2 \\
4 -1 \\
4 0 \\
4 1 \\
4 2 \\
4 3 \\
4 4 \\
4 5 \\
4 6 \\
4 7 \\
4 8 \\
4 9 \\
4 10 \\
4 11 \\
4 12 \\
4 13 \\
4 14 \\
4 15 \\
5 -4 \\
5 -3 \\
5 -2 \\
5 -1 \\
5 0 \\
5 1 \\
5 2 \\
5 3 \\
5 4 \\
5 5 \\
5 6 \\
5 7 \\
5 8 \\
5 9 \\
5 10 \\
5 11 \\
5 12 \\
5 13 \\
5 14 \\
5 15 \\
6 -4 \\
6 -3 \\
6 -2 \\
6 -1 \\
6 0 \\
6 1 \\
6 2 \\
6 3 \\
6 4 \\
6 5 \\
6 6 \\
6 7 \\
6 8 \\
6 9 \\
6 10 \\
6 11 \\
6 12 \\
6 13 \\
6 14 \\
6 15 \\
7 -4 \\
7 -3 \\
7 -2 \\
7 -1 \\
7 0 \\
7 1 \\
7 2 \\
7 3 \\
7 4 \\
7 5 \\
7 6 \\
7 7 \\
7 8 \\
7 9 \\
7 10 \\
7 11 \\
7 12 \\
7 13 \\
7 14 \\
7 15 \\
8 -4 \\
8 -3 \\
8 -2 \\
8 -1 \\
8 0 \\
8 1 \\
8 2 \\
8 3 \\
8 4 \\
8 5 \\
8 6 \\
8 7 \\
8 8 \\
8 9 \\
8 10 \\
8 11 \\
8 12 \\
8 13 \\
8 14 \\
8 15 \\
9 -4 \\
9 -3 \\
9 -2 \\
9 -1 \\
9 0 \\
9 1 \\
9 2 \\
9 3 \\
9 4 \\
9 5 \\
9 6 \\
9 7 \\
9 8 \\
9 9 \\
9 10 \\
9 11 \\
9 12 \\
9 13 \\
9 14 \\
9 15 \\
10 -4 \\
10 -3 \\
10 -2 \\
10 -1 \\
10 0 \\
10 1 \\
10 2 \\
10 3 \\
10 4 \\
10 5 \\
10 6 \\
10 7 \\
10 8 \\
10 9 \\
10 10 \\
10 11 \\
10 12 \\
10 13 \\
10 14 \\
10 15 \\
11 -4 \\
11 -3 \\
11 -2 \\
11 -1 \\
11 0 \\
11 1 \\
11 2 \\
11 3 \\
11 4 \\
11 5 \\
11 6 \\
11 7 \\
11 8 \\
11 9 \\
11 10 \\
11 11 \\
11 12 \\
11 13 \\
11 14 \\
11 15 \\
12 -4 \\
12 -3 \\
12 -2 \\
12 -1 \\
12 0 \\
12 1 \\
12 2 \\
12 3 \\
12 4 \\
12 5 \\
12 6 \\
12 7 \\
12 8 \\
12 9 \\
12 10 \\
12 11 \\
12 12 \\
12 13 \\
12 14 \\
12 15 \\
13 -4 \\
13 -3 \\
13 -2 \\
13 -1 \\
13 0 \\
13 1 \\
13 2 \\
13 3 \\
13 4 \\
13 5 \\
13 6 \\
13 7 \\
13 8 \\
13 9 \\
13 10 \\
13 11 \\
13 12 \\
13 13 \\
13 14 \\
13 15 \\
14 -4 \\
14 -3 \\
14 -2 \\
14 -1 \\
14 0 \\
14 1 \\
14 2 \\
14 3 \\
14 4 \\
14 5 \\
14 6 \\
14 7 \\
14 8 \\
14 9 \\
14 10 \\
14 11 \\
14 12 \\
14 13 \\
14 14 \\
14 15 \\
15 -4 \\
15 -3 \\
15 -2 \\
15 -1 \\
15 0 \\
15 1 \\
15 2 \\
15 3 \\
15 4 \\
15 5 \\
15 6 \\
15 7 \\
15 8 \\
15 9 \\
15 10 \\
15 11 \\
15 12 \\
15 13 \\
15 14 \\
15 15 \\
16 -4 \\
16 -3 \\
16 -2 \\
16 -1 \\
16 0 \\
16 1 \\
16 2 \\
16 3 \\
16 4 \\
16 5 \\
16 6 \\
16 7 \\
16 8 \\
16 9 \\
16 10 \\
16 11 \\
16 12 \\
16 13 \\
16 14 \\
16 15 \\
17 -4 \\
17 -3 \\
17 -2 \\
17 -1 \\
17 0 \\
17 1 \\
17 2 \\
17 3 \\
17 4 \\
17 5 \\
17 6 \\
17 7 \\
17 8 \\
17 9 \\
17 10 \\
17 11 \\
17 12 \\
17 13 \\
17 14 \\
17 15 \\
};

\addplot [only marks, mark=*, mark size=1.5pt, fill=blue!40, draw=black, line width=0.2pt]
table [row sep=\\] {
x y \\
-2 -2 \\
-2 -1 \\
-2 0 \\
-2 1 \\
-2 2 \\
-1 -2 \\
-1 -1 \\
-1 0 \\
-1 1 \\
-1 2 \\
0 -2 \\
0 -1 \\
0 0 \\
0 1 \\
0 2 \\
1 -2 \\
1 -1 \\
1 0 \\
1 1 \\
1 2 \\
2 -2 \\
2 -1 \\
2 0 \\
2 1 \\
2 2 \\
};
\end{axis}
\end{tikzpicture}
\begin{tikzpicture}
\begin{axis}[
    width=4cm,
    height=4cm,
    xmin=-11, xmax=11,
    ymin=-11, ymax=11,
    axis equal,
    title={$j = (9,-8)$, $\gamma = 0$},
    title style={font=\small},
    xlabel={}, ylabel={},
    xtick=\empty, ytick=\empty,
]

\addplot [only marks, mark=*, mark size=1.5pt, fill=gray!50, draw=black, line width=0.2pt]
table [row sep=\\] {
x y \\
-11 -11 \\
-11 -10 \\
-11 -9 \\
-11 -8 \\
-11 -7 \\
-11 -6 \\
-11 -5 \\
-11 -4 \\
-11 -3 \\
-11 -2 \\
-11 -1 \\
-11 0 \\
-11 1 \\
-11 2 \\
-11 3 \\
-11 4 \\
-11 5 \\
-11 6 \\
-11 7 \\
-11 8 \\
-11 9 \\
-11 10 \\
-11 11 \\
-10 -11 \\
-10 -10 \\
-10 -9 \\
-10 -8 \\
-10 -7 \\
-10 -6 \\
-10 -5 \\
-10 -4 \\
-10 -3 \\
-10 -2 \\
-10 -1 \\
-10 0 \\
-10 1 \\
-10 2 \\
-10 3 \\
-10 4 \\
-10 5 \\
-10 6 \\
-10 7 \\
-10 8 \\
-10 9 \\
-10 10 \\
-10 11 \\
-9 -11 \\
-9 -10 \\
-9 -9 \\
-9 -8 \\
-9 -7 \\
-9 -6 \\
-9 -5 \\
-9 -4 \\
-9 -3 \\
-9 -2 \\
-9 -1 \\
-9 0 \\
-9 1 \\
-9 2 \\
-9 3 \\
-9 4 \\
-9 5 \\
-9 6 \\
-9 7 \\
-9 8 \\
-9 9 \\
-9 10 \\
-9 11 \\
-8 -11 \\
-8 -10 \\
-8 -9 \\
-8 -8 \\
-8 -7 \\
-8 -6 \\
-8 -5 \\
-8 -4 \\
-8 -3 \\
-8 -2 \\
-8 -1 \\
-8 0 \\
-8 1 \\
-8 2 \\
-8 3 \\
-8 4 \\
-8 5 \\
-8 6 \\
-8 7 \\
-8 8 \\
-8 9 \\
-8 10 \\
-8 11 \\
-7 -11 \\
-7 -10 \\
-7 -9 \\
-7 -8 \\
-7 -7 \\
-7 -6 \\
-7 -5 \\
-7 -4 \\
-7 -3 \\
-7 -2 \\
-7 -1 \\
-7 0 \\
-7 1 \\
-7 2 \\
-7 3 \\
-7 4 \\
-7 5 \\
-7 6 \\
-7 7 \\
-7 8 \\
-7 9 \\
-7 10 \\
-7 11 \\
-6 -11 \\
-6 -10 \\
-6 -9 \\
-6 -8 \\
-6 -7 \\
-6 -6 \\
-6 -5 \\
-6 -4 \\
-6 -3 \\
-6 -2 \\
-6 -1 \\
-6 0 \\
-6 1 \\
-6 2 \\
-6 3 \\
-6 4 \\
-6 5 \\
-6 6 \\
-6 7 \\
-6 8 \\
-6 9 \\
-6 10 \\
-6 11 \\
-5 -11 \\
-5 -10 \\
-5 -9 \\
-5 -8 \\
-5 -7 \\
-5 -6 \\
-5 -5 \\
-5 -4 \\
-5 -3 \\
-5 -2 \\
-5 -1 \\
-5 0 \\
-5 1 \\
-5 2 \\
-5 3 \\
-5 4 \\
-5 5 \\
-5 6 \\
-5 7 \\
-5 8 \\
-5 9 \\
-5 10 \\
-5 11 \\
-4 -11 \\
-4 -10 \\
-4 -9 \\
-4 -8 \\
-4 -7 \\
-4 -6 \\
-4 -5 \\
-4 -4 \\
-4 -3 \\
-4 -2 \\
-4 -1 \\
-4 0 \\
-4 1 \\
-4 2 \\
-4 3 \\
-4 4 \\
-4 5 \\
-4 6 \\
-4 7 \\
-4 8 \\
-4 9 \\
-4 10 \\
-4 11 \\
-3 -11 \\
-3 -10 \\
-3 -9 \\
-3 -8 \\
-3 -7 \\
-3 -6 \\
-3 -5 \\
-3 -4 \\
-3 -3 \\
-3 -2 \\
-3 -1 \\
-3 0 \\
-3 1 \\
-3 2 \\
-3 3 \\
-3 4 \\
-3 5 \\
-3 6 \\
-3 7 \\
-3 8 \\
-3 9 \\
-3 10 \\
-3 11 \\
-2 -11 \\
-2 -10 \\
-2 -9 \\
-2 -8 \\
-2 -7 \\
-2 -6 \\
-2 -5 \\
-2 -4 \\
-2 -3 \\
-2 -2 \\
-2 -1 \\
-2 0 \\
-2 1 \\
-2 2 \\
-2 3 \\
-2 4 \\
-2 5 \\
-2 6 \\
-2 7 \\
-2 8 \\
-2 9 \\
-2 10 \\
-2 11 \\
-1 -11 \\
-1 -10 \\
-1 -9 \\
-1 -8 \\
-1 -7 \\
-1 -6 \\
-1 -5 \\
-1 -4 \\
-1 -3 \\
-1 -2 \\
-1 -1 \\
-1 0 \\
-1 1 \\
-1 2 \\
-1 3 \\
-1 4 \\
-1 5 \\
-1 6 \\
-1 7 \\
-1 8 \\
-1 9 \\
-1 10 \\
-1 11 \\
0 -11 \\
0 -10 \\
0 -9 \\
0 -8 \\
0 -7 \\
0 -6 \\
0 -5 \\
0 -4 \\
0 -3 \\
0 -2 \\
0 -1 \\
0 0 \\
0 1 \\
0 2 \\
0 3 \\
0 4 \\
0 5 \\
0 6 \\
0 7 \\
0 8 \\
0 9 \\
0 10 \\
0 11 \\
1 -11 \\
1 -10 \\
1 -9 \\
1 -8 \\
1 -7 \\
1 -6 \\
1 -5 \\
1 -4 \\
1 -3 \\
1 -2 \\
1 -1 \\
1 0 \\
1 1 \\
1 2 \\
1 3 \\
1 4 \\
1 5 \\
1 6 \\
1 7 \\
1 8 \\
1 9 \\
1 10 \\
1 11 \\
2 -11 \\
2 -10 \\
2 -9 \\
2 -8 \\
2 -7 \\
2 -6 \\
2 -5 \\
2 -4 \\
2 -3 \\
2 -2 \\
2 -1 \\
2 0 \\
2 1 \\
2 2 \\
2 3 \\
2 4 \\
2 5 \\
2 6 \\
2 7 \\
2 8 \\
2 9 \\
2 10 \\
2 11 \\
3 -11 \\
3 -10 \\
3 -9 \\
3 -8 \\
3 -7 \\
3 -6 \\
3 -5 \\
3 -4 \\
3 -3 \\
3 -2 \\
3 -1 \\
3 0 \\
3 1 \\
3 2 \\
3 3 \\
3 4 \\
3 5 \\
3 6 \\
3 7 \\
3 8 \\
3 9 \\
3 10 \\
3 11 \\
4 -11 \\
4 -10 \\
4 -9 \\
4 -8 \\
4 -7 \\
4 -6 \\
4 -5 \\
4 -4 \\
4 -3 \\
4 -2 \\
4 -1 \\
4 0 \\
4 1 \\
4 2 \\
4 3 \\
4 4 \\
4 5 \\
4 6 \\
4 7 \\
4 8 \\
4 9 \\
4 10 \\
4 11 \\
5 -11 \\
5 -10 \\
5 -9 \\
5 -8 \\
5 -7 \\
5 -6 \\
5 -5 \\
5 -4 \\
5 -3 \\
5 -2 \\
5 -1 \\
5 0 \\
5 1 \\
5 2 \\
5 3 \\
5 4 \\
5 5 \\
5 6 \\
5 7 \\
5 8 \\
5 9 \\
5 10 \\
5 11 \\
6 -11 \\
6 -10 \\
6 -9 \\
6 -8 \\
6 -7 \\
6 -6 \\
6 -5 \\
6 -4 \\
6 -3 \\
6 -2 \\
6 -1 \\
6 0 \\
6 1 \\
6 2 \\
6 3 \\
6 4 \\
6 5 \\
6 6 \\
6 7 \\
6 8 \\
6 9 \\
6 10 \\
6 11 \\
7 -11 \\
7 -10 \\
7 -9 \\
7 -8 \\
7 -7 \\
7 -6 \\
7 -5 \\
7 -4 \\
7 -3 \\
7 -2 \\
7 -1 \\
7 0 \\
7 1 \\
7 2 \\
7 3 \\
7 4 \\
7 5 \\
7 6 \\
7 7 \\
7 8 \\
7 9 \\
7 10 \\
7 11 \\
8 -11 \\
8 -10 \\
8 -9 \\
8 -8 \\
8 -7 \\
8 -6 \\
8 -5 \\
8 -4 \\
8 -3 \\
8 -2 \\
8 -1 \\
8 0 \\
8 1 \\
8 2 \\
8 3 \\
8 4 \\
8 5 \\
8 6 \\
8 7 \\
8 8 \\
8 9 \\
8 10 \\
8 11 \\
9 -11 \\
9 -10 \\
9 -9 \\
9 -8 \\
9 -7 \\
9 -6 \\
9 -5 \\
9 -4 \\
9 -3 \\
9 -2 \\
9 -1 \\
9 0 \\
9 1 \\
9 2 \\
9 3 \\
9 4 \\
9 5 \\
9 6 \\
9 7 \\
9 8 \\
9 9 \\
9 10 \\
9 11 \\
10 -11 \\
10 -10 \\
10 -9 \\
10 -8 \\
10 -7 \\
10 -6 \\
10 -5 \\
10 -4 \\
10 -3 \\
10 -2 \\
10 -1 \\
10 0 \\
10 1 \\
10 2 \\
10 3 \\
10 4 \\
10 5 \\
10 6 \\
10 7 \\
10 8 \\
10 9 \\
10 10 \\
10 11 \\
11 -11 \\
11 -10 \\
11 -9 \\
11 -8 \\
11 -7 \\
11 -6 \\
11 -5 \\
11 -4 \\
11 -3 \\
11 -2 \\
11 -1 \\
11 0 \\
11 1 \\
11 2 \\
11 3 \\
11 4 \\
11 5 \\
11 6 \\
11 7 \\
11 8 \\
11 9 \\
11 10 \\
11 11 \\
};

\addplot [only marks, mark=*, mark size=1.5pt, fill=white, draw=black, line width=0.2pt]
table [row sep=\\] {
x y \\
-18 -1 \\
-18 0 \\
-18 1 \\
-18 2 \\
-18 3 \\
-18 4 \\
-18 5 \\
-18 6 \\
-18 7 \\
-18 8 \\
-18 9 \\
-18 10 \\
-18 11 \\
-18 12 \\
-18 13 \\
-18 14 \\
-18 15 \\
-18 16 \\
-18 17 \\
-18 18 \\
-17 -1 \\
-17 0 \\
-17 1 \\
-17 2 \\
-17 3 \\
-17 4 \\
-17 5 \\
-17 6 \\
-17 7 \\
-17 8 \\
-17 9 \\
-17 10 \\
-17 11 \\
-17 12 \\
-17 13 \\
-17 14 \\
-17 15 \\
-17 16 \\
-17 17 \\
-17 18 \\
-16 -1 \\
-16 0 \\
-16 1 \\
-16 2 \\
-16 3 \\
-16 4 \\
-16 5 \\
-16 6 \\
-16 7 \\
-16 8 \\
-16 9 \\
-16 10 \\
-16 11 \\
-16 12 \\
-16 13 \\
-16 14 \\
-16 15 \\
-16 16 \\
-16 17 \\
-16 18 \\
-15 -1 \\
-15 0 \\
-15 1 \\
-15 2 \\
-15 3 \\
-15 4 \\
-15 5 \\
-15 6 \\
-15 7 \\
-15 8 \\
-15 9 \\
-15 10 \\
-15 11 \\
-15 12 \\
-15 13 \\
-15 14 \\
-15 15 \\
-15 16 \\
-15 17 \\
-15 18 \\
-14 -1 \\
-14 0 \\
-14 1 \\
-14 2 \\
-14 3 \\
-14 4 \\
-14 5 \\
-14 6 \\
-14 7 \\
-14 8 \\
-14 9 \\
-14 10 \\
-14 11 \\
-14 12 \\
-14 13 \\
-14 14 \\
-14 15 \\
-14 16 \\
-14 17 \\
-14 18 \\
-13 -1 \\
-13 0 \\
-13 1 \\
-13 2 \\
-13 3 \\
-13 4 \\
-13 5 \\
-13 6 \\
-13 7 \\
-13 8 \\
-13 9 \\
-13 10 \\
-13 11 \\
-13 12 \\
-13 13 \\
-13 14 \\
-13 15 \\
-13 16 \\
-13 17 \\
-13 18 \\
-12 -1 \\
-12 0 \\
-12 1 \\
-12 2 \\
-12 3 \\
-12 4 \\
-12 5 \\
-12 6 \\
-12 7 \\
-12 8 \\
-12 9 \\
-12 10 \\
-12 11 \\
-12 12 \\
-12 13 \\
-12 14 \\
-12 15 \\
-12 16 \\
-12 17 \\
-12 18 \\
-11 -1 \\
-11 0 \\
-11 1 \\
-11 2 \\
-11 3 \\
-11 4 \\
-11 5 \\
-11 6 \\
-11 7 \\
-11 8 \\
-11 9 \\
-11 10 \\
-11 11 \\
-11 12 \\
-11 13 \\
-11 14 \\
-11 15 \\
-11 16 \\
-11 17 \\
-11 18 \\
-10 -1 \\
-10 0 \\
-10 1 \\
-10 2 \\
-10 3 \\
-10 4 \\
-10 5 \\
-10 6 \\
-10 7 \\
-10 8 \\
-10 9 \\
-10 10 \\
-10 11 \\
-10 12 \\
-10 13 \\
-10 14 \\
-10 15 \\
-10 16 \\
-10 17 \\
-10 18 \\
-9 -1 \\
-9 0 \\
-9 1 \\
-9 2 \\
-9 3 \\
-9 4 \\
-9 5 \\
-9 6 \\
-9 7 \\
-9 8 \\
-9 9 \\
-9 10 \\
-9 11 \\
-9 12 \\
-9 13 \\
-9 14 \\
-9 15 \\
-9 16 \\
-9 17 \\
-9 18 \\
-8 -1 \\
-8 0 \\
-8 1 \\
-8 2 \\
-8 3 \\
-8 4 \\
-8 5 \\
-8 6 \\
-8 7 \\
-8 8 \\
-8 9 \\
-8 10 \\
-8 11 \\
-8 12 \\
-8 13 \\
-8 14 \\
-8 15 \\
-8 16 \\
-8 17 \\
-8 18 \\
-7 -1 \\
-7 0 \\
-7 1 \\
-7 2 \\
-7 3 \\
-7 4 \\
-7 5 \\
-7 6 \\
-7 7 \\
-7 8 \\
-7 9 \\
-7 10 \\
-7 11 \\
-7 12 \\
-7 13 \\
-7 14 \\
-7 15 \\
-7 16 \\
-7 17 \\
-7 18 \\
-6 -1 \\
-6 0 \\
-6 1 \\
-6 2 \\
-6 3 \\
-6 4 \\
-6 5 \\
-6 6 \\
-6 7 \\
-6 8 \\
-6 9 \\
-6 10 \\
-6 11 \\
-6 12 \\
-6 13 \\
-6 14 \\
-6 15 \\
-6 16 \\
-6 17 \\
-6 18 \\
-5 -1 \\
-5 0 \\
-5 1 \\
-5 2 \\
-5 3 \\
-5 4 \\
-5 5 \\
-5 6 \\
-5 7 \\
-5 8 \\
-5 9 \\
-5 10 \\
-5 11 \\
-5 12 \\
-5 13 \\
-5 14 \\
-5 15 \\
-5 16 \\
-5 17 \\
-5 18 \\
-4 -1 \\
-4 0 \\
-4 1 \\
-4 2 \\
-4 3 \\
-4 4 \\
-4 5 \\
-4 6 \\
-4 7 \\
-4 8 \\
-4 9 \\
-4 10 \\
-4 11 \\
-4 12 \\
-4 13 \\
-4 14 \\
-4 15 \\
-4 16 \\
-4 17 \\
-4 18 \\
-3 -1 \\
-3 0 \\
-3 1 \\
-3 2 \\
-3 3 \\
-3 4 \\
-3 5 \\
-3 6 \\
-3 7 \\
-3 8 \\
-3 9 \\
-3 10 \\
-3 11 \\
-3 12 \\
-3 13 \\
-3 14 \\
-3 15 \\
-3 16 \\
-3 17 \\
-3 18 \\
-2 -1 \\
-2 0 \\
-2 1 \\
-2 2 \\
-2 3 \\
-2 4 \\
-2 5 \\
-2 6 \\
-2 7 \\
-2 8 \\
-2 9 \\
-2 10 \\
-2 11 \\
-2 12 \\
-2 13 \\
-2 14 \\
-2 15 \\
-2 16 \\
-2 17 \\
-2 18 \\
-1 -1 \\
-1 0 \\
-1 1 \\
-1 2 \\
-1 3 \\
-1 4 \\
-1 5 \\
-1 6 \\
-1 7 \\
-1 8 \\
-1 9 \\
-1 10 \\
-1 11 \\
-1 12 \\
-1 13 \\
-1 14 \\
-1 15 \\
-1 16 \\
-1 17 \\
-1 18 \\
0 -1 \\
0 0 \\
0 1 \\
0 2 \\
0 3 \\
0 4 \\
0 5 \\
0 6 \\
0 7 \\
0 8 \\
0 9 \\
0 10 \\
0 11 \\
0 12 \\
0 13 \\
0 14 \\
0 15 \\
0 16 \\
0 17 \\
0 18 \\
1 -1 \\
1 0 \\
1 1 \\
1 2 \\
1 3 \\
1 4 \\
1 5 \\
1 6 \\
1 7 \\
1 8 \\
1 9 \\
1 10 \\
1 11 \\
1 12 \\
1 13 \\
1 14 \\
1 15 \\
1 16 \\
1 17 \\
1 18 \\
};

\addplot [only marks, mark=*, mark size=1.5pt, fill=blue!40, draw=black, line width=0.2pt]
table [row sep=\\] {
x y \\
-2 -2 \\
-2 -1 \\
-2 0 \\
-2 1 \\
-2 2 \\
-1 -2 \\
-1 -1 \\
-1 0 \\
-1 1 \\
-1 2 \\
0 -2 \\
0 -1 \\
0 0 \\
0 1 \\
0 2 \\
1 -2 \\
1 -1 \\
1 0 \\
1 1 \\
1 2 \\
2 -2 \\
2 -1 \\
2 0 \\
2 1 \\
2 2 \\
};
\end{axis}
\end{tikzpicture}